\documentclass[11pt]{article}

\usepackage[utf8]{inputenc}
\usepackage[T1]{fontenc}
\usepackage{microtype}

\usepackage{amsmath,amssymb,amsfonts,amsthm}
\newtheorem{theorem}{Theorem}
\newtheorem{proposition}{Proposition}

\newtheorem{assumption}{Assumption}

\usepackage[margin=1in]{geometry}
\usepackage[section]{placeins}
\let\origsubsection\subsection
\renewcommand{\subsection}{\FloatBarrier\origsubsection}
\let\origsubsubsection\subsubsection
\renewcommand{\subsubsection}{\FloatBarrier\origsubsubsection}

\usepackage{graphicx}
\usepackage{float}
\usepackage{booktabs}
\usepackage{multirow}
\usepackage{caption}
\usepackage{subcaption}

\usepackage{algorithm}
\usepackage{algorithmic}
\usepackage{enumitem}
\usepackage{xcolor}

\makeatletter
\newenvironment{breakablealgorithm}
  {\begin{center}
    \refstepcounter{algorithm}%
    \hrule height.8pt depth0pt \kern2pt%
    \renewcommand{\caption}[2][\relax]{%
      {\raggedright\textbf{\ALG@name~\thealgorithm} ##2\par}%
      \ifx\relax##1\relax%
        \addcontentsline{loa}{algorithm}{\protect\numberline{\thealgorithm}##2}%
      \else%
        \addcontentsline{loa}{algorithm}{\protect\numberline{\thealgorithm}##1}%
      \fi%
      \kern2pt\hrule\kern2pt%
    }%
  }{\kern2pt\hrule\relax%
   \end{center}%
  }
\makeatother

\usepackage{tikz}
\usetikzlibrary{shapes.geometric,arrows.meta,positioning,fit,calc,decorations.pathreplacing}

\usepackage[numbers,sort&compress]{natbib}

\usepackage{authblk}

\usepackage{hyperref}
\hypersetup{
    colorlinks=true,
    linkcolor=blue,
    citecolor=blue,
    urlcolor=blue,
    pdftitle={Continuous Hidden Markov Models for Equity Returns: Heavy-Tail Emission Families and Regime-Conditional Value-at-Risk},
    pdfauthor={Abdulrahman Alswaidan, Cade Jin, Jeffrey D. Varner},
    pdfsubject={CHMM synthetic-data generator for daily US-equity returns},
    pdfkeywords={CHMM, Baum-Welch, Student-t emissions, semi-Markov, stylized facts, copula, Value-at-Risk, Kupiec, synthetic data}
}
\makeatletter
\g@addto@macro\UrlBreaks{\do\-\do\_}
\makeatother

\newcommand{\R}{\mathbb{R}}
\newcommand{\E}{\mathbb{E}}
\newcommand{\Prob}{\mathbb{P}}

\newcommand{\NPaths}[1]{\ensuremath{N_{\text{paths}} = #1}}

\title{Continuous Hidden Markov Models for Equity Returns:\\ Heavy-Tail Emission Families\\ and Regime-Conditional Value-at-Risk}

\author[1]{Abdulrahman Alswaidan}
\author[2]{Cade Jin}
\author[1]{Jeffrey D.~Varner\thanks{\raggedright Corresponding author: \texttt{jdv27@cornell.edu}. Coauthor contacts: \texttt{aa2725@cornell.edu} (A.A.), \texttt{cj383@cornell.edu} (C.J.).\par}}
\affil[1]{Robert Frederick Smith School of Chemical and Biomolecular Engineering, Cornell University, Ithaca, NY 14850, USA}
\affil[2]{Cornell Ann S.~Bowers College of Computing and Information Science, Cornell University, Ithaca, NY 14850, USA}

\date{}

\begin{document}

\maketitle

% ========================================================================================= %
\begin{abstract}
Synthetic generators of daily US-equity returns support stress testing, backtesting, and scenario design, none of which a single realized market history can supply on its own. Low-state-count hidden Markov models with Gaussian emissions, however, were long held to fail at reproducing the slow autocorrelation decay of absolute returns, a defining stylized fact of daily returns. We revisited that result with a continuous hidden Markov model whose latent regime chain governs the autocorrelation while per-regime densities govern the marginal, which separates the temporal and distributional sides of the original failure. We developed a unified expectation-maximisation framework placing Gaussian, Student-$t$, Laplace, and generalised-error emissions under shared forward-backward recursions and quantile-based initialisation, differing only in each family's update. A spectral identity expressed that slow decay as a sum of finitely many geometric modes, their number bounded by the rank of the centred transition matrix, which diagnosed when adding states could help. Across SPY, a sector-balanced 30-ticker panel, a CRSP cross-decade transfer, and a six-asset copula extension, the rank bound was not active at the typical ticker once a few states were used, so marginal flexibility explained more of the remaining fit gap than additional decay modes. With a heavy-tailed emission the model reproduced the three symmetric Cont stylized facts and narrowed the kurtosis gap left by the Gaussian baseline without a tuning hyperparameter. A regime-conditional Value-at-Risk passed a joint conditional-coverage test on held-out data, and a Student-$t$ copula reproduced cross-asset correlations more faithfully than a single-index alternative. An i.i.d.\ bootstrap and a maximum-likelihood hidden semi-Markov benchmark matched or exceeded the model on raw single-window fit but supported neither the regime-conditional risk forecast nor the multi-asset composition. The scope is daily US equities in stable out-of-sample periods, with periodic refitting recommended when regimes drift.
\end{abstract}

\noindent\textbf{Keywords:} Continuous Hidden Markov Model, Baum-Welch, Student-$t$ Emissions, Stylized Facts, Volatility Clustering, Value-at-Risk, Synthetic-Data Generator.

% ========================================================================================= %
\section{Introduction}
\label{sec:introduction}
Synthetic generators of equity-return time series produce plausible sample paths that match the statistical structure of real markets without replaying the single realized history. This supports stress testing, regulatory backtesting, and scenario design against conditions absent from the observed record~\cite{jordon2022synthetic, assefa2020generating, alswaidan2026hybrid}. A useful generator should reproduce all of the standard stylized facts of daily returns at once: heavy-tailed marginals, negligible linear autocorrelation, and slow decay of the autocorrelation function (ACF) of absolute returns~\cite{mandelbrot1963variation, cont2001empirical}. Generalised autoregressive conditional heteroskedasticity (GARCH) models~\cite{engle1982autoregressive, bollerslev1986generalized} capture volatility clustering but not the multi-regime structure of the marginal; i.i.d.\ generators~\cite{efron1979bootstrap} match the marginal but lose temporal persistence; deep generators~\cite{yoon2019timegan, wiese2020quantgan, rasul2021autoregressive} reproduce some stylized facts at the cost of interpretability and reproducibility. One negative result has shaped two decades of work on regime-switching models. \citet{ryden1998stylized} fitted Gaussian-emission hidden Markov models (HMMs) at low state counts and concluded that the geometric sojourn-time distribution of a standard Markov chain makes the ACF decay too fast to match the observed slow decay; \citet{bulla2006stylized} recovered the ACF fit by replacing the Markov chain with a hidden semi-Markov chain, at the cost of much greater complexity. The Ryd\'{e}n failure combines two constraints: a temporal one (the absolute growth-rate ACF must allow slow-decay modes) and a distributional one (the marginal mixture must carry enough components to match observed kurtosis). A standard closed-form identity for the absolute growth-rate ACF, written as a sum over the non-unit eigenvalues of the transition matrix~\cite{hamilton1994time, krolzig1997markov, timmermann2000moments}, bounds the number of decay modes by the rank of the centred transition matrix. We ask whether that bound actually matters on equity-return data once enough latent states are allowed.

We answer this with an established model class, the continuous-emission hidden Markov model (CHMM): a hidden chain of market regimes in which each regime emits the day's growth rate from its own probability distribution. The regime chain controls the geometric modes available to the absolute growth-rate ACF, while the per-regime distributions control the heavy-tailed marginal, allowing the temporal and distributional constraints of the low-state Gaussian fit to be examined separately. Our contribution is not the CHMM itself. It is a unified comparison of Gaussian, Student-$t$, Laplace, and generalized-error emissions under the same forward-backward framework; quantile-based initialization; an empirical spectral diagnosis of when the finite-mode bound is active; and extensions to regime-conditional Value-at-Risk (VaR) and multi-asset copula composition. We start each fit from quantile splits of the observed growth rates, seeding each regime with its own slice of the distribution, which keeps the starting regimes from overlapping and helps the fit converge with every regime still in use. On the fitted model we then build a regime-conditional VaR that forecasts tail risk from the model's running estimate of which regime the day belongs to, and a multi-asset version that couples several single-asset CHMMs into one joint generator while leaving each asset's fitted distribution untouched.

In this study we put the model through four complementary evaluation settings: a six-fold rolling-origin walk-forward on a decade of SPY, a sector-balanced 30-ticker panel across ten Global Industry Classification Standard (GICS) sectors, a Center for Research in Security Prices (CRSP) cross-decade transfer from a ten-year window to a later two-year slice, and a six-asset US-equity basket, which together reduce reliance on a single window or asset universe. With enough latent states and a heavy-tailed emission, the model reproduced all three symmetric Cont stylized facts and narrowed the kurtosis gap left by the Gaussian baseline. Under the diagnostics used here, the rank bound from the ACF identity was not empirically active at the typical ticker, although a finite-state HMM still permits only finitely many geometric decay modes and the result does not establish a general power-law or multi-scale approximation. The regime-conditional VaR passed the Christoffersen joint conditional-coverage test~\cite{christoffersen1998evaluating} on the main window and on most walk-forward folds, and the multi-asset version reproduced cross-asset correlations better than the single-index alternative in the tested basket. We therefore interpret the Ryd\'{e}n low-state-count failure more narrowly: on these daily US-equity data and metrics, increasing marginal flexibility resolved more of the observed fit gap than adding ACF modes beyond the selected state count. An i.i.d.\ bootstrap and a maximum-likelihood hidden semi-Markov model (HSMM) beat the CHMM on raw single-window fit, but the fitted CHMM additionally supports the regime-conditional risk forecast and multi-asset coupling exercised here. We make no formal privacy guarantee for its synthetic series. The scope is daily US equities: a non-equity stress test on the GLD exchange-traded fund (ETF) broke down under static fitting, and we recommend periodic refit within that scope.

\section{Related Work}
\label{sec:related}
The discrete-state regime-switching tradition opens with \citet{hamilton1989new}, who introduced Markov-switching to economics, and \citet{schaller1997regime}, who estimated regime-switching specifications on US monthly stock returns and documented the predictive content of the latent regime for return forecasting. Continuous-emission HMMs and non-Gaussian state densities are established parts of this tradition; the present paper contributes a common estimation and evaluation framework rather than a new model class. The main evaluation against the Cont stylized facts comes from \citet{ryden1998stylized}, who fitted Gaussian-emission hidden Markov models at $K = 2$ or $3$ states and showed that the geometric sojourn-time distribution of a standard Markov chain produces exponential ACF decay too fast to match the observed slow decay of absolute returns. \citet{bulla2006stylized} recovered the fit with hidden semi-Markov models at much higher complexity, replacing the geometric sojourn with an explicit-duration family at the cost of a duration-augmented forward-backward, and \citet{nystrup2017long} extended the same line to long-memory volatility. Each of these works makes the same kind of structural choice we make here: the latent regime is a finite-state object with per-state emission densities, and what varies across the lineage is the sojourn distribution and the per-state family. The emission families that supply the per-state densities used in this work draw on the maximum-likelihood Student-$t$ formulations of \citet{peel2000robust} and \citet{liu1995ml}, with the Generalised Error Distribution (GED)~\cite{subbotin1923law, box1973bayesian} providing a continuous shape parameter that interpolates between Gaussian ($p = 2$) and Laplace ($p = 1$) and underlies our CHMM-GED variant in the same role that the per-state degree-of-freedom parameter $\nu_k$ plays in CHMM-t. The discrete-state predecessor of this paper, \citet{alswaidan2026hybrid}, reproduced the same three stylized facts via Laplace quantile binning plus a Poisson-jump duration mechanism. The present work replaces that discretisation step with four continuous per-state emission families under a single estimation framework and tests empirically whether the finite collection of eigenvalue-driven ACF modes is adequate on the studied data, rather than claiming that it reproduces general long-memory decay.

Conditional-variance and stochastic-volatility models form a parallel line in which the latent state is the volatility process rather than a discrete regime. The basic Gaussian GARCH framework~\cite{engle1982autoregressive, bollerslev1986generalized} encodes volatility persistence through a parsimonious variance recursion; in our panel the tested GARCH(1,1) specification had a lower Kolmogorov-Smirnov pass rate than the CHMM variants (Table~\ref{tab:model_comparison}). Heavy-tailed innovations~\cite{bollerslev1987conditionally}, asymmetric leverage variants (exponential GARCH, EGARCH, \citealp{nelson1991conditional}; Glosten-Jagannathan-Runkle GARCH, GJR-GARCH, \citealp{glosten1993relation}; threshold GARCH, \citealp{zakoian1994threshold}), regime-switching extensions (Markov-switching GARCH, MS-GARCH, \citealp{haas2004new}, with the \texttt{MSGARCH} R package of \citealp{ardia2019msgarch} the reference implementation), and realised-variance frameworks~\cite{andersen1997heterogeneous, corsi2009simple} add tail, leverage, regime, or multi-scale structure at the variance-process layer. The continuous-latent counterpart is the stochastic-volatility family of \citet{taylor1982financial}, which evolves log-volatility as a latent AR(1) and proceeds by simulation-based estimators~\cite{jacquier1994bayesian, kim1998stochastic}, hidden-Markov approximations on a discretised volatility grid~\cite{rossi2006volatility, abantovalle2017svm}, or Markov-switching stochastic volatility~\cite{so1998stochastic} that interpolates between the continuous-volatility and finite-state formulations. These model classes can produce heavy-tailed predictive distributions and conditional VaR forecasts; unimodality alone does not impose a kurtosis ceiling. The distinction relevant to this study is narrower: the tested CHMM assigns a separate location, scale, and potentially shape parameter to each finite state, whereas the tested conditional-volatility baselines place most regime dependence in the variance recursion. The resulting empirical comparison separated those specifications on marginal fit and kurtosis (Table~\ref{tab:variant_choice}; extended baseline panel in the appendix), without establishing a structural impossibility for the broader GARCH or stochastic-volatility families.

Deep generative market models offer high-capacity distributional and temporal representations, often with less direct state-level interpretation. GAN-based generators require temporal structure in the architecture or training objective to reproduce volatility clustering~\cite{takahashi2019modeling, kwon2024can}; prominent examples include the TimeGAN family of \citet{yoon2019timegan} and the convolutional Wasserstein GAN QuantGAN of \citet{wiese2020quantgan}. Path-signature generators~\cite{chevyrev2016primer, ni2021sig, buehler2020data} and score-based or diffusion approaches~\cite{ho2020denoising, rasul2021autoregressive, tashiro2021csdi} provide alternative mechanisms for representing time-series dependence. Some models in these broad families can be conditional, can support likelihood or score-based prediction, or can be combined with multivariate dependence models; these capabilities are architecture-specific rather than absent from deep generators in general. We included one in-house QuantGAN implementation as the deep-generative row in the main generator panel, read as a negative control on whether that architecture recovered the same stylized facts at our sample size rather than as a representative forecasting baseline for the entire family. In that comparison, the CHMM's explicit finite-state semantics and closed-form state-conditional mixture made the regime-conditional VaR and copula composition used in this paper directly available, while the evaluated QuantGAN was not designed to provide those outputs. Reports of tail mode collapse in GAN-based financial generators~\cite{takahashi2019modeling, kwon2024can} motivate the heavy-tail diagnostic, but do not establish that failure for every deep generative formulation.

Cross-asset synthesis, synthetic-data evaluation, and non-stationarity handling round out the literature this paper draws on. The Single Index Model (SIM)~\cite{sharpe1963simplified} propagates a market factor through a rank-one linear decomposition but distorts each non-market marginal, and Sklar's theorem~\cite{sklar1959fonctions, nelsen2006introduction} supports a cleaner decomposition in which each asset's fitted marginal is preserved and only the copula is estimated, with Gaussian and Student-$t$ copulas~\cite{demarta2005tcopula, mcneil2015quantitative} the usual family choices and rank-reordering~\cite{iman1982distribution} the simulation step that preserves marginals exactly. Dynamic-conditional-correlation models~\cite{engle2002dynamic} and composite-likelihood estimators for vast covariance matrices~\cite{pakel2021fitting} are the standard alternatives at higher dimensions, and our six-asset construction sits well below the dimensions where those estimators become essential while the rank-based copula architecture composes with either at scale. Synthetic-data evaluation in this work is anchored on the framework of \citet{stenger2024thinking}, the proper-scoring-rule literature~\cite{gneiting2007strictly, gneiting2014probabilistic, diebold1995comparing}, kernel two-sample tests~\cite{gretton2012kernel}, and signature-based metrics~\cite{chevyrev2016primer, ni2020conditional}, which together fix the comparison metrics that the main panel uses: marginal goodness-of-fit, autocorrelation fidelity, kurtosis, proper scores, and signature distance. The non-stationarity literature documents structural breaks in single-name equity returns~\cite{pastorstambaugh2001equity, andreoughysels2002breaks, angtimmermann2012regime} and responds with window selection~\cite{pesarantimmermann2007window} and online or change-point updating~\cite{cappe2011online, adams2007bayesian, fearnhead2007online}; the walk-forward evaluation and the periodic-refit recommendation in this work draw directly on that line.

\section{Methods}
\label{sec:methods}
\subsection{Data and model}
\label{sec:data_and_model}

The single-asset analysis uses daily SPDR S\&P~500 ETF (SPY) prices from January~3, 2014 to January~3, 2024, a ten-year in-sample (IS hereafter) window of length $T_{\text{IS}} = 2{,}516$ trading days. The period from January~4, 2024 through April~20, 2026 is held out for out-of-sample (OoS hereafter) evaluation, $T_{\text{OoS}} = 572$ trading days. Prices are split-adjusted but not dividend-adjusted; the data sources are described in the Data Availability Statement. The 323-day Polygon versus Alpaca/IEX overlap check found no visible break at the vendor join. For ticker $i$ and trading day $j$, with $P_{i,j}$ the session volume-weighted average price (VWAP), $\Delta t = 1/252$ the annualised trading-day step, and $r_f$ the risk-free rate (set to zero here), the annualised excess growth rate is
\begin{equation}
    G_{i,j} \equiv \left(\frac{1}{\Delta t}\right)\; \ln\!\left(\frac{P_{i,j}}{P_{i,j-1}}\right) - r_f.
    \label{eq:growth_rate}
\end{equation}
We use daily data and the matching annualised growth rates. Since $\Delta t$ is fixed, the growth rate is a constant positive multiple of the day's log return. The excess kurtosis and the autocorrelations that define the stylized facts are therefore invariant to the rescaling and take the same values on returns and on growth rates; the absolute growth-rate ACF measures the same slow-decay stylized fact the literature reports for absolute returns. The probabilistic framework does not depend on the time step, although application at other frequencies requires frequency-specific preprocessing, state calibration, and evaluation horizons.

The model is fit one ticker at a time. We write $G_t$ for the daily excess growth rate on trading day $t$ for the ticker being fit (equation~\eqref{eq:growth_rate} with the ticker index dropped and the day index renamed from $j$ to $t$). The sequence $\{G_t\}$ is the observed output of a continuous hidden Markov model with $K$ latent states. We write the model as $\mathcal{M} = (\mathcal{S}, \mathbf{T}, \mathbf{F}, \boldsymbol{\pi})$~\cite{rabiner1986introduction}, where $\mathcal{S} = \{1, \ldots, K\}$ is the finite state space and $s_t \in \mathcal{S}$ is the (unobserved) latent state at time $t$. The transition matrix $\mathbf{T} \in \mathbb{R}^{K \times K}$ has its $(i,j)$ entry equal to the probability of moving from state $i$ at time $t$ to state $j$ at time $t+1$, $T_{ij} = \Prob(s_{t+1} = j \mid s_t = i)$; the rows therefore sum to one (each row is a valid probability distribution over next states). The initial-state distribution $\boldsymbol{\pi} \in \mathbb{R}^K$ has $k$th entry $\pi_k = \Prob(s_1 = k)$, the probability that the chain starts in state $k$. The per-state emission densities are collected in $\mathbf{F} = \{f_k(\,\cdot\,;\boldsymbol\theta_k)\}_{k=1}^K$, where $f_k(x; \boldsymbol\theta_k)$ is the density of the day's growth rate $G_t$ when the chain is in state $k$, evaluated at $G_t = x$; the parameter vector $\boldsymbol\theta_k$ holds that state's location, scale, and any shape parameters. The per-state cumulative distribution function (CDF) is $F_k(x; \boldsymbol\theta_k) = \int_{-\infty}^x f_k(u; \boldsymbol\theta_k)\,du$, used in the regime-conditional VaR construction.

Over many days the chain settles into a stable long-run distribution over its states. Write $\bar{\boldsymbol\pi} = (\bar\pi_1, \ldots, \bar\pi_K) \in \mathbb{R}^K$ for it, the one state distribution that a transition step leaves unchanged ($\bar{\boldsymbol{\pi}} = \bar{\boldsymbol{\pi}}\mathbf{T}$), so $\bar\pi_k$ is the long-run fraction of days the chain spends in state $k$. Collect every growth rate the model produces and ignore which state generated each one; the density of that pooled set of growth rates is a weighted average of the $K$ per-state densities, each weighted by how often its state is visited,
\begin{equation}
    f(x) = \sum_{k=1}^{K} \bar{\pi}_k\, f_k(x; \boldsymbol\theta_k), \qquad \bar{\boldsymbol{\pi}} = \bar{\boldsymbol{\pi}} \mathbf{T}.
    \label{eq:mixture}
\end{equation}
We call $f(x)$ the marginal mixture: \emph{marginal} because it averages over the hidden state to describe one day's growth rate on its own, a \emph{mixture} because it is a weighted sum of the per-state densities. A unique $\bar{\boldsymbol\pi}$ exists when $\mathbf{T}$ is irreducible (every state can eventually be reached from every other) and aperiodic (returns to a state are not locked to a fixed period); the formal conditions are given in Assumption~\ref{ass:irred}. We learn $\mathbf{T}$, the per-state parameter vectors $\boldsymbol\theta_{1:K} = (\boldsymbol\theta_1, \ldots, \boldsymbol\theta_K)$, and $\boldsymbol{\pi}$ jointly by expectation-maximisation (EM; Figure~\ref{fig:chmm_architecture}). Unconditional simulation initialises the chain from the stationary distribution $\bar{\boldsymbol\pi}$ rather than from the fitted $\boldsymbol\pi$, which removes transient dependence on the first observed day. The CHMM-N variant uses Gaussian emissions, $f_k = \mathcal{N}(\mu_k, \sigma_k^2)$ with $\sigma_k > 0$. The CHMM-t variant uses Student-$t$, $f_k = t_{\nu_k}(\mu_k, \sigma_k)$ with $\sigma_k > 0$ and $\nu_k \in [\nu_{\min}, \nu_{\max}]$; an optional exponential penalty on $1/\nu_k$ is used only in the $\lambda=20$ sensitivity specification. The CHMM-L variant uses Laplace, $f_k = \mathrm{Laplace}(\mu_k, b_k)$ with $b_k > 0$. The CHMM-GED variant uses the generalised error distribution, $f_k = \mathrm{GED}(\mu_k, \alpha_k, p_k)$ with $\alpha_k > 0$ and $p_k \in [p_{\min}, p_{\max}]$; the Gaussian and Laplace variants sit at the boundary of CHMM-GED, with $p_k = 2$ recovering Gaussian and $p_k = 1$ recovering Laplace.

\begin{figure}[!ht]
\centering
\begin{tikzpicture}[
    font=\small,
    >={Stealth[length=2mm, width=1.6mm]},
    state/.style={draw, circle, minimum size=9mm, inner sep=0pt,
                  fill=black!4, thick},
    obs/.style={draw, rectangle, rounded corners=1pt, minimum size=9mm,
                inner sep=1pt, fill=black!10, thick},
    block/.style={draw, rectangle, rounded corners=1.5pt, align=center,
                  minimum height=11mm, minimum width=34mm, inner sep=1.5mm,
                  fill=black!6, thick},
    flow/.style={-Stealth, thick, black!75},
    lbl/.style={font=\footnotesize\itshape, black!65, inner sep=1pt},
    eq/.style={font=\footnotesize, inner sep=1pt},
]
\node[state] (s1) at (0, 0) {$s_{t-1}$};
\node[state, right=14mm of s1] (s2) {$s_t$};
\node[state, right=14mm of s2] (s3) {$s_{t+1}$};
\node[left=6mm of s1] (sdots) {$\cdots$};
\node[right=6mm of s3] (edots) {$\cdots$};
\draw[flow] (sdots) -- (s1);
\draw[flow] (s1) -- (s2) node[midway, above, eq] {$T_{ij}$};
\draw[flow] (s2) -- (s3) node[midway, above, eq] {$T_{ij}$};
\draw[flow] (s3) -- (edots);
\node[lbl, above=2mm of s2] {latent regime chain $s_t \in \{1,\ldots,K\}$};
\node[obs, below=12mm of s1] (o1) {$G_{t-1}$};
\node[obs, below=12mm of s2] (o2) {$G_t$};
\node[obs, below=12mm of s3] (o3) {$G_{t+1}$};
\draw[flow] (s1) -- (o1) node[midway, right, eq] {$f_{s_{t-1}}$};
\draw[flow] (s2) -- (o2) node[midway, right, eq] {$f_{s_t}$};
\draw[flow] (s3) -- (o3) node[midway, right, eq] {$f_{s_{t+1}}$};
\node[lbl, below=2mm of o2] {observed excess growth rate $G_t \in \mathbb{R}$};
\node[block, right=18mm of edots] (mstep)
    {M-step\\ update $\mathbf{T}$, $\{\boldsymbol\theta_k\}$\\ (family-specific)};
\node[block, below=6mm of mstep] (estep)
    {E-step\\ forward-backward\\ $\gamma_t(k),\ \xi_t(i,j)$};
\draw[flow, black!75] ([xshift=-3mm]mstep.south) -- ([xshift=-3mm]estep.north);
\draw[flow, black!75] ([xshift=3mm]estep.north) -- ([xshift=3mm]mstep.south);
\node[lbl, above=1mm of mstep] {EM until $|\Delta \mathcal{L}| < 10^{-4}$};
\end{tikzpicture}
\caption{\textbf{CHMM architecture and training loop.} A $K$-state latent Markov chain with transition matrix $\mathbf{T}$ generates the regime sequence $s_t$; each $s_t$ emits the observed excess growth rate $G_t$ through a state-specific density $f_{s_t}$. The four CHMM variants share everything except the per-state emission family (Gaussian, Student-$t$, Laplace, GED). The EM procedure alternates a log-space forward-backward E-step and a family-specific M-step until $|\Delta \mathcal L| < 10^{-4}$.}
\label{fig:chmm_architecture}
\end{figure}

\subsection{Estimation by EM}
\label{sec:estimation}

With the model specified, we estimate the parameters $(\mathbf{T}, \boldsymbol\theta_{1:K}, \boldsymbol\pi)$ by EM, with all probability computations carried out in log-space for numerical stability. From here on we write $O_t \equiv G_t$ for the observation at time $t$ ($O$ is the standard symbol for an HMM observation; here it just renames $G_t$), $O_{1:T} = (O_1, O_2, \ldots, O_T)$ for the full observed sequence over the fitting window of length $T$, and $\gamma_t(k) = \Prob(s_t = k \mid O_{1:T})$ for the smoothed posterior probability that the chain was in state $k$ at time $t$, given the entire observed sequence and the current parameter iterate (``smoothed'' means computed using both past and future observations, not just past). The pair posterior $\xi_t(i, j) = \Prob(s_t = i, s_{t+1} = j \mid O_{1:T})$ is the smoothed probability of an $i \to j$ transition at time $t$, used in the transition update. The collection $\{\gamma_t(k)\}_{t=1,\ldots,T;\, k=1,\ldots,K}$ is computed at every E-step by the standard forward-backward recursion (one forward pass and one backward pass through the data). We also use the superscript $(n)$ to mark the value of any quantity at EM iteration $n$, so $(0)$ is the initial value. All four emission families share the same log-space forward-backward recursions~\cite{bilmes1998gentle} and the same quantile-based initialisation: observations are sorted, split into $K$ equal-sized chunks, and state $k$'s initial location and scale are seeded from chunk $k$, with the transition matrix and initial distribution started uniform, $T_{ij}^{(0)} = \pi_k^{(0)} = 1/K$. The EM loop runs until the observed-data log-likelihood $\mathcal{L}^{(n)} = \log p(O_{1:T} \mid \mathbf{T}^{(n)}, \boldsymbol\theta_{1:K}^{(n)}, \boldsymbol\pi^{(n)})$, the likelihood of the growth rates alone with the latent states summed out, stops changing meaningfully between iterations, $|\mathcal{L}^{(n)} - \mathcal{L}^{(n-1)}| < \epsilon$ with $\epsilon = 10^{-4}$, or a maximum iteration cap is reached; the iteration cap, bracket widths, one-dimensional-search settings, and the full forward-backward recursion are reported in the algorithmic appendix. Quantile-based initialisation is chosen to reduce overlap among starting state components, with the aim of improving convergence to non-degenerate $K \ge 3$ fits.

The four emission families share the latent-state forward-backward recursion, the transition update, and the quantile-based initialisation; they differ in the emission density evaluated at each E-step and in the M-step update for the per-state parameter vector $\boldsymbol\theta_k$. Given the smoothed posteriors $\gamma_t(k)$, each family applies its own parameter update. For CHMM-N, the per-state parameters are the Gaussian mean and variance, $\boldsymbol\theta_k = (\mu_k, \sigma_k^2)$, and the classical Baum-Welch M-step~\cite{baum1970maximization} updates them in closed form using weighted sample-mean and sample-variance formulas, where each observation is weighted by the posterior probability $\gamma_t(k)$ that state $k$ generated it (all sums $\sum_t$ below run over $t = 1, \ldots, T$):
\[
\mu_k \leftarrow \frac{\sum_t \gamma_t(k)\, O_t}{\sum_t \gamma_t(k)},
\qquad
\sigma_k^2 \leftarrow \frac{\sum_t \gamma_t(k)\, (O_t - \mu_k)^2}{\sum_t \gamma_t(k)}.
\]

For CHMM-t, the per-state parameters are the Student-$t$ location, scale, and degrees of freedom, $\boldsymbol\theta_k = (\mu_k, \sigma_k, \nu_k)$. We use the expectation conditional maximisation (ECM) construction of \citet{peel2000robust} and \citet{liu1995ml}, which replaces the single M-step with a sequence of conditional maximisation (CM) sub-updates that each move one block of parameters at a time. The location and scale use the augmented-data closed-form CM updates, while the degrees of freedom use the expectation/conditional-maximisation-either (ECME) variant of \citet{liu1995ml}, which maximises the observed-data likelihood directly; this is the same heavy-tailed innovation choice used by \citet{abantovalle2017svm} for stochastic-volatility-in-mean models. The construction uses a standard representation of the Student-$t$ as a Gaussian with random variance: conditional on a latent positive scalar $u$, the observation is Gaussian, $G_t \mid s_t = k,\, u \sim \mathcal{N}(\mu_k, \sigma_k^2/u)$, and $u$ itself is drawn from a Gamma distribution, $u \sim \Gamma(\nu_k/2,\, \nu_k/2)$. Averaging over $u$ recovers the Student-$t$. In ECM, $u$ is treated as an additional latent variable, and the E-step adds its conditional expectation under the current parameter iterate,
\begin{equation}
    u_{t,k} \equiv \E[u \mid O_t, s_t = k] = \frac{\nu_k + 1}{\nu_k + ((O_t - \mu_k)/\sigma_k)^2},
    \label{eq:tprecision}
\end{equation}
and the M-step then updates
\begin{equation}
    \mu_k \leftarrow \frac{\sum_t \gamma_t(k) u_{t,k} O_t}{\sum_t \gamma_t(k) u_{t,k}},
    \qquad
    \sigma_k^2 \leftarrow \frac{\sum_t \gamma_t(k) u_{t,k} (O_t - \mu_k)^2}{\sum_t \gamma_t(k)}
    \label{eq:tupdate}
\end{equation}
for the location and scale, both in closed form given $\nu_k$. The degrees of freedom $\nu_k$ are then updated by a finite one-dimensional search on the bounded interval $[\nu_{\min}, \nu_{\max}]$ for a high-value point of the state-$k$ objective $Q_k(\nu) = \sum_t \gamma_t(k) \log t_\nu(O_t; \mu_k, \sigma_k)$, the observed-data state-weighted Student-$t$ log-likelihood rather than the augmented-data CM update in $\E[\log u]$. This direct maximisation is the ECME conditional-maximisation step; in exact form it preserves the monotone observed-data ascent~\cite{wu1983convergence, mengrubin1993ecm}, and the fixed-iteration numerical search used here approximates that maximiser, so we assess numerical convergence from the observed-data log-likelihood trace and stop at the stated tolerance or iteration cap rather than claim exact monotonicity for every finite-precision step.

For CHMM-L, the per-state parameters are the Laplace location and scale, $\boldsymbol\theta_k = (\mu_k, b_k)$, and both have closed-form updates. The location $\mu_k$ is a weighted median of the observations $\{O_t\}_{t=1}^T$, where each observation is weighted by the posterior probability $\gamma_t(k)$ that state $k$ generated it. The scale $b_k \leftarrow \sum_t \gamma_t(k)\, |O_t - \mu_k| / \sum_t \gamma_t(k)$ is the weighted mean absolute deviation of the observations from $\mu_k$, using the same weights.

For CHMM-GED, the per-state parameters are the GED location, scale, and shape, $\boldsymbol\theta_k = (\mu_k, \alpha_k, p_k)$, with conditional density $f_k(x; \mu_k, \alpha_k, p_k) = \mathrm{GED}(x; \mu_k, \alpha_k, p_k)$. The three parameters cannot all be updated together in closed form, so the M-step uses three sequential block updates. The location update searches a bounded interval for a low-value point of $\sum_t \gamma_t(k)\,|O_t - \mu|^{p_k}$. This objective is convex for $p_k \ge 1$ but need not be convex for $p_k < 1$, so the finite search is a numerical update rather than a certified global conditional minimisation. The scale then has a closed-form update at the just-updated location, $\alpha_k \leftarrow \big[(p_k / W_k) \sum_t \gamma_t(k)\,|O_t - \mu_k|^{p_k}\big]^{1/p_k}$, where $W_k = \sum_t \gamma_t(k)$ is the total posterior weight on state $k$. The shape is updated last by a bounded one-dimensional search over $[p_{\min}, p_{\max}]$ at the just-updated location and scale. We therefore describe CHMM-GED as an ECM-style numerical block-coordinate procedure and diagnose convergence from its observed-data log-likelihood trace; the exact ECM monotonicity theorem applies only when every block is conditionally maximised.

\subsection{Evaluation}
\label{sec:evaluation}

With the model fitted, we evaluate the CHMM at two levels, single-asset and multi-asset, across four datasets: the SPY in-sample and out-of-sample windows, a sector-balanced 30-ticker panel spanning ten GICS sectors with three large-cap representatives each, a CRSP cross-decade transfer between a 1994 to 2004 in-sample window and a 2004 to 2006 out-of-sample slice, and a six-asset US-equity basket (SPY, NVDA, JNJ, JPM, AAPL, QQQ). The \emph{single-asset} evaluation fits one CHMM per ticker and covers the stylized-fact diagnostics, $K$ selection, main generator comparison, cross-ticker generalisation panel, CRSP cross-decade transfer, walk-forward stress test, and Value-at-Risk panel. The \emph{multi-asset} evaluation keeps each asset's fitted CHMM marginal fixed and models only how the assets move together. A copula makes this split possible: Sklar's theorem~\cite{sklar1959fonctions} says any joint distribution factors into its one-asset marginals and a copula that carries the dependence between them, so the marginals and the dependence can be fitted separately. We use a Student-$t$ copula~\cite{demarta2005tcopula, mcneil2015quantitative} on the six-asset basket, with its correlation matrix fixed from Kendall's $\tau$ via $\rho=\sin(\pi\tau/2)$; its degrees of freedom $\nu^\star$ are selected by a grid search over the profile likelihood. To draw joint samples we use the rank-reordering scheme of \citet{iman1982distribution}, which reorders the per-asset CHMM draws into the copula's dependence pattern without altering any asset's own values, so the reordering preserves each asset's simulated marginal exactly. A single-index model (SIM), a Gaussian copula on the same CHMM marginals, and a truncated C-vine are reported in the appendix as multi-asset comparators.

The single-asset CHMM is benchmarked against one representative from each model class, all fit on the same IS window: a plain i.i.d.\ empirical bootstrap~\cite{efron1979bootstrap} as the strongest non-parametric baseline; Gaussian and Laplace i.i.d.\ generators as marginal baselines; GARCH(1,1) with Gaussian~\cite{bollerslev1986generalized} and Student-$t$~\cite{bollerslev1987conditionally} innovations as conditional-variance baselines; and Markov-switching GARCH (MS-GARCH)~\cite{haas2004new} as the closest regime-switching comparator. Extended GARCH-family, semi-Markov, stochastic-volatility, multifractal, and jump-diffusion baselines are reported in the appendix; a deep-generative QuantGAN row~\cite{wiese2020quantgan} appears in the main comparison table as a negative control, with its construction detailed in the appendix. For every model in the panel we simulate $P$ independent paths matching the window being scored: $T_{\text{IS}}$ in sample and $T_{\text{OoS}}=572$ out of sample~\cite{stenger2024thinking}; the path count and the deterministic seed used in each run are reported in the corresponding result table. We score the panel on three primary metrics. The two-sample Kolmogorov-Smirnov (KS) pass rate at significance level $\alpha = 0.05$~\cite{kolmogorov1933, smirnov1948} counts the fraction of simulated paths whose marginal distribution is not rejected against the observed window, measuring distributional fidelity at the marginal level. The asymptotic two-sample KS null assumes i.i.d.\ observations, whereas the observed and simulated paths carry serial dependence, so we read this pass rate as a descriptive marginal-fidelity score rather than a calibrated test and recalibrate it with the stationary block bootstrap in the appendix. Mean simulated excess kurtosis, read against the observed kurtosis on the same window, measures how well the heavy-tailed marginal is reproduced. The mean absolute error (MAE) between the observed and path-averaged simulated autocorrelation function (ACF) of absolute growth rates $|G_t|$ over $L = 252$ lags (one trading year) is
\begin{equation}
    \text{ACF-MAE} = \frac{1}{L}\sum_{\tau=1}^{L} \left| \hat\rho^{\text{obs}}_{|G|}(\tau) - \bar{\hat\rho}^{\,\text{sim}}_{|G|}(\tau) \right|,
    \label{eq:acf_mae}
\end{equation}
where $\tau$ is the lag index, $|G| = \{|G_t|\}_{t=1}^T$ is the absolute growth-rate sequence, $\hat\rho^{\text{obs}}_{|G|}(\tau)$ is the sample autocorrelation of $|G|$ at lag $\tau$ on the observed window, and $\bar{\hat\rho}^{\,\text{sim}}_{|G|}(\tau)$ is the same quantity averaged across the $P$ simulated paths; this captures volatility clustering, the third Cont stylized fact~\cite{cont2001empirical}. The Value-at-Risk panel additionally reports the unconditional Kupiec coverage test~\cite{kupiec1995techniques}, the Christoffersen joint conditional-coverage test~\cite{christoffersen1998evaluating} as our primary VaR backtest, and an expected-shortfall (ES) envelope (the $[5\%, 95\%]$ band of simulated ES across paths). Auxiliary distributional metrics (block-bootstrap KS recalibration, Anderson-Darling (AD), Wasserstein-1 ($W_1$), Hellinger distance, and continuous ranked probability score (CRPS)) are reported in the appendix to confirm robustness of the cross-generator ranking. The models and evaluation pipeline were implemented in Julia in the companion repository \texttt{CHMM-Model}.

\subsection{Spectral mechanism}

The route from the transition matrix to slow ACF decay runs through the spectrum of $\mathbf T$. Let $\mathbf v_k$ and $\mathbf w_k$ denote the right and left eigenvectors of $\mathbf T$ at non-unit eigenvalue $\lambda_k$, normalised biorthonormally so that $\mathbf w_k^\top \mathbf v_l = \delta_{kl}$, and let $\sigma_{|G|}^2$ denote the stationary variance of $|G_t|$ under the marginal mixture~\eqref{eq:mixture}. The closed-form bilinear identity for the absolute growth-rate ACF of a stationary CHMM, $\E[|G_t|\,|G_{t+\tau}|] = \mathbf m^\top \mathrm{diag}(\bar{\boldsymbol\pi})\,\mathbf T^\tau\,\mathbf m$ with $\mathbf m_k = \E[|G_t| \mid s_t = k]$, and its eigendecomposition over the non-unit eigenvalues of $\mathbf T$ are standard in the regime-switching literature~\cite{hamilton1994time,krolzig1997markov,timmermann2000moments}. Under irreducibility and aperiodicity of $\mathbf T$ (Assumption~\ref{ass:irred}, with $\bar{\boldsymbol\pi}$ the unique stationary distribution; \citealp{levinperes2017markov}), finite per-state second moments (Assumption~\ref{ass:moments}), and diagonalizability of $\mathbf T$, the spectral decomposition $\mathbf T^\tau = \mathbf 1\bar{\boldsymbol\pi}^\top + \sum_{k=2}^K \lambda_k^\tau\, \mathbf v_k \mathbf w_k^\top$ subtracts the marginal product to give the ACF
\begin{equation}
    \rho_{|G|}(\tau) \;=\; \sum_{k=2}^{K} a_k\, \lambda_k^{\tau},
    \qquad
    a_k = \big(\mathbf m^\top \mathrm{diag}(\bar{\boldsymbol\pi})\, \mathbf v_k\big)\big(\mathbf w_k^\top \mathbf m\big) / \sigma_{|G|}^2.
    \label{eq:acf_normalised}
\end{equation}
This is a real-valued sum of geometric modes at the non-unit eigenvalues of $\mathbf T$; complex-conjugate $\lambda_k$ pairs combine into damped oscillatory modes. For a non-diagonalisable transition matrix, the corresponding Jordan expansion contains polynomial-times-geometric terms instead of equation~\eqref{eq:acf_normalised}. A self-contained derivation, the lag-zero behaviour, and the additional fourth-moment condition needed for a squared growth-rate ACF are reported in the appendix. The appendix also states the scope of the identifiability and numerical-convergence claims; neither is required for the algebraic ACF identity. Our contribution is purely empirical. We apply equation~\eqref{eq:acf_normalised} to recast the \citet{ryden1998stylized} low-$K$ failure on the SPY 2014 to 2024 instance as a rank statement on $\mathbf T - \mathbf 1\bar{\boldsymbol\pi}^\top$. At $K = 2$, the ACF reduces to the mono-exponential form $\rho_{|G|}(\tau) = a_2\lambda_2^\tau$ with $\lambda_2 = T_{11} + T_{22} - 1$, so its persistence can be tuned through $T_{11} + T_{22}$ but its shape remains restricted to one geometric mode. At $K \ge 3$ the matrix admits multiple non-unit eigenvalues and the marginal mixture admits more components. The empirical results test whether those finite-mode and finite-mixture restrictions are active on the studied data; they do not imply that a finite-state HMM reproduces general power-law or multi-scale decay.

\section{Results}
\label{sec:results}
We computed sample stylized-fact diagnostics on the SPY in-sample and out-of-sample windows and observed all three Cont stylized facts on both windows (Figure~\ref{fig:stylized_facts}, Table~\ref{tab:model_comparison} Observed row). The marginal was heavy-tailed (in-sample excess kurtosis $7.68$, out-of-sample $5.29$), the linear autocorrelation in the raw growth rate $G_t$ was negligible, and the volatility clustering in $|G_t|$ was persistent. A stationary block bootstrap at mean block length $L = 20$ gave broadly overlapping IS and OoS $95\%$ CIs of $[2.17, 12.40]$ and $[0.90, 8.26]$ on the kurtosis point estimates (Table~\ref{tab:kurtosis_bootstrap}), so kurtosis targets should be read against the joint envelope rather than the point estimates. Having confirmed the three stylized facts on both windows, we next swept $K \in \{3, 6, 9, 12, 15, 18, 21\}$ for CHMM-N on the IS window and observed that the $K = 3$ versus $K = 6$ comparison sat inside sampling noise under four-fold and six-fold rolling-origin cross-validation (CV; full $K$-sweep CV panel in Table~\ref{tab:k_selection_cv}). The held-out log-likelihood differential satisfied $|z| \le 0.07$ with the sign flipping between fold designs, and the Bayesian information criterion (BIC) also picked $K = 3$. We chose $K^\star = 3$ as the default state count: held-out criteria could not distinguish $K = 3$ from $K = 6$, so we picked the smaller model, while $K = 18$ had a substantially lower mean held-out log-likelihood at both fold cadences. Absolute growth-rate ACF-MAE was nearly flat across the whole sweep, consistent with Eq.~\eqref{eq:acf_normalised}: a direct effective-rank diagnostic showed a single non-unit eigenvalue carrying $93.6\%$ of the lag-1 absolute growth-rate ACF at $K = 18$ on SPY (Table~\ref{tab:spectral_rank}), so the $K - 1$ rank bound did not bind at $K \ge 3$ on this instance. Repeating the diagnostic on the 30-ticker cross-section gave a wider distribution, with cross-ticker median dominant-mode share $0.76$ and minimum $0.326$ on NEM (Table~\ref{tab:spectral_xticker}). The rank-non-binding claim was supported at the cross-ticker median rather than as a universal property of equity-return data.

With the state count fixed, we fit the four CHMM variants on the SPY in-sample window at $K^\star = 3$ and observed out-of-sample performance using seven benchmark generators and an explicit-duration semi-Markov reference (Table~\ref{tab:model_comparison}). The CHMM-N row landed at $91.5\%$ IS and $78.0\%$ OoS, simulated kurtosis $3.83$ and $3.62$, and $|G_t|$ ACF-MAE $0.0462$ (on par with GARCH at $0.0490$). The shared-$\nu$ Student-$t$ row gave the cleanest heavy-tail match without a penalty hyperparameter at simulated kurtosis $4.68$ and $4.46$ against observed $7.68$ and $5.29$, with $91.9\%$ IS and $82.1\%$ OoS KS. We include the penalised CHMM-t at $\lambda = 20$ as a sensitivity reference. Its IS aggregate kurtosis of $18.87$ was a $1/\nu_k$-shrinkage outcome: $\lambda$ was tuned at $K = 18$ and re-used at $K^\star = 3$, where the state pinned at the lower $\nu$ bracket carried a larger share of the three-state mixture, and the near-undefined fourth moment at that bracket left the aggregate kurtosis itself sensitive to the simulation draw. The remaining two variants, CHMM-GED and CHMM-L, sat between CHMM-N and the penalised CHMM-t, at IS and OoS kurtosis $5.45$ and $5.09$ for CHMM-GED and $5.24$ and $5.20$ for CHMM-L; CHMM-L lost roughly $15$pp of OoS KS to CHMM-GED at $K = 3$.

The fitted CHMM reproduced, through the mixture-of-eigenvalues sum~\eqref{eq:acf_normalised}, the slow absolute growth-rate ACF that the \citet{ryden1998stylized} $K = 2$ to $3$ Gaussian setup could not recover on the SPY 2014 to 2024 instance, and the raw growth-rate ACF-MAE closed the third stylized fact at $0.0235$ to $0.0240$, indistinguishable from the i.i.d.\ baselines (Table~\ref{tab:model_comparison}, $G_t$ column). The four CHMM variants were statistically indistinguishable on the OoS sample-CRPS~\cite{diebold1995comparing} (CHMM-N $1.0393$, CHMM-t-pen $1.0398$, CHMM-L $1.0432$, CHMM-GED $1.0398$, against bootstrap $1.0398$ and GARCH(1,1) $1.0440$), so the family choice was driven by the per-row kurtosis match (Table~\ref{tab:variant_choice}); the cross-generator ranking was robust to the distributional-metric choice across KS, Anderson-Darling, Wasserstein-1, and Hellinger. The MS-GARCH benchmark stayed well below the CHMM on KS under both point-estimate and Bayesian posterior-predictive variants (Table~\ref{tab:model_comparison}; full MS-GARCH panel in the appendix). A six-fold rolling-origin walk-forward (Table~\ref{tab:walkforward}) reached median OoS KS $62.1\%$ at $K^\star = 3$, with two stress folds (W2 COVID, W4 2022 rate-hike onset) below $10\%$ on which every generator in the panel was rejected. We treated the walk-forward median, not the single-window OoS pair, as the right summary for judging the model for live use. The i.i.d.\ bootstrap and a maximum-likelihood HSMM-N benchmark both exceeded the CHMM on raw single-window OoS KS (Table~\ref{tab:model_comparison}); the CHMM differentiated on use cases neither alternative can serve, with the full HSMM panel, state-assignment inspection, bootstrap discussion, and block-bootstrap KS recalibration reported in the appendix.

\begin{table}[!htbp]
\centering
\caption{\textbf{Main generator comparison on SPY}. $1{,}000$ simulated paths, $\alpha = 0.05$; IS $T = 2{,}516$, OoS $T = 572$. Columns: KS = Kolmogorov-Smirnov pass rate; Kurt = mean simulated excess kurtosis (observed $7.68$ IS, $5.29$ OoS); ACF-MAE = mean absolute error of the absolute growth-rate / raw growth-rate ACF over $252$ lags. Bold marks the best entry on each axis \emph{within block} (block = i.i.d.\ baselines, GARCH family, MS-GARCH, CHMM, co-main HSMM). The 1-day OoS KS column is dominated by the i.i.d.\ bootstrap; the CHMM differs from the bootstrap on use cases the bootstrap cannot serve (regime-conditional VaR, multi-asset copula composition), not on per-day distributional fidelity. Sample-CRPS is reported as a tie-break in the main text; the four CHMM rows are statistically indistinguishable on this metric. Block-bootstrap KS recalibration, the extended GARCH-family and MS-GARCH-$K \in \{3, 4, 6\}$ panel, and the $K = 6$ and $K = 18$ sensitivity panels are reported in the appendix. \textsuperscript{\textparagraph}\,Reference Bayesian re-run of MS-GARCH via the \texttt{MSGARCH} R package of \citet{ardia2019msgarch}; the lower KS pass rate reflects posterior-predictive integration of parameter uncertainty. \textsuperscript{\textsection}\,QuantGAN row is an in-house WGAN re-implementation (3-layer 1D-conv generator and critic, Wasserstein loss; full spec in the appendix); the row reads as the panel's deep-generative negative control. \textsuperscript{\textsection\textsection}\,Shared-$\nu$ Student-$t$ ablation: single $\nu$ across all $K$ states by aggregate-$Q$ ECM, no penalty.}
\label{tab:model_comparison}
\footnotesize
\renewcommand{\arraystretch}{0.95}
\begin{tabular}{l cc cc cc}
\toprule
& \multicolumn{2}{c}{KS (\%) $\uparrow$} & \multicolumn{2}{c}{Exc.\ Kurt} & \multicolumn{2}{c}{ACF-MAE $\downarrow$} \\
\cmidrule(lr){2-3} \cmidrule(lr){4-5} \cmidrule(lr){6-7}
Model & IS & OoS & IS & OoS & $|G_t|$ & $G_t$ \\
\midrule
\textit{Observed}                    & --   & --   & $7.68$  & $5.29$  & --     & --     \\
Bootstrap                            & $\mathbf{99.8}$ & $\mathbf{91.8}$ & $\mathbf{7.55}$  & $\mathbf{6.86}$  & $0.0628$ & $\mathbf{0.0235}$ \\
Gaussian i.i.d.                      & $0.0$  & $1.5$  & $0.00$ & $-0.02$ & $0.0627$ & $0.0235$ \\
Laplace i.i.d.                       & $98.2$ & $86.3$ & $2.96$  & $2.80$  & $0.0628$ & $0.0235$ \\
GARCH(1,1)                           & $27.4$ & $59.6$ & $7.59$  & $2.70$  & $0.0490$ & $0.0244$ \\
GARCH(1,1)-$t$                       & $\mathbf{57.3}$ & $\mathbf{80.8}$ & $15.13$ & --      & $\mathbf{0.0316}$ & $\mathbf{0.0173}$ \\
MS-GARCH ($K=2$)                     & $27.7$ & $\mathbf{38.7}$ & $\mathbf{4.73}$  & --      & $0.0367$ & $0.0173$ \\
MS-GARCH ($K=3$)                     & $\mathbf{36.1}$ & $33.1$ & $4.10$  & --      & $\mathbf{0.0284}$ & $0.0173$ \\
MS-GARCH ($K=6$)                     & $34.5$ & $33.4$ & $4.41$  & --      & $0.0429$ & $0.0173$ \\
MS-GARCH ref.\ Bayesian ($K=2$)\textsuperscript{\textparagraph} & $0.0$  & $5.8$  & $4.00$  & $2.46$  & $0.0465$ & $0.0240$ \\
MS-GARCH ref.\ Bayesian ($K=3$)\textsuperscript{\textparagraph} & $0.1$  & $5.1$  & $4.52$  & $2.53$  & $0.0433$ & $0.0241$ \\
MS-GARCH ref.\ Bayesian ($K=4$)\textsuperscript{\textparagraph} & $0.0$  & $5.3$  & $6.01$  & $3.54$  & $0.0446$ & $0.0241$ \\
QuantGAN TCN\textsuperscript{\textsection} & $0.0$  & $0.0$  & $0.56$  & $0.53$  & $0.0617$ & $0.0264$ \\
\midrule
\multicolumn{7}{l}{\emph{Main-paper choice: $K^\star = 3$ (selected by held-out criteria under rolling-origin CV).}} \\
CHMM-N                                            & $91.5$ & $78.0$ & $3.83$  & $3.62$  & $\mathbf{0.0462}$ & $0.0240$ \\
CHMM-t pen.\ ($\lambda = 20$)                     & $\mathbf{91.9}$ & $81.4$ & $18.87$ & $10.61$ & $0.0533$ & $0.0236$ \\
CHMM-L                                            & $80.5$ & $63.6$ & $5.24$  & $5.20$  & $0.0530$ & $\mathbf{0.0235}$ \\
CHMM-GED                                          & $90.3$ & $78.4$ & $5.45$  & $5.09$  & $0.0531$ & $0.0236$ \\
CHMM-t shared-$\nu$\textsuperscript{\textsection\textsection} & $\mathbf{91.9}$ & $\mathbf{82.1}$ & $\mathbf{4.68}$  & $\mathbf{4.46}$  & $0.0531$ & $0.0236$ \\
\midrule
ML HSMM-N                                         & $\mathbf{98.4}$ & $\mathbf{91.0}$ & $\mathbf{3.46}$  & $\mathbf{3.38}$  & $\mathbf{0.0629}$ & $\mathbf{0.0236}$ \\
\bottomrule
\end{tabular}
\end{table}

Having established the main ranking on SPY, we next fit the penalised CHMM-t at $K^\star = 3$ and $\lambda = 20$ on each ticker of a sector-balanced 30-ticker panel (ten GICS sectors with three large-cap representatives each) and observed cross-ticker generalisation behaviour (Table~\ref{tab:cross_ticker}). The in-sample distribution was concentrated (median $96.8\%$) and the out-of-sample distribution was wider, with median $69.1\%$, mean $66.2 \pm 28.2\%$, and $11/30$ tickers below $60\%$. The deepest failures were tickers that introduced a new regime out of sample: LLY and UNH in Health Care, and NEM in Materials. The per-ticker rollup, the remaining eight sub-$60\%$ tickers, and the per-sector breakdown are reported in the appendix (Table~\ref{tab:sector_panel}); the $11/30$ failure count and the ticker list hold at the $K = 18$ kurtosis-match check. An ANOVA on OoS KS by sector returned no significant sector effect on either the original $n = 3$ design or an $n = 6$ expansion to a 60-ticker panel (full ANOVA panel in the appendix), so the visible concentration of failures on regime-introducing tickers was a per-ticker observation rather than evidence about sector-level structure. Periodic refit closed much of the gap: at $K^\star = 3$ a quarterly refit shifted the OoS KS median from $69.1\%$ to $84.7\%$ and reduced failures from $11/30$ to $8/30$ (Tables~\ref{tab:stationarity_scope} and~\ref{tab:cross_ticker_quarterly_refit}); a monthly cadence at $K = 18$ closed slightly more of the gap at roughly three times the compute. Sensitivity panels at $K = 6$ and $K = 18$ and the SPY-level shared-$\nu$ Student-$t$ alternative are reported in the appendix.

We then composed per-asset CHMM-N marginals at $K^\star = 3$ through a rank-based Student-$t$ copula on a six-asset US-equity universe (SPY, NVDA, JNJ, JPM, AAPL, QQQ) and observed off-diagonal correlation reproduction across the four candidate dependence layers (Table~\ref{tab:cross_asset}; full comparator panel in the appendix). On the in-sample window the Student-$t$ copula gave off-diagonal correlation MAE of $0.027$ over $200$ simulated paths, against $0.029$ for the Gaussian copula, $0.068$ for the truncated C-vine, and $0.077$ for the single-index model (SIM) (Table~\ref{tab:cross_asset_supp_summary}). Profile MLE selected $\nu^\star = 6$ on the grid $\nu \in \{2,3,4,5,6,8,10,15,20,30\}$ with Wilks $95\%$ profile-LL CI of $[6, 7]$ on the IS slice, and a full one-shot MLE jointly maximising over $(\Sigma, \nu)$ confirmed the choice was robust to the two-step estimator (Fig.~\ref{fig:copula_profile}; Table~\ref{tab:full_tcopula_mle}). The universe contained overlapping ETFs and constituents (QQQ holds AAPL and NVDA; SPY holds all four single names), so the dependence structure was dominated by strong positive co-movement and the copula task was easier than on a non-overlapping basket; a non-overlapping comparison panel is reported in the appendix. On OoS the Student-$t$ versus Gaussian gap fell within path-to-path noise at this $N_{\text{paths}}$: the off-diagonal MAE was $0.209$ for Student-$t$ and $0.204$ for Gaussian (Table~\ref{tab:cross_asset_supp_summary}), a $0.005$ gap that we report descriptively rather than under a formal paired test. Out-of-sample, the choice that mattered was the quarterly-rolling refit, which reduced the off-diagonal MAE from $0.209$ to $0.185$ and dominated the dependence-family choice (Table~\ref{tab:cross_asset}). The cross-asset OoS distribution was bimodal: NVDA and JPM failed OoS KS while the other four passed above $77\%$, driven by single-name regime shifts (NVDA on the 2024-2025 AI rally, JPM on the 2023 regional-bank stress). The failure count recovered to $0/6$ under quarterly refit (Table~\ref{tab:rolling_copula}).

\begin{table}[t]
\centering
\caption{\textbf{Cross-asset Student-t copula on CHMM-N marginals at $K^\star = 3$} (multi-asset construction; $\nu^\star = 6$; $200$ paths; seed $20260422$). Per-asset KS pass rate at $\alpha = 0.05$ and correlation reproduction averaged over paths. Main-paper $K^\star = 3$ matches the single-asset Table~\ref{tab:model_comparison}. Per-asset KS sits lower at $K^\star = 3$ than at higher state counts, but the cross-asset use case is dominated by the dependence layer rather than per-asset marginal fit. The comparison against SIM, Gaussian, and truncated C-vine is reported in the appendix.}
\label{tab:cross_asset}
\small
\begin{tabular}{l cc}
\toprule
& \multicolumn{2}{c}{KS (\%) $\uparrow$} \\
\cmidrule(lr){2-3}
Ticker & IS & OoS \\
\midrule
SPY   & $87.0$  & $77.5$ \\
NVDA  & $92.0$  & $62.5$ \\
JNJ   & $97.5$  & $94.0$ \\
JPM   & $97.5$  & $57.5$ \\
AAPL  & $85.0$  & $86.0$ \\
QQQ   & $83.0$  & $85.5$ \\
\midrule
Off-diag MAE IS         & \multicolumn{2}{c}{$0.027$} \\
Off-diag MAE OoS        & \multicolumn{2}{c}{$0.209$} \\
Off-diag MAE OoS (quarterly refit) & \multicolumn{2}{c}{$\mathbf{0.185}$} \\
\bottomrule
\end{tabular}
\end{table}

Having shown that the per-asset marginals compose under a dependence layer, we next back-tested a regime-conditional VaR built from the CHMM one-step-ahead state forecast and observed conditional-coverage behaviour at both VaR levels (Table~\ref{tab:cond_var}; the full four-family panel is in the appendix, Table~\ref{tab:cond_var_all_families}). The back-test covered CHMM-N and the penalised CHMM-t at $K \in \{3, 18\}$ on the SPY out-of-sample window, under the Christoffersen joint conditional-coverage and Engle-Manganelli dynamic-quantile (DQ) tests. The four CHMM rows at VaR level $\alpha = 0.05$ passed Christoffersen-cc with $p_{\text{cc}} \in [0.49, 0.68]$, and CHMM-N at $K^\star = 3$ also passed the DQ test ($p = 0.16$). Breach rates at $\alpha = 0.05$ sat close to the nominal target across the four CHMM rows, from $26$ breaches ($4.55\%$) for CHMM-N at $K = 18$ to $35$ ($6.12\%$) at $K^\star = 3$, and every CHMM row passed Kupiec and Christoffersen-ind at that level. The $\alpha = 0.01$ rows also passed Christoffersen-cc ($p_{\text{cc}} \in [0.10, 0.16]$) but were power-bounded at $T_{\text{OoS}} = 572$; the higher-power DQ test separated the state counts, rejecting CHMM-N at $K = 18$ ($p = 0.02$) while $K^\star = 3$ survived ($p = 0.06$). Against two non-state-space conditional-VaR baselines (filtered bootstrap~\cite{barone1999var} and the Engle-Manganelli symmetric-absolute-value CAViaR specification~\cite{engle2004caviar}) the CHMM rows reached approximate parity at $\alpha = 0.05$ and a strict-tail advantage for CHMM-N at $K^\star = 3$ at $\alpha = 0.01$: only CHMM-N at $K^\star = 3$ avoided DQ rejection at $\alpha = 0.01$, while the filtered bootstrap and CAViaR contenders were both rejected. When we extended the back-test to six rolling-origin walk-forward folds, Christoffersen-cc passed on $19$ of the $24$ walk-forward rows at the $5\%$ test level (Table~\ref{tab:walkforward_cond_var}), with failures concentrated on the W2 (COVID 2020) and W4 (2022 rate-hike onset) folds on which every generator in the panel was rejected (full walk-forward and refit-cadence panels, four-family ablation, and Benjamini-Hochberg~\citep{benjamini1995controlling} false-discovery-rate correction are reported in the appendix).

\begin{table}[!ht]
\centering
\small
\setlength{\tabcolsep}{4pt}
\caption{\textbf{Regime-conditional VaR back-test on SPY OoS} ($T_{\text{OoS}} = 572$, seed $20260420$). Critical values: $\chi^2_1(0.05) = 3.841$, $\chi^2_2(0.05) = 5.991$. The CHMM rows use the forward filter under IS-fixed parameters $(\mathbf T, \{\boldsymbol\theta_k\})$; only the one-step-ahead state-probability vector updates through OoS. The contender rows (filtered bootstrap and CAViaR) are scored on the same Christoffersen / DQ harness. Every CHMM row passes Kupiec, Christoffersen-ind, and Christoffersen-cc at $\alpha = 0.05$. \textsuperscript{\textdaggerdbl}\,$\alpha = 0.01$ rows are power-bounded at $T_{\text{OoS}} = 572$; the DQ test (higher power) rejects $K = 18$ at $\alpha = 0.01$ ($p = 0.02$, bold in $p_{\text{DQ}}$) and the contender rows also reject at $\alpha = 0.01$ on DQ. The full power calibration is in the appendix. \textsuperscript{\textasteriskcentered}\,DQ $p$-value at $4$-lag specification; CHMM-t penalised rows are reported with `--' in this column.}
\label{tab:cond_var}
\begin{tabular}{l c c c c c c c c c c}
\toprule
Family & $K$ & $\alpha$ & breaches & br rate & median $\widehat{\text{VaR}}_t$ & $\text{LR}_{\text{uc}}$ & $\text{LR}_{\text{ind}}$ & $\text{LR}_{\text{cc}}$ & $p_{\text{cc}}$ & $p_{\text{DQ}}$\textsuperscript{\textasteriskcentered} \\
\midrule
CHMM-N             & $3$  & $0.01$\textsuperscript{\textdaggerdbl} &$9$  & $1.57\%$ & $-4.56$ & $1.62$ & $2.36$ & $3.98$ & $0.14$ & $0.06$ \\
CHMM-N             & $3$  & $0.05$ & $35$ & $6.12\%$ & $-2.87$ & $1.41$ & $0.01$ & $1.42$ & $0.49$ & $0.16$ \\
CHMM-N             & $18$ & $0.01$\textsuperscript{\textdaggerdbl} &$9$  & $1.57\%$ & $-5.20$ & $1.62$ & $2.36$ & $3.98$ & $0.14$ & $\mathbf{0.02}$ \\
CHMM-N             & $18$ & $0.05$ & $26$ & $4.55\%$ & $-3.02$ & $0.26$ & $0.52$ & $0.78$ & $0.68$ & $0.68$ \\
CHMM-t ($\lambda{=}20$) & $3$  & $0.01$\textsuperscript{\textdaggerdbl} &$8$  & $1.40\%$ & $-5.59$ & $0.82$ & $2.81$ & $3.63$ & $0.16$ & --     \\
CHMM-t ($\lambda{=}20$) & $3$  & $0.05$ & $32$ & $5.59\%$ & $-2.73$ & $0.41$ & $0.77$ & $1.19$ & $0.55$ & --     \\
CHMM-t ($\lambda{=}20$) & $18$ & $0.01$\textsuperscript{\textdaggerdbl} &$10$ & $1.75\%$ & $-6.25$ & $2.65$ & $1.98$ & $4.62$ & $0.10$ & --     \\
CHMM-t ($\lambda{=}20$) & $18$ & $0.05$ & $29$ & $5.07\%$ & $-3.15$ & $0.01$ & $1.39$ & $1.40$ & $0.50$ & --     \\
\midrule
\multicolumn{11}{l}{\emph{Non-state-space conditional VaR contenders.}} \\
Filtered bootstrap & --   & $0.01$\textsuperscript{\textdaggerdbl} & $8$  & $1.40\%$ & $-4.81$ & $0.82$ & $2.81$ & $3.63$ & $0.16$ & $\mathbf{0.04}$ \\
Filtered bootstrap & --   & $0.05$ & $24$ & $4.20\%$ & $-2.89$ & $0.82$ & $0.84$ & $1.67$ & $0.43$ & $0.55$ \\
CAViaR (SAV)      & --   & $0.01$\textsuperscript{\textdaggerdbl} & $6$  & $1.05\%$ & $-4.82$ & $0.01$ & $3.97$ & $3.98$ & $0.14$ & $\mathbf{0.01}$ \\
CAViaR (SAV)      & --   & $0.05$ & $25$ & $4.37\%$ & $-2.99$ & $0.50$ & $0.67$ & $1.17$ & $0.56$ & $0.76$ \\
\bottomrule
\end{tabular}
\end{table}

We closed by comparing the four stress sources exercised across the study: the cross-ticker panel, the CRSP cross-decade transfer, a non-equity extension to the gold ETF GLD, and the walk-forward folds. All four reduced to the same three-way pattern (Table~\ref{tab:stationarity_scope}). A static in-sample fit held up only while the out-of-sample marginal stayed inside the range of regimes the in-sample window had already seen. On equity slices that drifted but stayed in scope, periodic refit closed the gap: the cross-ticker panel, for example, recovered from a static-fit median OoS KS of $69.1\%$ to $84.7\%$ under quarterly refit. On a slice whose marginal left the equity-return scope entirely, refit did not help: the GLD non-equity stress collapsed under static fitting and stayed broken at every cadence. The cross-decade transfer failed by regime introduction: the calm 2004 to 2006 slice sat outside the regime range of the 1994 to 2004 in-sample window, the static fit fell to a $3$ to $5\%$ pass rate (Table~\ref{tab:cross_decade_validation}), and the refit question was untestable on a slice shorter than $600$ trading days. The hardest case was a regime introduction landing inside the refit window itself, as in the W2 (COVID) and W4 (2022 rate-hike) walk-forward folds, where no cadence can anticipate a shift it has not yet seen and every generator in the panel was rejected. We therefore recommend periodic refit for live use, at a cadence chosen against the out-of-sample slice rather than against the in-sample fit alone.

\section{Discussion}
\label{sec:discussion}
The main empirical finding is that a continuous HMM trained by EM with quantile-based initialisation reproduced the three symmetric Cont stylized facts on daily US-equity returns at $K^\star = 3$, and the remaining kurtosis gap was set by the choice of heavy-tailed emission family. Two issues combine in the \citet{ryden1998stylized} low-$K$ failure: a low-state chain restricts the available decay modes, and a small Gaussian mixture cannot reproduce the empirical marginal tail. Our $K = 2$ replication on SPY confirmed that what limits the fit is the marginal distribution, not the ACF. The IS KS pass rate fell well below the $91.5\%$ for CHMM-N at $K^\star = 3$, while the absolute growth-rate ACF-MAE stayed essentially constant. With modern initialisation the ACF was reproduced at any $K \ge 2$, and the choice of $K$ was then driven by the marginal distribution. The mechanism behind the residual kurtosis gap is simple: CHMM-N at $K^\star = 3$ produced simulated IS kurtosis $3.83$ against observed $7.68$ (Table~\ref{tab:model_comparison}) because a mixture of only three Gaussians cannot reach the observed heavy tail no matter how the state weights or means are tuned. Switching to a heavy-tailed emission closed most of the gap. The shared-$\nu$ Student-$t$ ablation, our primary heavy-tailed specification at $K^\star = 3$, reached simulated IS kurtosis $4.68$ with no penalty hyperparameter (Table~\ref{tab:model_comparison}). The penalised per-state CHMM-t at $\lambda = 20$ overshot to IS kurtosis $18.87$, with $\lambda$ tuned at $K = 18$ and re-used at $K^\star = 3$ as an upper bound. Without any shape parameter, CHMM-L sat between CHMM-N and CHMM-t at IS kurtosis $5.24$. The bimodal $\hat p_k$ partition of CHMM-GED at $K = 18$, with states splitting into a Gaussian-like bulk and a Laplace-like tail, was the data-driven version of the Gaussian/Laplace hybrid one would otherwise assemble by hand-classifying states. At $K \ge 3$, then, the fit was limited by the marginal mixture's ability to reach the observed kurtosis rather than by the chain's ability to reproduce the slow ACF; the emission-family ablation showed which symmetric heavy-tailed family carried the residual gap most clearly.

One real limitation: a CHMM fit once in sample does not generalise across regime introductions. The GLD non-equity stress and the W2 (COVID 2020) and W4 (2022 rate-hike) walk-forward folds were regime introductions whose OoS marginal sat outside the range covered by any of the IS states; on those slices, OoS performance broke down (Table~\ref{tab:stationarity_scope}). Single-name equity returns are well known to carry structural breaks~\cite{pastorstambaugh2001equity, andreoughysels2002breaks, angtimmermann2012regime}, and the cross-ticker concentration of failures on names that introduced a new regime (LLY, UNH, NEM) was the same pattern at the panel level. We recommend periodic refit~\cite{pesarantimmermann2007window, cappe2011online} for non-stress equity slices. It did not save GLD or the W2 and W4 stress folds, where the IS distribution simply did not span the OoS slice and faster refit cadences did not close the gap. Bayesian online change-point detection~\cite{adams2007bayesian} and the particle-filter recursion of \citet{fearnhead2007online} detect a regime introduction rather than absorbing it into a static fit; integrating online detection with the CHMM is a natural follow-up. The implication for $K$-selection is that the choice should be driven by the marginal distribution, not the ACF. Held-out KS, log-likelihood, and BIC were the criteria that actually moved with $K$ on held-out data, while stress-fold rejections reflected what the out-of-sample slice contained rather than a flaw in the in-sample fit.

Periodic refit handled non-stress drift but not regime introductions; two further scope limits remain, the stylized facts we evaluated against and the model size at which we compete with deep generators. The \citet{cont2001empirical} stylized-fact list also includes the leverage effect, the negative $\mathrm{Corr}(G_t, |G_{t+1}|)$ characteristic of equity returns; we evaluated against only the three symmetric stylized facts and leave the leverage effect to future work, since symmetric emissions cannot reproduce a negative cross-correlation by construction. On parametric complexity, CHMM-N at $K^\star = 3$ has $K(K-1) + 2K = 12$ free parameters, rising to $342$ at the $K = 18$ sensitivity ceiling (CHMM-t adds $K$ degrees-of-freedom parameters, CHMM-GED adds $K$ shape parameters), whereas the convolutional generator-critic architecture of \citet{wiese2020quantgan} sits orders of magnitude higher. The CHMM therefore competes against deep generators at a parameter budget small enough to make the fitted-transition rank and emission-separation conditions inspectable, alongside posterior diagnostics and the closed-form eigenvalue argument that motivates our $K^\star$ choice. The same budget also keeps each per-state emission's location, scale, and shape directly inspectable as an economic regime, whereas a deep generator's latent representation does not separate into interpretable regimes. These limits are addressable within the unified four-family framework: asymmetric emissions, a time-varying transition matrix, and richer dependence structures.

\section{Conclusion}
\label{sec:conclusion}
A continuous HMM trained by Baum-Welch with quantile-based initialisation reproduces the three symmetric Cont stylized facts on daily US-equity returns. The standard mixture-of-eigenvalues identity for the absolute growth-rate ACF~\cite{hamilton1994time, krolzig1997markov, timmermann2000moments} expresses the \citet{ryden1998stylized} low-$K$ limitation as a rank condition on $\mathbf T - \mathbf 1\bar{\boldsymbol\pi}^\top$; that condition was not empirically active at the cross-ticker median once $K \ge 3$, while marginal flexibility explained more of the remaining fit gap under the diagnostics used here. This result does not remove the finite-state restriction to finitely many geometric ACF modes. The unified EM framework shares the forward-backward recursion and transition update across variants and localises their differences to the emission density and its M-step update: Gaussian and Laplace use closed-form updates, while Student-$t$ and GED use ECM-style numerical block updates. The shared-$\nu$ Student-$t$ variant narrowed the kurtosis gap at $K^\star = 3$ to $4.68$ against observed $7.68$ without a tuning hyperparameter (Table~\ref{tab:model_comparison}); CHMM-L sat between Gaussian and Student-$t$ without any shape parameter; and CHMM-GED's bimodal $\hat p_k$ partition recovered the Gaussian-bulk / Laplace-tail split directly from the data. The regime-conditional Value-at-Risk, built by averaging the family-appropriate predictive density over the one-step-ahead state forecast, passed the Christoffersen joint conditional-coverage test at $\alpha = 0.05$ on the main OoS window and on $19/24$ walk-forward rows (Tables~\ref{tab:cond_var_all_families} and~\ref{tab:walkforward_cond_var}). The strict-tail Engle-Manganelli test favoured CHMM-N at $K^\star = 3$ over filtered-bootstrap and CAViaR comparators (Table~\ref{tab:cond_var}). Rejections concentrated on stress folds where the regime moved outside the training distribution, and every generator in the panel rejected on those folds. The main result is bounded in scope. The i.i.d.\ bootstrap and a maximum-likelihood HSMM-N benchmark beat the CHMM on raw single-window OoS Kolmogorov-Smirnov, while the CHMM directly supports the regime-conditional VaR and multi-asset copula composition evaluated here. No formal privacy guarantee is claimed for the synthetic output. The dataset scope is daily US equities under stationary OoS (Table~\ref{tab:stationarity_scope}); the GLD non-equity stress and the W2 and W4 walk-forward folds collapsed under static fitting, and we recommend periodic refit for production use.

These results motivate four extensions. Asymmetric or skewed emission families would extend our coverage of the symmetric stylized facts to the leverage effect of the \citet{cont2001empirical} list, adding a shape parameter that targets the third moment rather than the fourth and complementing the per-state $p_k$ ablation of CHMM-GED. A time-varying transition matrix, estimated by rolling refit, particle filtering, or Bayesian online updates~\cite{cappe2011online}, would address the W2 (COVID 2020) and W4 (2022 rate-hike onset) stress folds and the $5/24$ walk-forward rejections at the $5\%$ level that the static IS fit leaves open (Table~\ref{tab:walkforward_cond_var}). Scaling the multi-asset copula composition beyond the six-asset and 30-ticker panels through factor or vine structures would extend per-asset distributional fidelity to portfolio-level synthesis; the multi-asset construction already separates the per-asset marginals from the dependence layer, so either scaling path fits without rewriting the CHMM component. Embedding the CHMM-mixed predictive in a loop for portfolio optimisation or tail-risk budgeting would give an end-to-end test of the regime-conditional VaR as a live trading signal, with the like-for-like walk-forward $K^\star = 3$ versus $K = 18$ trade-off (median OoS KS $62.1\%$ versus $67.7\%$; Tables~\ref{tab:model_comparison} and~\ref{tab:walkforward}) as the natural sensitivity check. Each extension targets one limitation of the present empirical scope and can be incorporated within the unified four-family estimation framework.

\medskip
\noindent\textbf{Conflict of Interest Statement.} The authors declare that the research was conducted without any commercial or financial relationships that could potentially create a conflict of interest.

\medskip
\noindent\textbf{Author Contributions.} J.V.\ directed the study. A.A.\ developed the continuous model and simulation code, implemented Baum-Welch, conducted the in-sample and out-of-sample analysis, and generated all figures and tables. C.J.\ contributed to EM and numerical block-update methodology review and to validation of the cross-decade and cross-ticker pipelines. All authors edited and reviewed the final manuscript.

\medskip
\noindent\textbf{Data Availability Statement.} The simulation scripts, derived results, and code to reproduce the results in this paper are available under an MIT license from the paper repository: \url{https://github.com/varnerlab/CHMM-Paper-Repository}. The core model logic (fitting, simulation, decoding, and validation) is implemented in the companion Julia repository \texttt{CHMM-Model-Repository}: \url{https://github.com/varnerlab/CHMM-Model-Repository}. Daily prices for the 2014--2026 windows are sourced from Polygon.io via Massive (2014-01-03 to 2025-11-18) and Alpaca/IEX (extension through 2026-04-20). The cross-decade panel uses CRSP data under its institutional license; those source data are not redistributed.

% ========================================================================================= %
\bibliographystyle{unsrtnat}
\bibliography{references}

% ========================================================================================= %
\appendix

\renewcommand{\thetable}{S\arabic{table}}
\renewcommand{\thefigure}{S\arabic{figure}}
\setcounter{table}{0}
\setcounter{figure}{0}

\section{Supplementary Material}
\label{sec:supplementary}
This appendix collects supplementary material referenced in the main text: the public API of the companion package, the EM forward-backward recursions and algorithm pseudocode, and the validation metric definitions. It also reports the formal propositions, the full state-resolution and multi-emission sensitivity panels, robustness and KS-power calibration, the extended GARCH and SM-CHMM baselines, the full cross-asset panel, and the $K$-selection and $\nu_k$ diagnostics referenced from the discussion.

\subsection{CHMM Algorithms and Derivations}
\label{sec:supp_algorithms}

This appendix collects the algorithmic content for the CHMM family: the forward-backward recursions and weighted M-step updates, the complete training pseudocode, and per-state degrees-of-freedom diagnostics for CHMM-t.

\paragraph{Forward-backward recursions and M-step updates.}
For reference, the Baum-Welch description in the main text is underpinned by the standard log-space recursions~\cite{rabiner1986introduction, bilmes1998gentle}.
Define the forward variable $\alpha_t(k) = \Prob(O_{1:t}, S_t = k \mid \mathcal{M})$ and the backward variable $\beta_t(k) = \Prob(O_{t+1:T} \mid S_t = k, \mathcal{M})$.
The log-space recursions are
\begin{align}
    \log \alpha_1(k) &= \log \pi_k + \log f_k(O_1), \label{eq:forward_init} \\
    \log \alpha_t(j) &= \underset{i}{\text{logsumexp}}\!\left(\log \alpha_{t-1}(i) + \log T_{ij}\right) + \log f_j(O_t), \quad t = 2, \ldots, T, \label{eq:forward} \\[4pt]
    \log \beta_T(k) &= 0, \label{eq:backward_init} \\
    \log \beta_t(i) &= \underset{j}{\text{logsumexp}}\!\left(\log T_{ij} + \log f_j(O_{t+1}) + \log \beta_{t+1}(j)\right), \quad t = T{-}1, \ldots, 1, \label{eq:backward}
\end{align}
where $\text{logsumexp}_i(x_i) = x_{\max} + \log \sum_i \exp(x_i - x_{\max})$.
The posterior state-occupancy $\gamma_t(k)$ and pair-occupancy $\xi_t(i, j)$ quantities are
\begin{equation}
    \gamma_t(k) = \frac{\alpha_t(k) \beta_t(k)}{\sum_j \alpha_t(j) \beta_t(j)}, \qquad
    \xi_t(i, j) = \frac{\alpha_t(i)\, T_{ij}\, f_j(O_{t+1})\, \beta_{t+1}(j)}{\sum_{m,n} \alpha_t(m)\, T_{mn}\, f_n(O_{t+1})\, \beta_{t+1}(n)},
    \label{eq:gamma_xi}
\end{equation}
and the closed-form M-step updates are
\begin{equation}
    \pi_k^{\text{new}} = \gamma_1(k),\quad
    \mu_k^{\text{new}} = \frac{\sum_t \gamma_t(k)\, O_t}{\sum_t \gamma_t(k)},\quad
    \left(\sigma_k^{\text{new}}\right)^2 = \frac{\sum_t \gamma_t(k)(O_t - \mu_k^{\text{new}})^2}{\sum_t \gamma_t(k)},\quad
    T_{ij}^{\text{new}} = \frac{\sum_{t=1}^{T-1} \xi_t(i, j)}{\sum_{t=1}^{T-1} \gamma_t(i)}.
    \label{eq:mstep}
\end{equation}
The convergence criterion is $|\mathcal{L}^{(n)} - \mathcal{L}^{(n-1)}| < \epsilon$ with $\epsilon = 10^{-4}$, where $\mathcal{L}^{(n)} = \text{logsumexp}_k \log \alpha_T(k)$ is read from the forward pass of iteration $n$, that is, at the parameters that entered the iteration before its M-step. This is the standard Baum-Welch observed-data log-likelihood; it is monotone across iterations and lags the returned parameters by one M-step, so the stopping decision is unaffected. Simulation draws $S_1$ from the stationary distribution $\bar{\boldsymbol\pi}$ of $\hat{\mathbf T}$ and then iterates $G_t \sim \mathcal{N}(\mu_{S_t}, \sigma_{S_t}^2)$, $S_{t+1} \sim \text{Categorical}(\mathbf{T}_{S_t, \cdot})$ with no jump process.

\paragraph{CHMM-t M-step (per-state ECM/ECME, closed form given $u_{t,k}$).}
For Student-t emissions $f_k(x) = t_{\nu_k}(x; \mu_k, \sigma_k)$ the ECM formulation of \citet{peel2000robust, liu1995ml} treats the latent precision as an augmented E-step quantity,
\begin{equation}
    u_{t,k} = \frac{\nu_k + 1}{\nu_k + ((O_t - \mu_k)/\sigma_k)^2},
    \label{eq:tprecision_supp}
\end{equation}
so that conditional on $\nu_k$ the location and scale admit closed-form weighted updates,
\begin{equation}
    \mu_k^{\text{new}} = \frac{\sum_t \gamma_t(k)\, u_{t,k}\, O_t}{\sum_t \gamma_t(k)\, u_{t,k}},\quad
    \left(\sigma_k^{\text{new}}\right)^2 = \frac{\sum_t \gamma_t(k)\, u_{t,k}\, (O_t - \mu_k^{\text{new}})^2}{\sum_t \gamma_t(k)}.
    \label{eq:tupdate_supp}
\end{equation}
The degrees-of-freedom parameter $\nu_k$ is then refined by a 40-iteration one-dimensional golden-section search on the per-state observed-data objective $Q_k(\nu) = \sum_t \gamma_t(k)\, \log t_\nu(O_t; \mu_k, \sigma_k)$ over the bracket $(\nu_{\min}, \nu_{\max}) = (2.1, 50)$. Maximising this marginal state-weighted log-likelihood for $\nu_k$, rather than the augmented-data $u$-functional, is the ECME step of \citet{liu1995ml}.
Exact maximisation of this block would preserve the standard ECME observed-data ascent property. Because the implementation uses a finite numerical search, we treat this as an approximate block update and diagnose convergence from the observed-data log-likelihood trace rather than assert exact monotonicity at every finite-precision iteration.
We extend this M-step with an optional exponential shrinkage prior on $1/\nu_k$, corresponding to the penalised objective
\begin{equation}
    Q_k^{\text{pen}}(\nu) \;=\; \sum_t \gamma_t(k)\, \log t_\nu(O_t; \mu_k, \sigma_k) \;-\; \lambda \, / \, \nu,
    \label{eq:nu_pen}
\end{equation}
maximised by the same golden-section search ($\lambda = 0$ recovers the unpenalised \citet{peel2000robust, liu1995ml} ECME $\nu$ update; the rate sweep and its effect on simulated IS kurtosis are in the main text Discussion).

\paragraph{CHMM-L M-step (closed-form weighted Laplace MLE).}
For Laplace emissions $f_k(x) = \tfrac{1}{2 b_k} \exp(-|x - \mu_k|/b_k)$, with the per-state Laplace scale denoted $b_k$ to avoid collision with the backward variable $\beta_t$, the weighted-MLE estimators are available in closed form:
\begin{equation}
    \mu_k^{\text{new}} = \text{WeightedMedian}\!\left(O_{1:T};\, \gamma_{1:T}(k)\right),\qquad
    b_k^{\text{new}} = \frac{\sum_t \gamma_t(k)\, |O_t - \mu_k^{\text{new}}|}{\sum_t \gamma_t(k)}.
    \label{eq:lupdate_supp}
\end{equation}
The weighted median is the first order statistic whose cumulative posterior weight reaches half the total, the minimiser of the weighted $L^1$ objective $\sum_t \gamma_t(k)\,|O_t - \mu|$; when the half-weight falls exactly between two adjacent order statistics, every point in the closed interval between them is a minimiser.
The Laplace M-step carries no shape parameter, so CHMM-L is strictly cheaper per iteration than CHMM-t (which requires the per-state $Q_k$ search).

\paragraph{CHMM-GED M-step (three-stage ECM-style update with closed-form scale).}
For Generalized Error Distribution emissions $f_k(x) = \mathrm{GED}(x; \mu_k, \alpha_k, p_k) = \frac{p_k}{2 \alpha_k \Gamma(1/p_k)} \exp\!\big(-(|x - \mu_k|/\alpha_k)^{p_k}\big)$ the M-step decomposes into three block updates per state. Given current $(\mu_k^{(n)}, \alpha_k^{(n)}, p_k^{(n)})$, the first block updates the location by minimising the per-state weighted $L^{p_k}$ loss,
\begin{equation}
    \mu_k^{(n+1)} \;=\; \operatorname{GoldenSearchMin}_{\mu \in I_k} \;\sum_t \gamma_t(k)\, \big|O_t - \mu\big|^{p_k^{(n)}},
    \qquad I_k \;=\; \big[\min_t O_t,\, \max_t O_t\big],
    \label{eq:ged_mu}
\end{equation}
by a $40$-iteration golden-section search; CM-2 updates the scale in closed form given $(\mu_k^{(n+1)}, p_k^{(n)})$,
\begin{equation}
    \alpha_k^{(n+1)} \;=\; \left[\frac{p_k^{(n)}}{W_k} \sum_t \gamma_t(k)\, \big|O_t - \mu_k^{(n+1)}\big|^{p_k^{(n)}}\right]^{1 / p_k^{(n)}}, \qquad W_k \;=\; \sum_t \gamma_t(k);
    \label{eq:ged_alpha}
\end{equation}
and CM-3 updates the shape by maximising the per-state $Q$-function over the bracket,
\begin{equation}
    p_k^{(n+1)} \;=\; \operatorname{GoldenSearchMax}_{p \in [p_{\min},\, p_{\max}]} \;\sum_t \gamma_t(k)\, \log f_k\!\big(O_t;\, \mu_k^{(n+1)},\, \alpha_k^{(n+1)},\, p\big),
    \label{eq:ged_p}
\end{equation}
again by a $40$-iteration golden-section search over $[p_{\min}, p_{\max}] = [0.5, 3.0]$. The scale update is the exact conditional maximiser. The location objective is convex when $p_k \ge 1$ but can be nonconvex when $p_k < 1$, and both bracketed searches are finite numerical approximations. Accordingly, we use ``ECM-style'' for this three-block procedure and assess convergence from the observed-data log-likelihood trace; the exact ECM monotonicity result~\citep{mengrubin1993ecm} would require every block to attain a conditional maximum. At the boundary cases, $p_k = 2$ recovers Gaussian (with $\sigma = \alpha/\sqrt{2}$) and $p_k = 1$ recovers Laplace (with $b = \alpha$).

\paragraph{Algorithm descriptions.}
The unified EM procedure shared across all four emission families is given in Algorithm~\ref{alg:chmm_em}: a shared framework (quantile-based initialisation, log-space forward-backward, transition update) plus four family-specific branches at lines~\ref{line:init}, \ref{line:emit}, and \ref{line:mstep}. The simulator is given in Algorithm~\ref{alg:chmm_simulate}, and the cross-asset rank-reordering simulator for the multi-asset construction in Algorithm~\ref{alg:copula_sim}. Viterbi and SIM-resampling pseudocode are provided in the companion code repository; both are textbook constructions and we omit them from the appendix.

% ========================================================================================= %
% Algorithm 1: Baum-Welch (EM) for Continuous Gaussian HMM
% ========================================================================================= %
% Unified EM algorithm with family-specific M-step switch
% ========================================================================================= %
\begin{breakablealgorithm}
\caption{Unified EM algorithm for the CHMM family. Lines~\ref{line:init}, \ref{line:emit}, and \ref{line:mstep} branch on the emission family $F \in \{\mathcal{N}, t, L, \text{GED}\}$; every other line is identical across families. The CHMM-N and CHMM-L variants instantiate classical EM; CHMM-t and CHMM-GED use ECM-style block updates whose shape step is a finite one-dimensional golden-section search. Hyperparameter defaults used throughout this paper: convergence tolerance $\epsilon = 10^{-4}$ on $\Delta \mathcal L$, $\texttt{max\_iter} = 60$, $40$-iteration golden-section sub-step on every bracketed update, CHMM-t bracket $(\nu_{\min}, \nu_{\max}) = (2.1, 50)$ with initial $\nu^{(0)} = 6$, CHMM-GED bracket $(p_{\min}, p_{\max}) = (0.5, 3.0)$ with initial $p^{(0)} = 1.5$ and a $\mu_k$ search interval widened to cover the observed data range.}
\label{alg:chmm_em}
\footnotesize
\setlength{\baselineskip}{0.95\baselineskip}
\begin{algorithmic}[1]
\REQUIRE Observations $O_{1:T}$, number of states $K$, tolerance $\epsilon$, $\texttt{max\_iter}$, emission family $F \in \{\mathcal{N}, t, L, \text{GED}\}$, if $F = t$ bracket $(\nu_{\min}, \nu_{\max})$ and initial $\nu^{(0)}$, if $F = \text{GED}$ bracket $(p_{\min}, p_{\max})$ and initial $p^{(0)}$.
\ENSURE Transition matrix $\mathbf{T}$, per-state emission parameters $\boldsymbol\theta_{1:K}$, initial distribution $\boldsymbol{\pi}$.

\STATE \textbf{Quantile-Based Initialization.}
\STATE Sort $O^{(\text{sort})} \leftarrow \text{sort}(O_{1:T})$ and partition into $K$ equal chunks $\{C_1,\ldots,C_K\}$.
\FOR{$k = 1$ to $K$} \label{line:init}
    \STATE \textbf{switch} $F$
    \STATE \quad $F = \mathcal{N}$: $\mu_k \leftarrow \text{mean}(C_k)$, \quad $\sigma_k \leftarrow \max(\text{std}(C_k), 10^{-6})$.
    \STATE \quad $F = t$: $\mu_k \leftarrow \text{mean}(C_k)$, \quad $\sigma_k \leftarrow \max(\text{std}(C_k), 10^{-6})$, \quad $\nu_k \leftarrow \nu^{(0)}$.
    \STATE \quad $F = L$: $\mu_k \leftarrow \text{median}(C_k)$, \quad $b_k \leftarrow \max(\text{mean}(|C_k - \mu_k|), 10^{-6})$. \COMMENT{Laplace scale $b_k$.}
    \STATE \quad $F = \text{GED}$: $\mu_k \leftarrow \text{mean}(C_k)$, \quad $\alpha_k \leftarrow \max(\text{std}(C_k), 10^{-6})$, \quad $p_k \leftarrow p^{(0)}$.
\ENDFOR
\STATE $T_{ij} \leftarrow 1/K$; \quad $\pi_k \leftarrow 1/K$; \quad $\mathcal{L}_{\text{prev}} \leftarrow -\infty$.

\STATE \textbf{EM Loop.}
\FOR{$n = 1$ to $\texttt{max\_iter}$}

    \STATE \textbf{E-Step, per-family emission log-likelihood.} \label{line:emit}
    \STATE \textbf{switch} $F$
    \STATE \quad $F = \mathcal{N}$: $\log B_{t,k} \leftarrow \log \mathcal{N}(O_t \mid \mu_k, \sigma_k^2)$.
    \STATE \quad $F = t$: $\log B_{t,k} \leftarrow \log t_{\nu_k}(O_t; \mu_k, \sigma_k)$.
    \STATE \quad $F = L$: $\log B_{t,k} \leftarrow -\log(2 b_k) - |O_t - \mu_k|/b_k$.
    \STATE \quad $F = \text{GED}$: $\log B_{t,k} \leftarrow \log p_k - \log(2 \alpha_k) - \log \Gamma(1/p_k) - (|O_t - \mu_k|/\alpha_k)^{p_k}$.

    \STATE \textbf{E-Step, shared log-space forward-backward.}
    \STATE $\log \alpha_1(k) \leftarrow \log \pi_k + \log B_{1,k}$.
    \FOR{$t = 2$ to $T$, $j = 1$ to $K$}
        \STATE $\log \alpha_t(j) \leftarrow \underset{i}{\text{logsumexp}}\!\left(\log \alpha_{t-1}(i) + \log T_{ij}\right) + \log B_{t,j}$.
    \ENDFOR
    \STATE $\log \beta_T(k) \leftarrow 0$.
    \FOR{$t = T-1$ down to $1$, $i = 1$ to $K$}
        \STATE $\log \beta_t(i) \leftarrow \underset{j}{\text{logsumexp}}\!\left(\log T_{ij} + \log B_{t+1,j} + \log \beta_{t+1}(j)\right)$.
    \ENDFOR
    \STATE Compute $\gamma_t(k)$ and $\xi_t(i,j)$ from equation~\eqref{eq:gamma_xi}.

    \STATE \textbf{Shared transition update.} $T_{ij} \leftarrow \sum_t \xi_t(i,j) \;/\; \sum_t \gamma_t(i)$; $\pi_k \leftarrow \gamma_1(k)$.

    \STATE \textbf{M-Step, per-family emission update.} \label{line:mstep}
    \STATE \textbf{switch} $F$
    \STATE \quad $F = \mathcal{N}$ \COMMENT{classical Baum-Welch~\citep{baum1970maximization}.}
    \STATE \qquad $\mu_k \leftarrow \sum_t \gamma_t(k)\, O_t / \sum_t \gamma_t(k)$.
    \STATE \qquad $\sigma_k \leftarrow \max\!\left(\sqrt{\sum_t \gamma_t(k) (O_t - \mu_k)^2 / \sum_t \gamma_t(k)},\; 10^{-6}\right)$.
    \STATE \quad $F = t$ \COMMENT{ECM/ECME of \citet{peel2000robust, liu1995ml}; closed-form $(\mu_k, \sigma_k)$ given $\nu_k$, then ECME 1D search on $\nu_k$.}
    \STATE \qquad $u_{t,k} \leftarrow (\nu_k + 1) / (\nu_k + ((O_t - \mu_k)/\sigma_k)^2)$ for all $t, k$.
    \STATE \qquad $\mu_k \leftarrow \sum_t \gamma_t(k)\, u_{t,k}\, O_t / \sum_t \gamma_t(k)\, u_{t,k}$.
    \STATE \qquad $\sigma_k \leftarrow \max\!\left(\sqrt{\sum_t \gamma_t(k)\, u_{t,k}\,(O_t - \mu_k)^2 / \sum_t \gamma_t(k)},\; 10^{-6}\right)$.
    \STATE \qquad $\nu_k \leftarrow \operatorname{GoldenSearchMax}_{\nu \in [\nu_{\min}, \nu_{\max}]} \sum_t \gamma_t(k)\, \log t_\nu(O_t; \mu_k, \sigma_k)$ \COMMENT{40 iterations.}
    \STATE \quad $F = L$ \COMMENT{closed-form weighted Laplace MLE.}
    \STATE \qquad $\mu_k \leftarrow \text{WeightedMedian}(O_{1:T},\, \gamma_{1:T}(k))$ \COMMENT{first order statistic with cumulative weight $\ge W_k/2$.}
    \STATE \qquad $b_k \leftarrow \max\!\left(\sum_t \gamma_t(k)\, |O_t - \mu_k| / \sum_t \gamma_t(k),\; 10^{-6}\right)$.
    \STATE \quad $F = \text{GED}$ \COMMENT{three-stage ECM-style update: bracketed $\mu_k$, closed-form $\alpha_k$, bracketed $p_k$.}
    \STATE \qquad $\mu_k \leftarrow \operatorname{GoldenSearchMin}_{\mu \in I_k} \sum_t \gamma_t(k)\, |O_t - \mu|^{p_k}$ \COMMENT{40 iterations on $I_k = [\min_t O_t, \max_t O_t]$.}
    \STATE \qquad $\alpha_k \leftarrow \max\!\Big(\big[(p_k / W_k) \sum_t \gamma_t(k)\, |O_t - \mu_k|^{p_k}\big]^{1/p_k},\; 10^{-6}\Big)$ where $W_k = \sum_t \gamma_t(k)$ \COMMENT{closed form, unique zero of $\partial Q_k / \partial \alpha$.}
    \STATE \qquad $p_k \leftarrow \operatorname{GoldenSearchMax}_{p \in [p_{\min}, p_{\max}]} \sum_t \gamma_t(k)\, \log \mathrm{GED}(O_t; \mu_k, \alpha_k, p)$ \COMMENT{40 iterations.}

    \STATE \textbf{Convergence check.} $\mathcal{L}^{(n)} \leftarrow \underset{k}{\text{logsumexp}}(\log \alpha_T(k))$ \COMMENT{from the current E-step forward pass, the pre-M-step iterate; lags the returned parameters by one M-step.}
    \IF{$|\mathcal{L}^{(n)} - \mathcal{L}_{\text{prev}}| < \epsilon$} \STATE \textbf{break} \ENDIF
    \STATE $\mathcal{L}_{\text{prev}} \leftarrow \mathcal{L}^{(n)}$.
\ENDFOR

\RETURN $\mathbf{T}, \boldsymbol\theta_{1:K}, \boldsymbol{\pi}$ \COMMENT{$\boldsymbol\theta_k = (\mu_k, \sigma_k)$ for $F = \mathcal{N}$; $(\mu_k, \sigma_k, \nu_k)$ for $F = t$; $(\mu_k, b_k)$ for $F = L$; $(\mu_k, \alpha_k, p_k)$ for $F = \text{GED}$.}
\end{algorithmic}
\end{breakablealgorithm}

% Viterbi pseudocode omitted (textbook construction; reference implementation in companion code repository).

% ========================================================================================= %
% Algorithm 3: CHMM Simulation
% ========================================================================================= %
\begin{algorithm}[tp]
\caption{CHMM simulation of synthetic excess growth rate paths (unified four-family framework). The emission draw at line~\ref{line:simemit} branches on the family $F \in \{\mathcal{N}, t, L, \text{GED}\}$ exactly as in Algorithm~\ref{alg:chmm_em}: $\mathcal{N}(\mu_k, \sigma_k^2)$ for CHMM-N, $t_{\nu_k}(\mu_k, \sigma_k)$ for CHMM-t, $\mathrm{Laplace}(\mu_k, b_k)$ for CHMM-L, $\mathrm{GED}(\mu_k, \alpha_k, p_k)$ for CHMM-GED; the latent chain and transition step are shared.}
\label{alg:chmm_simulate}
\small
\begin{algorithmic}[1]
\REQUIRE Trained CHMM $\mathcal{M} = (\mathbf{T}, \{\boldsymbol\theta_k\}, \boldsymbol{\pi}, F)$ with $\boldsymbol\theta_k = (\mu_k, \sigma_k)$ for $F = \mathcal{N}$; $(\mu_k, \sigma_k, \nu_k)$ for $F = t$; $(\mu_k, b_k)$ for $F = L$; $(\mu_k, \alpha_k, p_k)$ for $F = \mathrm{GED}$. Number of steps $M$, Number of paths $P$
\ENSURE Simulated growth-rate paths $\{\hat{G}^{(p)}_{1:M}\}_{p=1}^{P}$

\FOR{$p = 1$ to $P$}
    \STATE Sample initial state $S_1$ from the stationary distribution of $\hat{\mathbf T}$.
    \FOR{$t = 1$ to $M$}
        \STATE Emit observation $\hat{G}_t^{(p)} \sim f_{S_t}(\,\cdot\,;\, \boldsymbol\theta_{S_t})$ with $f_k$ the $F$-family per-state density. \label{line:simemit}
        \IF{$t < M$}
            \STATE Transition to next state $S_{t+1} \sim \text{Categorical}(\mathbf{T}_{S_t, \cdot})$.
        \ENDIF
    \ENDFOR
\ENDFOR

\RETURN $\{\hat{G}^{(p)}_{1:M}\}_{p=1}^{P}$
\end{algorithmic}
\end{algorithm}

% ========================================================================================= %
% Algorithm 4: Cross-Asset Copula Rank Reordering Simulation
% ========================================================================================= %
\begin{algorithm}[tp]
\caption{Cross-asset copula simulation via rank reordering}
\label{alg:copula_sim}
\small
\begin{algorithmic}[1]
\REQUIRE Fitted per-asset CHMMs $\{\mathcal{M}_j\}_{j=1}^d$, correlation matrix $\Sigma$ (positive semi-definite (PSD), unit diagonal), copula family $\mathcal{C} \in \{\text{Gaussian}, t_\nu\}$, path length $T$, number of paths $P$
\ENSURE Cross-asset path tensor $\{\hat{G}^{(p)}_{j,t}\}$ of shape $T \times d \times P$

\FOR{$p = 1$ to $P$}

    \STATE Draw copula sample $\mathbf{U}^{(p)} \in [0,1]^{T \times d}$ from $\mathcal{C}(\Sigma)$:
    \STATE \quad $L \leftarrow \text{Cholesky}(\Sigma)$, \; $Z \in \mathbb{R}^{T \times d} \sim \mathcal{N}(0, I_d)$ i.i.d.
    \STATE \quad $Y \leftarrow Z L^\top$
    \IF{$\mathcal{C} = $ Gaussian}
        \STATE \quad $\mathbf{U}^{(p)} \leftarrow \Phi(Y)$ componentwise
    \ELSE
        \STATE \quad Draw $W \sim \chi^2_\nu / \nu$ i.i.d. for $t = 1, \dots, T$; \; $X_{t,\cdot} \leftarrow Y_{t,\cdot} / \sqrt{W_t}$
        \STATE \quad $\mathbf{U}^{(p)} \leftarrow t_\nu(X)$ componentwise
    \ENDIF

    \FOR{$j = 1$ to $d$}
        \STATE Simulate $\tilde{g}_{j,1:T}^{(p)}$ from $\mathcal{M}_j$ via Algorithm~\ref{alg:chmm_simulate}.
        \STATE Compute order statistics $\tilde{g}_{j,(1)}^{(p)} \leq \cdots \leq \tilde{g}_{j,(T)}^{(p)}$.
        \STATE Compute ranks $r_{j,t}^{(p)} \leftarrow \text{rank}(U_{j,t}^{(p)})$ for $t = 1, \dots, T$.
        \STATE $\hat{G}_{j,t}^{(p)} \leftarrow \tilde{g}_{j,(r_{j,t}^{(p)})}^{(p)}$ for $t = 1, \dots, T$.
    \ENDFOR
\ENDFOR

\RETURN $\{\hat{G}^{(p)}_{j,t}\}$
\end{algorithmic}
\end{algorithm}

% SIM pseudocode omitted (standard rank-one factor regression with bootstrap residual resampling;
% reference implementation in companion code repository).

\paragraph{Per-state shape diagnostics: CHMM-t and CHMM-GED.}
This appendix backs the main-text discussion of the per-state shape allocation under CHMM-t and CHMM-GED on SPY IS at $K = 18$ (seed = $20260420$).

\paragraph{CHMM-t per-state $\nu_k$.}
The per-state $\nu_k$ histogram for the $K = 18$ CHMM-t fit is plotted in Figure~\ref{fig:nu_hist}.
Thirteen of the eighteen states rested at the upper ECM bracket ($\nu_k = 50$), and only two states lay near the lower bracket ($\nu_2 = 2.19$, $\nu_4 = 2.10$).
The simulated IS mixture kurtosis was driven by the posterior mass routed to those two very heavy-tailed states, not by a systemic collapse to the lower bracket.

\begin{figure}[!ht]
\centering
\includegraphics[width=0.65\textwidth]{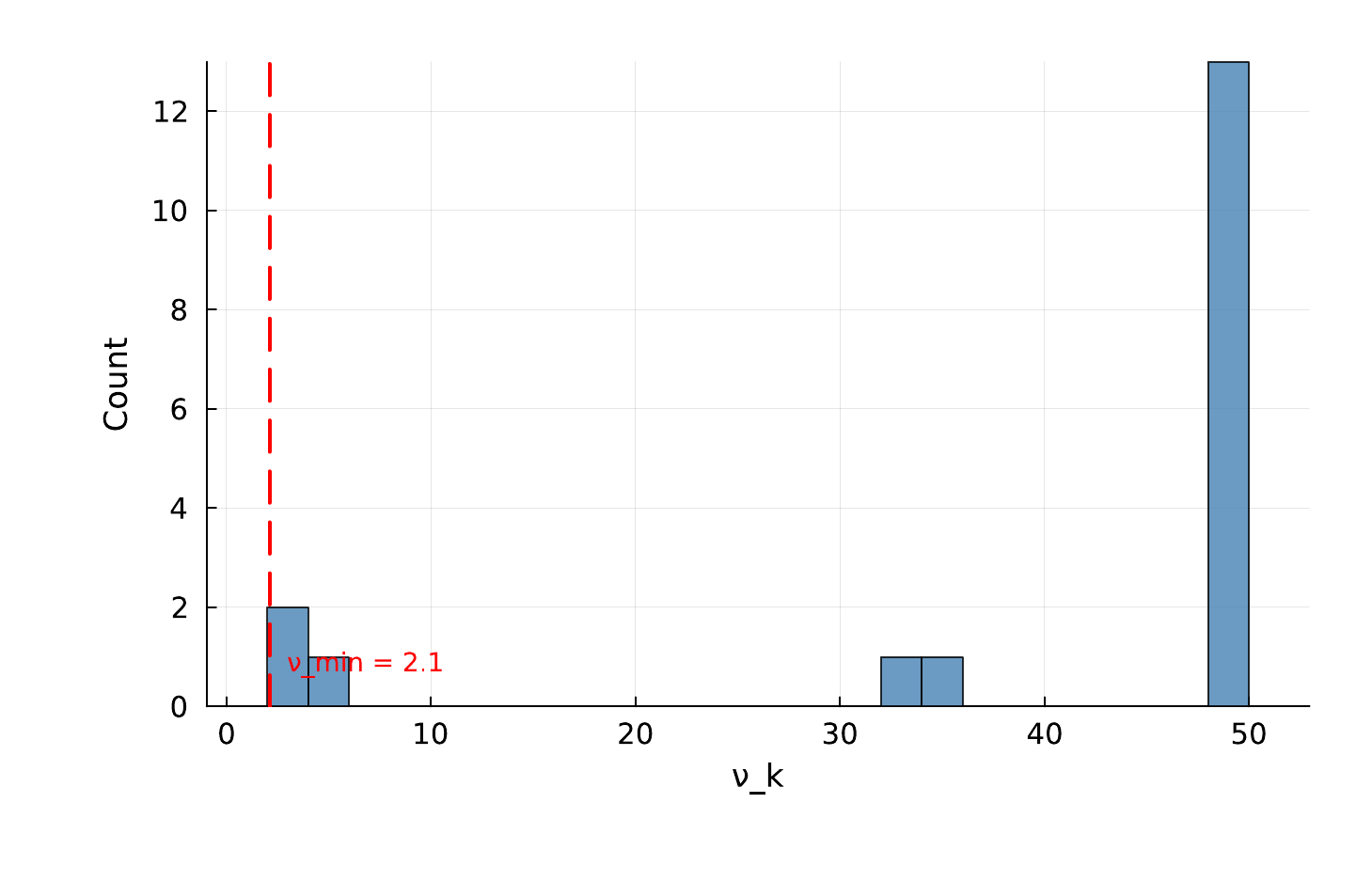}
\caption{Per-state degrees-of-freedom $\nu_k$ for the CHMM-t fit at $K = 18$ on SPY IS. The dashed red line marks the lower ECM bracket $\nu_{\min} = 2.1$. Only two states ($\nu_2 = 2.19$, $\nu_4 = 2.10$) sit near the lower bracket; the median $\nu_k$ is at the upper bracket.}
\label{fig:nu_hist}
\end{figure}

\paragraph{CHMM-GED per-state $p_k$ (Gaussian-bulk / Laplace-tail bimodality).}
The generator comparison reports and discusses the $\hat p_k$ partition (Figure~\ref{fig:p_hist}). Briefly, the $K = 18$ CHMM-GED fit on SPY IS was bimodal: eleven states attained exactly $p_k = 3.0$ (thirteen were Gaussian-like with $p_k \ge 1.85$), one state landed at intermediate $p_k = 1.6$, and four states clustered in the Laplace-shape regime $p_k \in [0.86, 1.24]$, concentrated in the high-volatility ranks. The smallest fitted $p_k$ on SPY was $0.86$, well clear of every $p_{\min}$ tested in the bracket sweep, so the lower bracket was non-binding. The bimodal partition replicated across $10$ Monte Carlo seeds and across the cross-asset universe (full robustness panel in the cross-asset robustness appendix).

\paragraph{Bracket-sensitivity sweeps.}
A CHMM-t bracket sweep on $\nu_{\min} \in \{2.1, 2.5, 3.0, 4.0, \infty\}$ at $K = 18$ showed that the kurtosis overshoot is a feature of the Student-$t$ emission at $K = 18$, not a lower-bracket artefact. The simulated IS kurtosis stayed in a heavy-tailed band across finite brackets and collapsed only in the full Gaussian limit. The analogous CHMM-GED sweep on $p_{\max} \in \{2.0, 2.5, 3.0, 3.5, 4.0\}$ left the non-Gaussian state count invariant, with Laplace-like states in $\{3,4,5\}$; tightening $p_{\max}$ from $4.0$ to $3.0$ cost only a few nats of LL. The $p_{\min}$ sweep on $\{0.5, 0.7, 0.85\}$ was fully degenerate: every floor produced an identical fit because the smallest observed $\hat p_k$ on SPY was $0.86$.

\paragraph{Evaluation summary.}
Observed growth rates $G_t$ are fit by Baum-Welch CHMM at the default $K^\star = 3$, with $K = 18$ retained as the sensitivity reference for kurtosis fidelity; $1{,}000$ paths are then simulated and per-asset metrics computed (Table~\ref{tab:model_comparison}). For the multi-asset analysis we reuse the per-asset $K^\star = 3$ fits as marginals, draw a copula sample $\mathbf U \sim \mathcal C(\Sigma)$ from the fitted Student-$t$ copula on ranks, rank-reorder the per-asset single-asset paths to match $\mathbf U$, and report cross-asset metrics (Table~\ref{tab:cross_asset}). The single-asset CHMM framework is shared across both; the CHMM training and simulation steps are Algorithms~\ref{alg:chmm_em} and~\ref{alg:chmm_simulate}, the cross-asset rank-reordering step is Algorithm~\ref{alg:copula_sim}, and the architecture diagram for the single-asset framework is Figure~\ref{fig:chmm_architecture}.

\subsection{Validation Metric Details}
\label{sec:supp_metrics}
This appendix documents the seven per-path metrics used in the evaluation protocol.
For each model, 1{,}000 independent paths of length $T$ are simulated (200 paths for the cross-asset extension) and each metric is computed per path before aggregation, except the ACF-MAE, which compares the observed ACF against the across-path mean simulated ACF.

\paragraph{Kolmogorov--Smirnov (KS) pass rate.}
The two-sample KS statistic~\cite{kolmogorov1933, smirnov1948} measures the maximum vertical distance between two empirical CDFs,
\begin{equation}
    D_{n,m} = \sup_x |F_n(x) - G_m(x)|,
    \label{eq:ks}
\end{equation}
where $F_n$ and $G_m$ are the empirical CDFs of the two samples and $n$ and $m$ their sizes.
The KS pass rate is the fraction of paths whose marginal is not rejected against the reference data at $\alpha = 0.05$.
We read it as a descriptive marginal-fidelity score, not a calibrated test: the asymptotic two-sample KS null assumes i.i.d.\ observations, whereas observed and simulated paths carry serial dependence, so the nominal $p$-values are not calibrated here.
The stationary block-bootstrap KS recalibration provides the temporally aware inferential check.

\paragraph{Anderson--Darling (AD) pass rate.}
The AD test~\citep{scholz1987ksample} is the integrated squared CDF deviation with a quadratic tail weighting, making it more sensitive than KS to tail discrepancies.
The AD pass rate is the fraction of paths for which the two-sample AD test fails to reject at $\alpha = 0.05$.

\paragraph{Excess kurtosis.}
Mean simulated excess kurtosis across paths, compared to the observed value.
Given Gaussian emissions, the CHMM is expected to underestimate kurtosis relative to the heavy-tailed empirical distribution, and the shortfall is one of the reported diagnostics.

\paragraph{Autocorrelation-function MAE (ACF-MAE).}
Mean absolute error between observed and mean-simulated autocorrelation functions of $|G_t|$ over $L = 252$ lags,
\begin{equation}
    \text{ACF-MAE} = \frac{1}{L}\sum_{\tau=1}^{L} \left| \hat\rho^{\text{obs}}_{|G|}(\tau) - \bar{\hat\rho}^{\,\text{sim}}_{|G|}(\tau) \right|.
    \label{eq:acf_mae_supp}
\end{equation}
This is the direct diagnostic for the \citet{ryden1998stylized} limitation.

\paragraph{Wasserstein-1 distance.}
The mean absolute difference of sorted empirical quantiles, equivalent to the $L^1$ distance between the empirical CDFs~\cite{glasserman2003monte},
\begin{equation}
    W_1(F, G) = \int_0^1 |F^{-1}(u) - G^{-1}(u)|\, du.
    \label{eq:wasserstein}
\end{equation}

\paragraph{Hellinger distance.}
A histogram-based overlap distance in $[0, 1]$,
\begin{equation}
    H(P, Q) = \frac{1}{\sqrt{2}}\sqrt{\sum_{i=1}^{B}\left(\sqrt{p_i} - \sqrt{q_i}\right)^2},
    \label{eq:hellinger}
\end{equation}
with $B$ equal-width bins spanning the joint support and $p_i, q_i$ the observed and simulated bin probabilities.

\paragraph{Quantile coverage.}
The fraction of 99 equally-spaced observed quantiles (1st through 99th percentile) falling inside the 5th--95th percentile envelope of the corresponding simulated quantile across 1{,}000 paths.
A coverage of 100\% indicates that the simulation envelope fully brackets the observed distribution at every percentile.

\paragraph{Cross-asset metrics.}
For the cross-asset extension we additionally track the mean Frobenius norm $\|\hat\Sigma_{\text{sim}}^{(p)} - \hat\Sigma_{\text{obs}}\|_F$ and the off-diagonal Pearson correlation MAE between simulated and observed growth-rate matrices, averaged over paths. The off-diagonal MAE is averaged over the $d(d-1)/2$ unique upper-triangular entries of the symmetric correlation matrix (so $15$ entries for the six-asset cross-section of the main text cross-asset analysis), not double-counted across both triangles:
\begin{equation*}
    \text{MAE}_{\text{off-diag}} \;=\; \frac{1}{P}\sum_{p=1}^{P} \frac{2}{d(d-1)} \sum_{1 \leq i < j \leq d} \left| \hat\Sigma^{(p)}_{\text{sim},\,ij} - \hat\Sigma_{\text{obs},\,ij} \right|.
\end{equation*}

\subsubsection{Continuous Ranked Probability Score and Diebold-Mariano}
\label{sec:crps_methods}

The OoS CRPS column of Table~\ref{tab:model_comparison} reports a proper scoring rule complementing the KS pass rate~\citep{gneiting2007strictly}. We use the unbiased sample CRPS estimator: for an ensemble of $N$ simulated paths $\{x_i\}_{i=1}^N$ at time $t$ and observed value $y_t$,
\begin{equation}
    \widehat{\text{CRPS}}_t = \frac{1}{N} \sum_{i=1}^N |x_i - y_t| - \frac{1}{N(N-1)} \sum_{1 \le i < j \le N} |x_i - x_j|.
\end{equation}
The double sum is computed in $O(N \log N)$ via the sorted-ensemble identity $\sum_{i<j} (x_{(j)} - x_{(i)}) = \sum_i x_{(i)} (2i - N - 1)$. The ensemble at each $t$ is the cross-section of the $N = 1{,}000$ unconditional simulated paths used throughout the main text; this yields a marginal-predictive CRPS suitable for an unconditional generative-fidelity assessment, distinct from the conditional one-step-ahead CRPS used in forecasting.

For pairwise comparison we report a Diebold-Mariano test~\citep{diebold1995comparing} on the per-$t$ loss differential $d_t = \widehat{\text{CRPS}}_t^A - \widehat{\text{CRPS}}_t^B$ with a Newey-West HAC variance under a Bartlett kernel and bandwidth $h = \lfloor T_{\text{OoS}}^{1/3} \rfloor = 8$. Two-sided $p$-values are reported under the standard-normal null. The main-text claim is that the CHMM tied or beat every benchmark on mean OoS CRPS, with Gaussian i.i.d.\ the only significantly worse comparator.

\subsubsection{Christoffersen Conditional-Coverage Panel}
\label{sec:christoffersen_supp}

The main-text VaR back-test reports only the unconditional Kupiec $\text{LR}_{\text{uc}}$ statistic~\citep{kupiec1995techniques}. This appendix reports the full Christoffersen~\cite{christoffersen1998evaluating} panel: $\text{LR}_{\text{uc}}$ (unconditional coverage), $\text{LR}_{\text{ind}}$ (breach independence), and $\text{LR}_{\text{cc}} = \text{LR}_{\text{uc}} + \text{LR}_{\text{ind}}$ (joint conditional coverage), at $\alpha \in \{0.01, 0.05\}$ on the SPY IS and OoS windows. The breach series is constructed from the pooled-archive VaR threshold (the $\alpha$-quantile of $\mathrm{vec}(\text{sim archive})$ across paths and time), then $\text{br}_t = (R_t \le \widehat{\text{VaR}}_\alpha)$.

\begin{table}[!ht]
\centering
\small
\caption{\textbf{Christoffersen conditional-coverage VaR back-test} (SPY, seed = $20260420$, $K = 18$, $1{,}000$ simulated paths). Critical values: $\chi^2_1(0.05) = 3.841$, $\chi^2_2(0.05) = 5.991$. ``br rate'' is the empirical breach rate; $\text{LR}_{\text{uc}}$ is the Kupiec statistic ($\sim\chi^2_1$ under correct unconditional coverage); $\text{LR}_{\text{ind}}$ is the Christoffersen independence statistic ($\sim\chi^2_1$ under independent breaches); $\text{LR}_{\text{cc}} = \text{LR}_{\text{uc}} + \text{LR}_{\text{ind}}$ is the joint conditional-coverage statistic ($\sim\chi^2_2$). PASS if statistic is below critical. Every unconditional generator (i.i.d.\ bootstrap, GARCH, and all four CHMM variants) rejects independence on OoS at $\alpha = 0.05$, consistent with breach clustering driven by volatility-clustering in $R_{\text{oos}}$ rather than a CHMM-specific mis-specification of conditional dynamics.}
\label{tab:christoffersen_var}
\begin{tabular}{l l c r r r r r r r r}
\toprule
& & & & & \multicolumn{2}{c}{Kupiec} & \multicolumn{2}{c}{Christ.\ ind} & \multicolumn{2}{c}{Christ.\ cc} \\
\cmidrule(lr){6-7}\cmidrule(lr){8-9}\cmidrule(lr){10-11}
Model & Win & $\alpha$ & breaches & br rate & $\text{LR}_{\text{uc}}$ & $p$ & $\text{LR}_{\text{ind}}$ & $p$ & $\text{LR}_{\text{cc}}$ & $p$ \\
\midrule
Bootstrap & IS  & $0.01$ & $26$  & $1.03\%$ & $0.03$ & $0.87$ & $15.28$ & $0.00$ & $15.31$ & $0.00$ \\
Bootstrap & IS  & $0.05$ & $126$ & $5.01\%$ & $0.00$ & $0.99$ & $56.46$ & $0.00$ & $56.46$ & $0.00$ \\
Bootstrap & OoS & $0.01$ & $5$   & $0.87\%$ & $0.10$ & $0.76$ & $13.18$ & $0.00$ & $13.28$ & $0.00$ \\
Bootstrap & OoS & $0.05$ & $19$  & $3.32\%$ & $3.83$ & $0.05$ & $5.26$  & $0.02$ & $9.08$  & $0.01$ \\
\midrule
GARCH(1,1) & IS  & $0.01$ & $31$  & $1.23\%$ & $1.28$ & $0.26$ & $12.45$ & $0.00$ & $13.72$ & $0.00$ \\
GARCH(1,1) & IS  & $0.05$ & $139$ & $5.52\%$ & $1.41$ & $0.24$ & $45.38$ & $0.00$ & $46.79$ & $0.00$ \\
GARCH(1,1) & OoS & $0.01$ & $6$   & $1.05\%$ & $0.01$ & $0.91$ & $20.87$ & $0.00$ & $20.89$ & $0.00$ \\
GARCH(1,1) & OoS & $0.05$ & $24$  & $4.20\%$ & $0.82$ & $0.37$ & $5.87$  & $0.02$ & $6.69$  & $0.04$ \\
\midrule
CHMM-N & IS  & $0.01$ & $20$  & $0.79\%$ & $1.15$ & $0.28$ & $12.80$ & $0.00$ & $13.95$ & $0.00$ \\
CHMM-N & IS  & $0.05$ & $123$ & $4.89\%$ & $0.07$ & $0.80$ & $47.33$ & $0.00$ & $47.40$ & $0.00$ \\
CHMM-N & OoS & $0.01$ & $3$   & $0.52\%$ & $1.58$ & $0.21$ & $18.98$ & $0.00$ & $20.56$ & $0.00$ \\
CHMM-N & OoS & $0.05$ & $19$  & $3.32\%$ & $3.83$ & $0.05$ & $5.26$  & $0.02$ & $9.08$  & $0.01$ \\
\midrule
CHMM-t & IS  & $0.01$ & $22$  & $0.87\%$ & $0.42$ & $0.52$ & $11.62$ & $0.00$ & $12.04$ & $0.00$ \\
CHMM-t & IS  & $0.05$ & $127$ & $5.05\%$ & $0.01$ & $0.91$ & $55.54$ & $0.00$ & $55.55$ & $0.00$ \\
CHMM-t & OoS & $0.01$ & $4$   & $0.70\%$ & $0.58$ & $0.45$ & $15.53$ & $0.00$ & $16.12$ & $0.00$ \\
CHMM-t & OoS & $0.05$ & $20$  & $3.50\%$ & $3.03$ & $0.08$ & $4.71$  & $0.03$ & $7.74$  & $0.02$ \\
\midrule
CHMM-L & IS  & $0.01$ & $19$  & $0.76\%$ & $1.66$ & $0.20$ & $13.44$ & $0.00$ & $15.11$ & $0.00$ \\
CHMM-L & IS  & $0.05$ & $134$ & $5.33\%$ & $0.55$ & $0.46$ & $49.43$ & $0.00$ & $49.98$ & $0.00$ \\
CHMM-L & OoS & $0.01$ & $2$   & $0.35\%$ & $3.26$ & $0.07$ & $9.15$  & $0.00$ & $12.41$ & $0.00$ \\
CHMM-L & OoS & $0.05$ & $20$  & $3.50\%$ & $3.03$ & $0.08$ & $4.71$  & $0.03$ & $7.74$  & $0.02$ \\
\midrule
CHMM-GED & IS  & $0.01$ & $20$  & $0.79\%$ & $1.15$ & $0.28$ & $12.80$ & $0.00$ & $13.95$ & $0.00$ \\
CHMM-GED & IS  & $0.05$ & $126$ & $5.01\%$ & $0.00$ & $0.99$ & $56.46$ & $0.00$ & $56.46$ & $0.00$ \\
CHMM-GED & OoS & $0.01$ & $2$   & $0.35\%$ & $3.26$ & $0.07$ & $9.15$  & $0.00$ & $12.41$ & $0.00$ \\
CHMM-GED & OoS & $0.05$ & $19$  & $3.32\%$ & $3.83$ & $0.05$ & $5.26$  & $0.02$ & $9.08$  & $0.01$ \\
\bottomrule
\end{tabular}
\end{table}

Two patterns stand out (Table~\ref{tab:christoffersen_var}). On the unconditional (Kupiec) leg every CHMM variant passed at both levels on both windows; the $\text{LR}_{\text{uc}}$ statistic for the four CHMM rows at $\alpha = 0.05$ on OoS clustered at $3.03$--$3.83$ (just below the $\chi^2_1$ critical value $3.841$), reflecting the integer-breach grid at $T_{\text{OoS}} = 572$ noted in the main text table caption. On the independence leg every unconditional generator in the panel (bootstrap, GARCH, and all four CHMM variants) rejected: $\text{LR}_{\text{ind}}$ ranged over $4.71$--$56.46$ across rows. This is structural rather than CHMM-specific. Volatility-clustering in $R_{\text{oos}}$ produces persistent regimes of high and low realised volatility. The OoS-window breaches concentrate inside the high-volatility regime, and a constant-across-$t$ VaR threshold (the construction every unconditional generator uses) cannot track that timing. The natural fix is a regime-conditional VaR that propagates the CHMM latent-state forecast $\Prob(s_{t+1}\,|\,\text{history}_t)$ through the per-state emission-mixture; this is operationally the conditional-VaR companion-paper direction noted in the main text Discussion.

\subsubsection{Value-at-Risk and Expected-Shortfall Envelope Visualization (Companion to Table~\ref{tab:var_es})}
\label{sec:var_es_supp}

The median and $[5\%, 95\%]$ envelopes across simulated paths are visualised in Figure~\ref{fig:var_es}; the numeric values are reported in Table~\ref{tab:var_es}.

\begin{figure}[H]
\centering
\begin{subfigure}[b]{0.49\textwidth}\centering
\includegraphics[width=\textwidth]{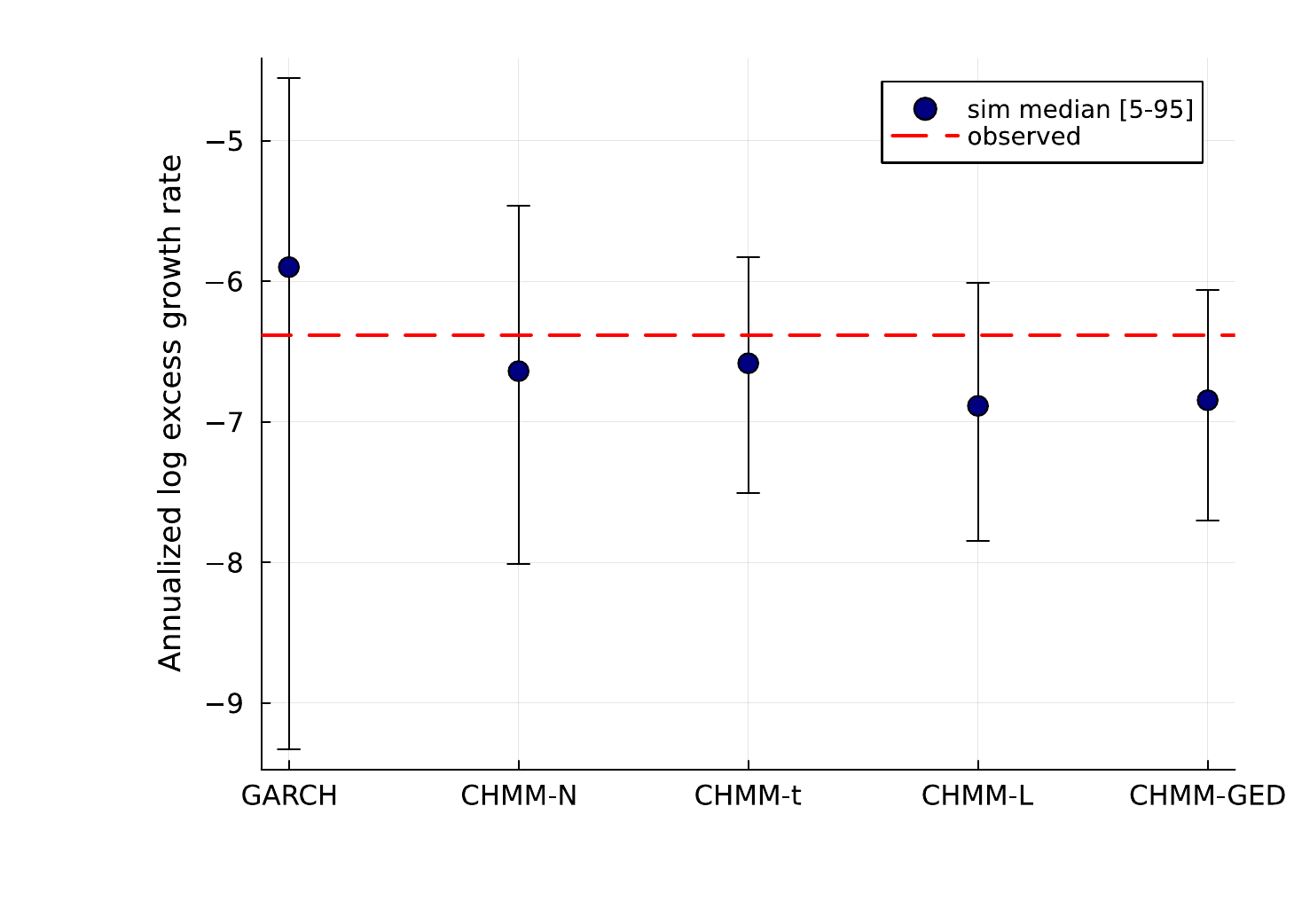}
\caption{IS VaR ($\alpha = 0.01$).}
\end{subfigure}\hfill
\begin{subfigure}[b]{0.49\textwidth}\centering
\includegraphics[width=\textwidth]{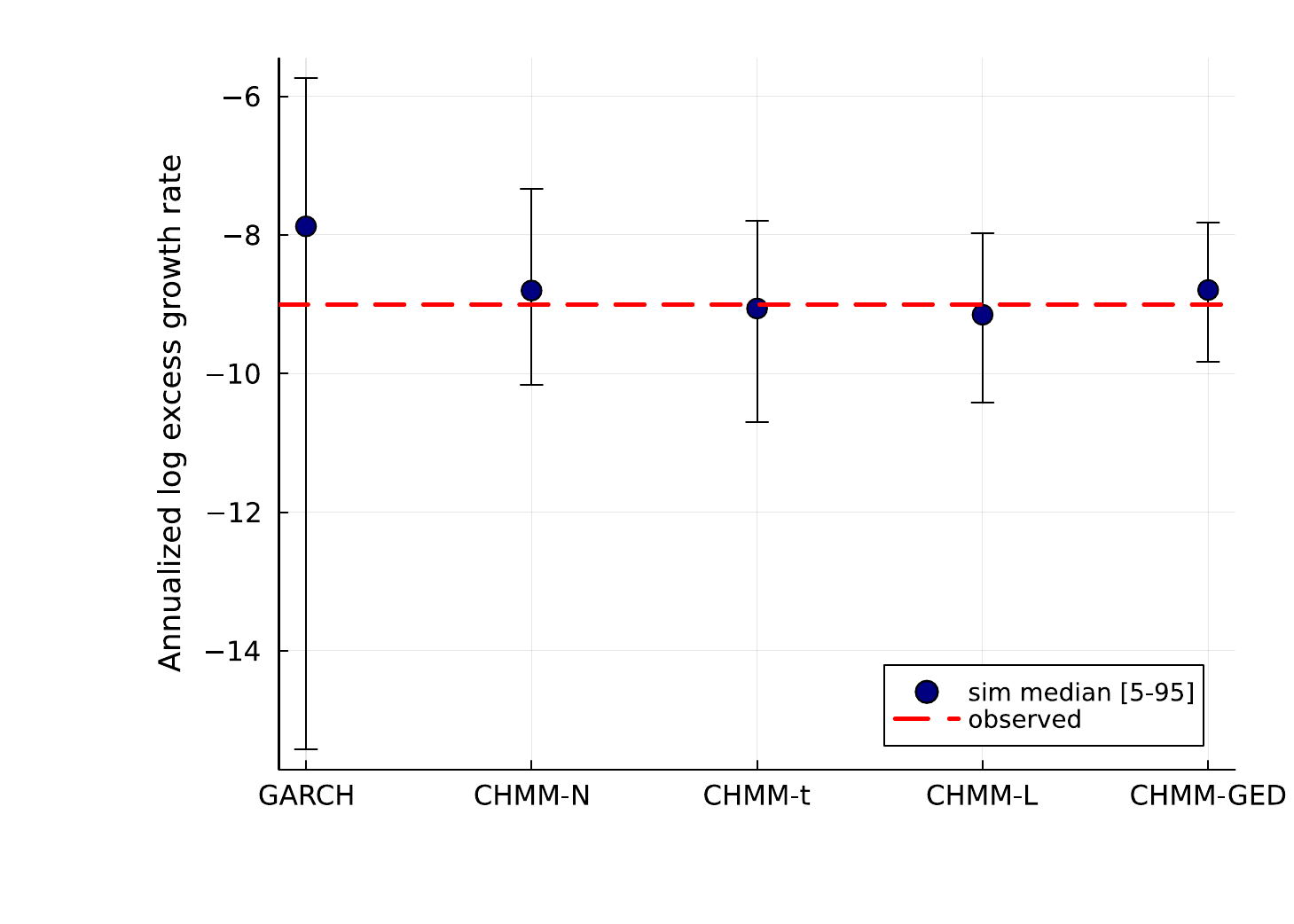}
\caption{IS ES ($\alpha = 0.01$).}
\end{subfigure}\\[0.5em]
\begin{subfigure}[b]{0.49\textwidth}\centering
\includegraphics[width=\textwidth]{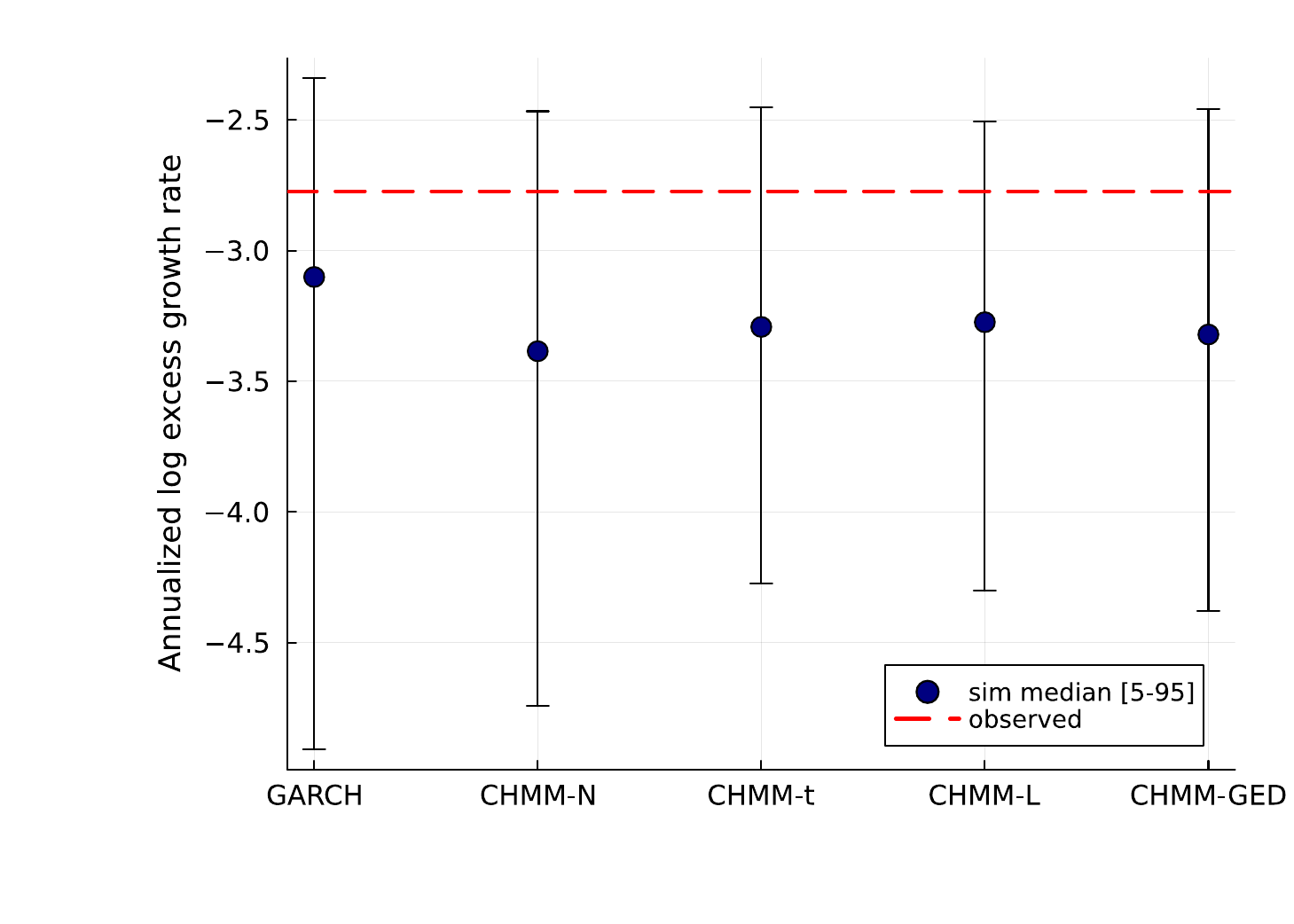}
\caption{OoS VaR ($\alpha = 0.05$).}
\end{subfigure}\hfill
\begin{subfigure}[b]{0.49\textwidth}\centering
\includegraphics[width=\textwidth]{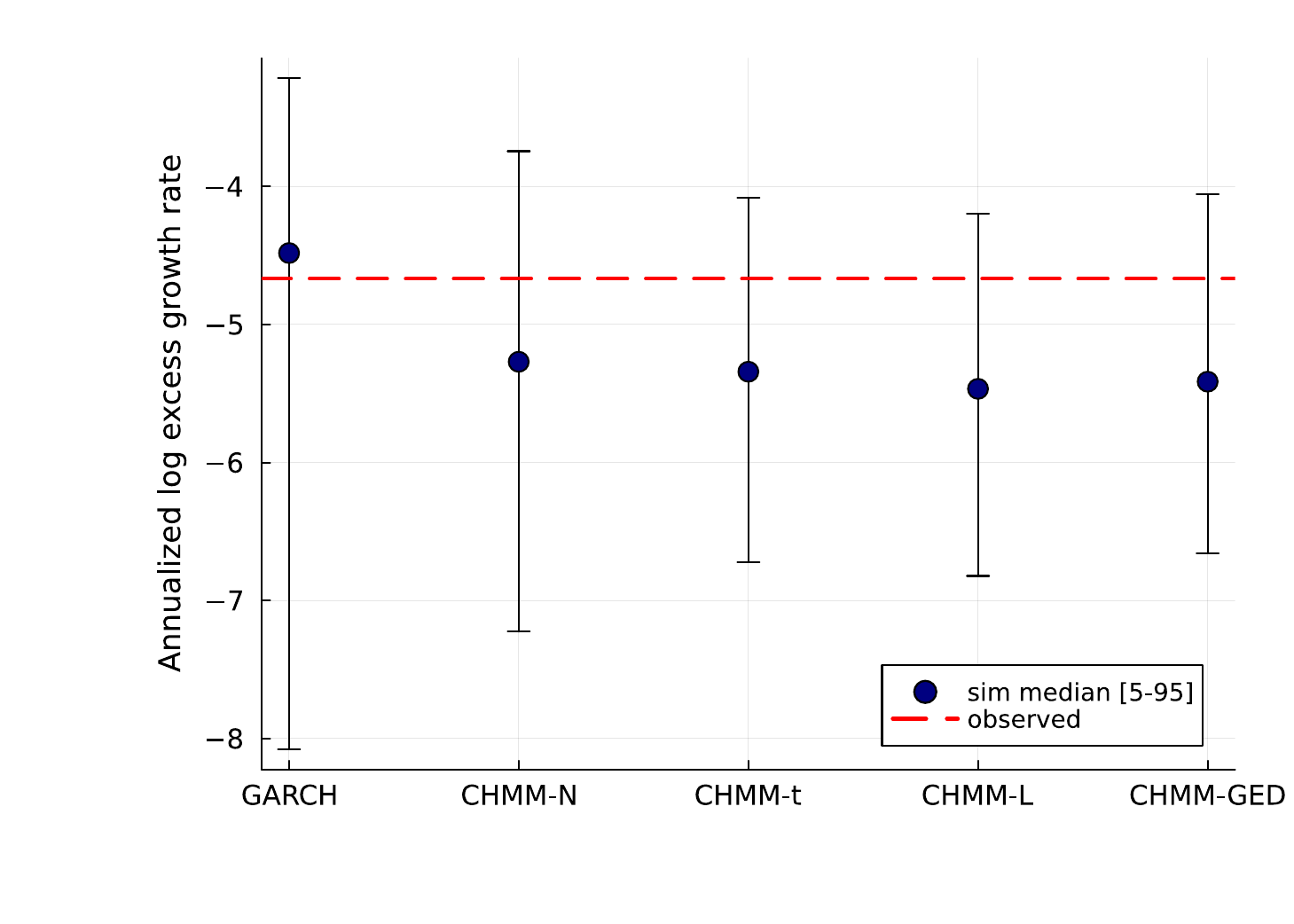}
\caption{OoS ES ($\alpha = 0.05$).}
\end{subfigure}
\caption{\textbf{VaR / ES envelope diagnostic} (SPY, \NPaths{1{,}000}; global seed policy under the evaluation protocol). Each panel shows the simulated median (dot) and $[5\%, 95\%]$ envelope (error bar) across paths against the observed historical value (dashed red line); panels (a)/(b) are the IS window at $\alpha = 0.01$, panels (c)/(d) are the OoS window at $\alpha = 0.05$, and the x-axis in each panel lists the five generators kept for the conditional-coverage comparison among volatility-aware models (GARCH(1,1), CHMM-N, CHMM-t, CHMM-L, CHMM-GED). The i.i.d.\ bootstrap is omitted here because its envelope is by construction matched to the empirical marginal and does not discriminate among volatility-aware models. The y-axis is the annualized log excess growth rate (daily log return scaled by $1/\Delta t$ with $\Delta t = 1/252$), consistent with the $R_{\text{IS}} / R_{\text{OoS}}$ convention used throughout the paper. Every observed red line falls inside the corresponding simulated envelope, confirming that the CHMM family passes the coverage leg of the back-test complementary to the Kupiec likelihood-ratio leg in the main text VaR back-test.}
\label{fig:var_es}
\end{figure}

\subsection{Formal Theoretical Statements}
\label{sec:supp_propositions}

The following assumptions support the spectral identity. We also state the scope of the numerical-optimisation, identifiability, and consistency claims so that those issues are separated from the empirical fitting procedure.

\begin{assumption}[Irreducibility and aperiodicity]
\label{ass:irred}
The transition matrix $\mathbf T \in \R^{K \times K}$ is irreducible and aperiodic on the state space $\{1, \ldots, K\}$, so the chain admits a unique stationary distribution $\bar{\boldsymbol\pi}$ satisfying $\bar{\boldsymbol\pi}\,\mathbf T = \bar{\boldsymbol\pi}$ and $\lambda_1 = 1$ is a simple eigenvalue with all other eigenvalues strictly inside the unit disk~\citep{levinperes2017markov}.
\end{assumption}

\begin{assumption}[Finite second moment and emission contrast]
\label{ass:moments}
Each per-state emission density $f_k(\cdot;\boldsymbol\theta_k)$ has finite second moment, $\E_{f_k}[G^2] < \infty$. For a non-degenerate absolute growth-rate ACF we further require the state-conditional absolute growth-rate means $m_k = \E[|G_t| \mid s_t = k]$ to differ across at least two states; this rules out only the trivial flat-ACF case and is not needed for the ACF identity itself.
\end{assumption}

\paragraph{Numerical optimisation scope.}
The CHMM-N and CHMM-L M-steps are exact conditional maximisations. For CHMM-t and CHMM-GED, exact maximisation of every numerical block would give the standard ECM/ECME ascent property~\cite{wu1983convergence, mengrubin1993ecm}. The implementation instead uses fixed-iteration bounded searches for $\nu_k$, $\mu_k$, and $p_k$. These are approximate block updates, and the GED location objective can be nonconvex when $p_k < 1$. We therefore do not claim exact likelihood monotonicity for every finite-precision iteration. Numerical convergence is assessed from the observed-data log-likelihood trace using the tolerance and iteration cap stated in Algorithm~\ref{alg:chmm_em}. Even when an ascent condition is enforced, bounded monotone likelihood values imply convergence of the likelihood sequence, not necessarily convergence of the full parameter sequence to a unique maximiser.

\paragraph{Numerical full-rank check on $\hat{\mathbf T}$.}
Full rank of $\mathbf T$ is a separate sufficient condition in the HMM identifiability result of \citet{allman2009identifiability}; it does not follow from irreducibility. As an empirical diagnostic, the fitted $\hat{\mathbf T}$ at $K = 18$ across the four emission families on SPY IS was full rank in the numerical sense (smallest singular value $\sigma_{\min} \in [0.0007, 0.0170]$, condition numbers $63$ to $1{,}620$), and the deflated matrix $\hat{\mathbf T} - \mathbf 1\bar{\boldsymbol\pi}^\top$ had numerical rank exactly $K - 1 = 17$ across all four families. This fitted-matrix check is not by itself a proof of population identifiability.

\begin{proposition}[Conditional HMM identifiability]
Suppose that $\mathbf T$ has full rank and the $K$ per-state emission densities are linearly independent. Under the remaining regularity conditions of \citet{allman2009identifiability}, the HMM parameters are identifiable from a sufficiently long observation block up to permutation of the latent-state labels.
\end{proposition}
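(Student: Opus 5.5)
The plan is to reduce the statement to the generic identifiability theorem of \citet{allman2009identifiability} for finite mixtures observed through three conditionally independent "views", via Kruskal's uniqueness theorem for three-way arrays. Take a block of $2L+1$ consecutive observations from the stationary chain, with $L$ to be chosen, and condition on the middle hidden state $s_{L+1}$. Given $s_{L+1}=k$, the past block $O_{1:L}$, the present $O_{L+1}$ and the future block $O_{L+2:2L+1}$ are independent by the Markov property. The joint law of the block is therefore a $K$-component mixture with weights $\bar\pi_k$ (positive by Assumption~\ref{ass:irred}) of product measures $\mu^{\mathrm{past}}_k\otimes f_k\otimes \mu^{\mathrm{fut}}_k$.

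\textbf{Step 1: linear independence of each view.} The present-view components $\{f_k\}$ are linearly independent by hypothesis. The future-view component is
\[
\mu^{\mathrm{fut}}_k = \sum_j T_{kj} f_j\otimes(\cdots).
\]
Its one-step marginal is $\sum_j T_{kj} f_j$. This family is the image of the independent family $\{f_j\}$ under the invertible matrix $\mathbf T$, which is where full rank is used, so it is linearly independent. For the past view, the time-reversed chain has matrix $\tilde T_{kj}=\bar\pi_j T_{jk}/\bar\pi_k$, which is also invertible, and the same argument applies. Linear independence of the marginals implies linear independence of the block measures. So all three Kruskal ranks equal $K$, and already $L=1$ suffices; longer blocks only help.

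\textbf{Step 2: apply Kruskal.} With infinite-dimensional views, Allman et al.\ discretise: choose finitely many measurable sets on which each family remains linearly independent, which is possible since the families are finite and independent. The resulting three-way array has Kruskal ranks summing to $3K\ge 2K+2$ for $K\ge 2$. Kruskal's theorem then gives uniqueness of $(\bar\pi_k,\mu^{\mathrm{past}}_k,f_k,\mu^{\mathrm{fut}}_k)$ up to a common permutation of $k$. Repeating this over enough sets recovers the full measures, hence the $f_k$ and $\bar{\boldsymbol\pi}$.

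\textbf{Step 3: recover $\mathbf T$.} The one-step future marginal gives $\sum_j T_{kj} f_j$. Because the $f_j$ are linearly independent and already identified, the coefficients $T_{kj}$ are determined. If $\boldsymbol\theta_k$ is identifiable from $f_k$ within each family (true for the Gaussian, Student-$t$, Laplace and GED parametrisations), the emission parameters follow. The initial distribution $\boldsymbol\pi$ is identifiable only if the block starts at $t=1$; under stationarity we would identify $\bar{\boldsymbol\pi}$ instead, and I will state this caveat as part of the "remaining regularity conditions".

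\textbf{Main obstacle.} The main obstacle is Step 2's passage from linear independence of densities to a finite-dimensional Kruskal condition, together with verifying that the regularity conditions of \citet{allman2009identifiability} (stationarity or a known start, and a positive weight for every state) hold. Here I would simply invoke their Theorem~6 and its discretisation lemma rather than reprove them, checking only the hypotheses via Step 1.
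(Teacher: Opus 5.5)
Your proposal is correct and follows essentially the paper's route. The paper gives no independent argument and simply defers to the three-consecutive-observation, Kruskal-based theorem of \citet{allman2009identifiability}. You unpack that theorem: you condition on the middle state, use invertibility of $\mathbf T$ and of the time-reversed kernel $\mathrm{diag}(\bar{\boldsymbol\pi})^{-1}\mathbf T^\top\mathrm{diag}(\bar{\boldsymbol\pi})$ to keep the outer views linearly independent, and read $\mathbf T$ off the one-step-ahead mixtures $\sum_j T_{kj} f_j$. One detail to state explicitly: Kruskal's uniqueness holds only up to column scaling, which you remove by including the whole space among the discretising sets, so that each factor column has unit mass and the weights $\bar\pi_k$ are recovered.
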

\noindent\emph{Scope.} Distinct Gaussian or Laplace location-scale components satisfy standard finite-mixture identifiability conditions~\cite{yakowitzspragins1968}, but this paper does not prove the required linear independence for the full varying-shape Student-$t$ and GED families. In particular, the Student-$t$ family has no moment-generating function at finite degrees of freedom, so an MGF argument is unavailable. The proposition is therefore a conditional statement, not a claim that pairwise-distinct fitted parameters alone establish identifiability for every emission family.

\paragraph{Consistency scope.}
Standard HMM maximum-likelihood consistency results require more than Assumptions~\ref{ass:irred}--\ref{ass:moments}: typically identifiability of the population model, continuity and integrability of the log-density, a compact or otherwise controlled parameter space, and a uniform mixing or forgetting condition~\cite{bickel1998asymptotic,doucmoulinesryden2004}. We do not assert that the two displayed assumptions alone verify all of those conditions for every varying-shape emission family. The empirical fitting and validation procedures in this paper do not rely on an asymptotic consistency theorem.

\begin{proposition}[Marginal preservation under rank reordering]
\label{prop:rank_marginals}
Let $\tilde g_{j, 1:T}$ be a sample of size $T$ from the per-asset CHMM marginal of asset $j$, and for simulated path $p$ let the rank-reordered output be $\hat g_{j,t}^{(p)} = \tilde g_{j, (r_{j,t}^{(p)})}^{(p)}$, where $r_{j,t}^{(p)}$ is the rank of the path-$p$ copula sample in column $j$ at time $t$. Then for every asset $j$ the empirical CDF of $\{\hat g_{j,t}^{(p)}\}_{t=1}^T$ equals that of $\{\tilde g_{j,t}^{(p)}\}_{t=1}^T$.
\end{proposition}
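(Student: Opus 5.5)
The plan is to show that the rank-reordering map in Algorithm~\ref{alg:copula_sim} is, for each fixed asset $j$ and path $p$, just a permutation of the multiset $\{\tilde g_{j,t}^{(p)}\}_{t=1}^T$. The empirical CDF of a finite sample depends only on the multiset of values, $\hat F(x) = T^{-1}\#\{t : g_t \le x\}$, so it is invariant under any reindexing of $t$. Equality of the two empirical CDFs then follows at once.

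First, I fix $j$ and $p$ and define the rank map $t \mapsto r_{j,t}^{(p)}$ on the copula column $U_{j,1:T}^{(p)}$. The key step is to argue that this map is a bijection of $\{1,\ldots,T\}$. The copula sample $\mathbf U^{(p)}$ has continuous marginals: it is the image of a Gaussian or Student-$t$ vector under a strictly increasing CDF, and $\Sigma$ has unit diagonal. So ties in column $j$ occur with probability zero, and almost surely the ranks are exactly a permutation $\sigma_j$ of $\{1,\ldots,T\}$. To avoid the almost-sure qualifier, I will also stipulate that ties are broken by any fixed deterministic rule, for example ordinal ranking by index, as the implementation's rank routine does. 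Ranking then yields a permutation surely.

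Second, I write $\hat g_{j,t}^{(p)} = \tilde g_{j,(\sigma_j(t))}^{(p)}$. I choose a permutation $\tau$ with $\tilde g_{j,(i)}^{(p)} = \tilde g_{j,\tau(i)}^{(p)}$, which exists because the sorted vector is a permutation of the original, with ties in $\tilde g$ resolved arbitrarily. Then $\hat g_{j,t}^{(p)} = \tilde g_{j,\tau(\sigma_j(t))}^{(p)}$, and $\tau\circ\sigma_j$ is a permutation. Therefore, for every $x$,
\[
\#\{t : \hat g_{j,t}^{(p)} \le x\} = \#\{t : \tilde g_{j,\tau\sigma_j(t)}^{(p)} \le x\} = \#\{s : \tilde g_{j,s}^{(p)} \le x\},
\]
by the change of variables $s=\tau\sigma_j(t)$. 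Dividing by $T$ gives the claimed equality of empirical CDFs. Ties among the CHMM draws $\tilde g$ are harmless, because repeated values are preserved with their multiplicities.

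The only real obstacle is the tie issue in the copula ranks. If two $U$ entries shared a rank, the ordering would not be a permutation and some order statistic would be duplicated while another was dropped. I would handle this by the continuity argument above, or by a deterministic tie-breaking convention. Everything else is a bookkeeping reindexing. I will also remark that the statement is pathwise and makes no distributional claim about the dependence of $\hat g$ across $t$. Rank reordering generally destroys the within-asset serial structure, but the statement concerns only the marginal empirical CDF.
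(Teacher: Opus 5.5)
Your proposal is correct and follows the same route as the paper's proof sketch: the rank map $t \mapsto r_{j,t}^{(p)}$ is almost surely a permutation, so the multiset of values is preserved, and the empirical CDF depends only on that multiset. Your additions are refinements of that argument rather than a different approach: you justify the almost-sure claim from the continuity of the copula marginals, give a deterministic tie-breaking fallback, and write out the change of variables through $\tau\circ\sigma_j$.
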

\noindent\emph{Proof sketch.} The map $t \mapsto r_{j,t}^{(p)}$ is a permutation almost surely, so the multiset of values is preserved; the empirical CDF depends only on the multiset.

\subsection{Spectral ACF Identity: Extended Derivation}
\label{sec:supp_spectral_acf}

The closed-form mixture-of-eigenvalues form for the absolute growth-rate ACF, given in Eq.~\eqref{eq:acf_normalised}, is used throughout the main text. The bilinear cross-moment identity $\E[|G_t|\,|G_{t+\tau}|] = \mathbf m^\top \mathrm{diag}(\bar{\boldsymbol\pi})\,\mathbf T^\tau\,\mathbf m$ and its eigendecomposition over the non-unit eigenvalues of $\mathbf T$ are textbook material in the regime-switching literature (\citealp{hamilton1994time,krolzig1997markov}; the moment-generating-function form for general Markov-switching specifications with conditional means and variances depending on the latent state is given by \citealp{timmermann2000moments}); the underlying Markov-chain spectral decomposition is standard~\cite{levinperes2017markov}. We restate the absolute growth-rate specialisation here as a self-contained theorem with a step-by-step proof, and record the lag-zero behaviour and the complex and non-diagonalisable extensions that the main text does not. Our contribution is the explicit application of the spectral form to recast the empirical \citet{ryden1998stylized} low-$K$ failure as a rank statement on $\mathbf T - \mathbf 1\bar{\boldsymbol\pi}^\top$, and the empirical demonstration that the rank constraint was not what limited the fit at moderate $K$ on equity-return data.

\begin{theorem}[Mixture-of-eigenvalues identity for the absolute growth-rate ACF]
\label{thm:supp_spectral_acf}
Let $\{(s_t, G_t)\}_{t \in \mathbb{Z}}$ be a stationary CHMM on $\{1, \ldots, K\}$ with irreducible aperiodic transition matrix $\mathbf T$ (Assumption~\ref{ass:irred}) and per-state emissions of finite second moment (Assumption~\ref{ass:moments}). Write $\bar{\boldsymbol\pi}$ for the unique stationary distribution, $\mathbf D = \mathrm{diag}(\bar{\boldsymbol\pi})$, $m_k = \E[|G_t| \mid s_t = k]$, $\mathbf m = (m_1, \ldots, m_K)^\top$, $\mu = \E|G_t| = \bar{\boldsymbol\pi}^\top \mathbf m$, and $\sigma_{|G|}^2 = \mathrm{Var}(|G_t|)$. Assume $\mathbf T$ is diagonalisable with right eigenvectors $\mathbf v_k$ and left eigenvectors $\mathbf w_k$ biorthonormal, $\mathbf w_k^\top \mathbf v_l = \delta_{kl}$, indexed so that $\lambda_1 = 1 > |\lambda_2| \ge \cdots \ge |\lambda_K|$. Then for every integer lag $\tau \ge 1$,
\begin{equation*}
    \rho_{|G|}(\tau)
    \;=\; \sum_{k=2}^{K} a_k\, \lambda_k^{\tau},
    \qquad
    a_k \;=\; \frac{(\mathbf m^\top \mathbf D\, \mathbf v_k)\,(\mathbf w_k^\top \mathbf m)}{\sigma_{|G|}^2}.
\end{equation*}
\end{theorem}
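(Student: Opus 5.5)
The plan is to compute the lag-$\tau$ cross-moment of $|G_t|$ by conditioning on the latent pair $(s_t, s_{t+\tau})$, write it as a bilinear form in $\mathbf T^\tau$, and then substitute the spectral decomposition of $\mathbf T^\tau$. The rank-one projector part of $\mathbf T^\tau$ produces exactly $\mu^2$, which cancels in the covariance.

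\textbf{Step 1: the cross-moment is a bilinear form.} In a CHMM, the emissions are conditionally independent given the state path, and $G_t$ depends only on $s_t$. So for $\tau \ge 1$,
\[
\E\big[|G_t|\,|G_{t+\tau}| \,\big|\, s_t=i,\ s_{t+\tau}=j\big] = m_i m_j .
\]
By stationarity, $\Prob(s_t=i, s_{t+\tau}=j) = \bar\pi_i (\mathbf T^\tau)_{ij}$. The tower property then gives
\[
\E\big[|G_t|\,|G_{t+\tau}|\big] = \mathbf m^\top \mathbf D\, \mathbf T^\tau \mathbf m .
\]
Assumption~\ref{ass:moments} makes every $m_k$ finite. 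The Cauchy--Schwarz inequality makes the cross-moment and $\sigma_{|G|}^2 = \sum_k \bar\pi_k \E_{f_k}[G^2] - \mu^2$ finite. We need $\sigma_{|G|}^2 > 0$ so the ratio is defined; this is the non-degenerate case, and it holds in particular under the contrast clause of Assumption~\ref{ass:moments}.

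Note that $\tau \ge 1$ is essential here. At $\tau = 0$ the conditional product is $\E[G^2 \mid s=k]$, not $m_k^2$. This is exactly why the identity is stated only for $\tau \ge 1$.

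\textbf{Step 2: spectral decomposition of $\mathbf T^\tau$.} Diagonalisability with biorthonormal eigenvectors gives
\[
\mathbf T^\tau = \sum_{k=1}^K \lambda_k^\tau\, \mathbf v_k \mathbf w_k^\top .
\]
By Assumption~\ref{ass:irred}, $\lambda_1 = 1$ is simple. Its right eigenvector is proportional to $\mathbf 1$, since rows sum to one, and its left eigenvector is proportional to $\bar{\boldsymbol\pi}$. The normalisation $\mathbf w_1^\top \mathbf v_1 = 1$ together with $\bar{\boldsymbol\pi}^\top\mathbf 1 = 1$ lets us take $\mathbf v_1 = \mathbf 1$ and $\mathbf w_1 = \bar{\boldsymbol\pi}$. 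Hence
\[
\mathbf v_1\mathbf w_1^\top = \mathbf 1\bar{\boldsymbol\pi}^\top .
\]
Any rescaling $c\mathbf v_k$, $c^{-1}\mathbf w_k$ leaves each outer product $\mathbf v_k\mathbf w_k^\top$ unchanged. So the $a_k$ are well defined even though the individual eigenvectors are not.

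\textbf{Step 3: cancel the stationary mode.} The $k=1$ term contributes
\[
\mathbf m^\top \mathbf D \mathbf 1\, \bar{\boldsymbol\pi}^\top \mathbf m = (\bar{\boldsymbol\pi}^\top\mathbf m)^2 = \mu^2,
\]
using $\mathbf D\mathbf 1 = \bar{\boldsymbol\pi}$. Subtracting it gives
\[
\mathrm{Cov}\big(|G_t|,|G_{t+\tau}|\big) = \sum_{k\ge 2} \lambda_k^\tau\, (\mathbf m^\top\mathbf D\mathbf v_k)(\mathbf w_k^\top\mathbf m).
\]
Dividing by $\sigma_{|G|}^2$, which by stationarity is the variance at both $t$ and $t+\tau$, yields the stated $\rho_{|G|}(\tau)$ with the stated $a_k$.

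The sum is real even when some eigenvalues are complex. Because $\mathbf T$ is real, complex eigenvalues and their eigenvector pairs occur in conjugate pairs, so their terms combine to real values.

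\textbf{Main obstacle.} The step needing the most care is Step 2. It requires justifying that the $\lambda_1$ projector is exactly $\mathbf 1\bar{\boldsymbol\pi}^\top$, which depends on simplicity of $\lambda_1$ to rule out a larger eigenspace. It also requires keeping track of the normalisation convention, so that the $a_k$ are invariant under the eigenvector-scaling freedom.

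A secondary point is Step 1's reliance on conditional independence of the emissions given the whole state path, and on $\tau \ge 1$; both should be stated explicitly.
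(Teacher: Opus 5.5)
Your proposal is correct and follows the paper's proof essentially step for step: condition on $(s_t, s_{t+\tau})$ to obtain the bilinear form $\mathbf m^\top \mathbf D\,\mathbf T^\tau \mathbf m$ for $\tau \ge 1$, split the Perron projector $\mathbf 1\bar{\boldsymbol\pi}^\top$ off the dyadic expansion of $\mathbf T^\tau$ so that the $k=1$ term equals $\mu^2$ and cancels, and divide the remaining covariance by $\sigma_{|G|}^2$. One side remark is overstated: scaling invariance makes each individual $a_k$ basis-independent only when $\lambda_k$ is simple, because for a repeated non-unit eigenvalue only the sum of the $a_k$ over its eigenspace is canonical; the identity still holds for any biorthonormal choice, so the proof is unaffected.
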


\begin{proof}
By stationarity, $\rho_{|G|}(\tau) = (\E[|G_t|\,|G_{t+\tau}|] - \mu^2) / \sigma_{|G|}^2$, so it suffices to evaluate the cross-moment for $\tau \ge 1$. Conditioning on the latent state pair $(s_t, s_{t+\tau})$ and using HMM conditional independence ($G_t \perp G_{t+\tau} \mid (s_t, s_{t+\tau})$ for $\tau \ge 1$) plus the Markov factorisation $\Prob(s_t = i, s_{t+\tau} = j) = \bar\pi_i (\mathbf T^\tau)_{ij}$ gives the bilinear form
\begin{equation}
    \E[|G_t|\,|G_{t+\tau}|] \;=\; \mathbf m^\top \mathbf D\, \mathbf T^\tau\, \mathbf m, \qquad \tau \ge 1.
    \label{eq:supp_bilinear}
\end{equation}
Perron--Frobenius gives $\lambda_1 = 1$ simple with $\mathbf v_1 = \mathbf 1$, $\mathbf w_1 = \bar{\boldsymbol\pi}$~\citep{levinperes2017markov}; the dyadic expansion $\mathbf T^\tau = \mathbf 1\bar{\boldsymbol\pi}^\top + \sum_{k \ge 2} \lambda_k^\tau \mathbf v_k \mathbf w_k^\top$ separates the dominant projector. The $k = 1$ term contributes $\mu^2$ and cancels in the autocovariance, leaving $\mathrm{Cov}(|G_t|, |G_{t+\tau}|) = \sum_{k \ge 2} c_k \lambda_k^\tau$ with $c_k = (\mathbf m^\top \mathbf D \mathbf v_k)(\mathbf w_k^\top \mathbf m)$; dividing by $\sigma_{|G|}^2$ gives Equation~\eqref{eq:acf_normalised}.
\end{proof}

\paragraph{Lag zero and within-state variance.}
The identity~\eqref{eq:supp_bilinear} fails at $\tau = 0$ since $\E[G_t^2 \mid s_t = i] \ne m_i^2$. Eigenvalue completeness gives $\sum_{k \ge 2} c_k = \mathrm{Var}(m_{s_t})$, the between-state variance, so $\sum_{k \ge 2} a_k = \mathrm{Var}(m_{s_t}) / \sigma_{|G|}^2 \in [0, 1]$. The normalized within-state variance $\E[\mathrm{Var}(|G_t| \mid s_t)] / \sigma_{|G|}^2 = 1 - \sum_{k \ge 2} a_k$ explains the difference between $\rho_{|G|}(0)=1$ and the spectral expression evaluated at lag zero. The further change to lag one, $\rho_{|G|}(1)=\sum_{k \ge 2}a_k\lambda_k$, also reflects transition decay through the factors $\lambda_k$.

\paragraph{Squared growth-rate, complex-eigenvalue, and Jordan-block extensions.}
If every emission has a finite fourth moment, replacing $|G_t|$ with $G_t^2$ and $\mathbf m$ with $\mathbf M = (\E[G_t^2 \mid s_t = i])_i$ leaves the proof termwise valid and gives $\rho_{G^2}(\tau) = \sum_{k \ge 2} \tilde a_k \lambda_k^\tau$ on the same non-unit spectrum. This additional condition is not automatic for CHMM-t when some $\nu_k \le 4$. Complex-conjugate eigenvalue pairs $(re^{\pm i\theta}, \cdot)$ contribute real damped oscillations $A r^\tau \cos(\theta\tau) + B r^\tau \sin(\theta\tau)$. For non-diagonalisable $\mathbf T$, a Jordan expansion instead adds polynomial-times-geometric prefactors, with exponential rate controlled by $\max_{k \ge 2} |\lambda_k|$; Theorem~\ref{thm:supp_spectral_acf}, which assumes diagonalizability, does not cover that case as stated. The dominant timescale used in the main text spectral mechanism corresponded to a real positive $\lambda_2$ on the fits of this paper.

\subsection{Vendor-Stitch Sanity Check}
\label{sec:vendor_stitch}

The OoS series spans January~4, 2024 through April~20, 2026 ($T_{\text{OoS}} = 572$) and is sourced from two vendors: Polygon.io via Massive (early portion) and Alpaca Markets / IEX (held-out extension). To verify that the stitch introduces no boundary artefact we compared the two vendors on every shared trading day. Of the $323$ overlap days, the per-day VWAP differential, the per-day annualised log-return differential, and a rolling 30-day kurtosis differential were all identically zero (mean, standard deviation, and maximum absolute value all $0.000$ to floating-point precision); the two-sample Kolmogorov-Smirnov test on the overlap returns gave $D = 0.000$ and $p = 1.00$. The two vendors were exact-match on the overlap, so the stitched OoS series is single-source-equivalent and no boundary artefact appeared at the November 19, 2025 stitch date. The diagnostic is reproducible from \texttt{runners/diagnostics/run\_vendor\_stitch\_check.jl}.

\subsection{Variant Decision Guide}
\label{sec:variant_decision_guide}

This appendix recommends a CHMM emission family by use-case priority (Table~\ref{tab:variant_choice}). All four variants share the forward-backward framework, the transition update, and quantile-based initialisation; they differ in the emission density and its M-step update. The shared-$\nu$ Student-$t$ row is the main heavy-tail recommendation: per-state heavy tails without a penalty hyperparameter at the highest IS KS pass rate of any CHMM variant in Table~\ref{tab:model_comparison}. The penalised CHMM-t at $\lambda = 20$ is retained as a sensitivity reference; the $\lambda$ value was tuned at $K = 18$ and is reported at $K^\star = 3$ as an upper bound on the unpenalised heavy tail.

\begin{table}[!ht]
\centering
\small
\caption{\textbf{Variant decision guide at $K^\star = 3$.}}
\label{tab:variant_choice}
\begin{tabular}{l l}
\toprule
Use-case priority & Recommended variant \\
\midrule
Simplest closed-form fit; kurtosis fidelity secondary              & CHMM-N \\
Closest IS kurtosis match without a shape parameter                & CHMM-L \\
Per-state heavy tails, no penalty hyperparameter \emph{(main)} & CHMM-t shared-$\nu$ \\
Per-state heavy tails, IS-calibrated $\lambda$-shrinkage           & CHMM-t pen.\ ($\lambda = 20$) \\
Adaptive per-state shape; data-chosen Gaussian-bulk / Laplace-tail mixture & CHMM-GED \\
\bottomrule
\end{tabular}
\end{table}

\subsection{State-Resolution and Multi-Emission Sensitivity}
\label{sec:supp_sensitivity}

This appendix reports the full $K$-sweep underlying the main state count, the multi-emission family extension confirming that $K = 18$ is stable under Student-t and Laplace emissions, the EM convergence curves at representative $K$, the IS and OoS visual comparison panels at non-selected $K \in \{3, 6, 12, 21\}$, the multi-emission figure panels at $K = 18$, and the Ryd\'{e}n $K = 2$ replication. The cross-ticker generalisation distribution at the main state count is summarised in Table~\ref{tab:cross_ticker}.

\begin{table}[!ht]
\centering
\small
\caption{\textbf{Cross-ticker generalisation, sector-balanced 30-ticker panel} (penalised CHMM-t at $\lambda = 20$, $1{,}000$ paths, seed $= 20260420$, main-paper $K^\star = 3$). The OoS distribution is wider than the IS distribution and is concentrated at the tickers that introduced a new regime (LLY, UNH, NEM). The full per-ticker panel, per-sector rollup, $K = 6$ and $K = 18$ sensitivity rebuilds, and quarterly-refit approach are reported in this appendix.}
\label{tab:cross_ticker}
\begin{tabular}{l c}
\toprule
Metric & $K^\star = 3$ \\
\midrule
IS KS pass rate (\%) median           & $96.8$ \\
IS KS pass rate (\%) mean $\pm$ s.d.  & $95.1 \pm 4.9$ \\
OoS KS pass rate (\%) median          & $69.1$ \\
OoS KS pass rate (\%) mean $\pm$ s.d. & $66.2 \pm 28.2$ \\
$|G_t|$ ACF-MAE median                & $0.0399$ \\
Kurtosis residual (sim $-$ obs) median & $7.18$ \\
Tickers OoS KS $< 60\%$ (IS-fixed)    & $11 / 30$ \\
\bottomrule
\end{tabular}
\end{table}

\subsubsection{Full State-Resolution Sensitivity Table (companion artefact)}
\label{sec:sensitivity_table}

The full $K$-sweep table for SPY CHMM-N referenced in the main text state-count selection (KS, AD, kurtosis, $W_1$, Hellinger, OoS coverage at $K \in \{3, 6, 9, 12, 15, 18, 21\}$, $1{,}000$ paths, seed $20260420$) is provided as a companion artefact in the code repository. The qualitative pattern is that distributional pass rates plateau in the $K \in \{12, 15, 18, 21\}$ band and ACF-MAE is essentially flat across the entire sweep, consistent with the single-dominant-mode spectral mechanism.

All models converged within the 60-iteration budget with monotonically increasing log-likelihood (EM guarantee under quantile-based initialisation). Lower $K$ converged in 15--25 iterations, $K = 18$ in 25--40, with a final $|\Delta \mathcal L| < 10^{-4}$ at every state count. Convergence-trace figures and per-iteration log-likelihood histories are provided in the companion code repository.

\subsubsection{Sector-Balanced 30-Ticker Panel: Full Per-Ticker Rollup}
\label{sec:cross_ticker_sector_panel}

The aggregate distribution and per-sector medians for the sector-balanced 30-ticker panel at the main text state count $K^\star = 3$ are reported in Table~\ref{tab:cross_ticker}.
The full per-ticker rollup is given at $K = 18$, the sensitivity reference for kurtosis fidelity, rather than at $K^\star = 3$ (Table~\ref{tab:sector_panel}). The per-ticker pattern was essentially $K$-robust across $\{3, 6, 18\}$, with identical $11/30$ failure counts and an overlapping ticker list (Table~\ref{tab:cross_ticker}), so the rollup at $K = 18$ is representative of the $K^\star = 3$ panel up to small per-ticker shifts. Each ticker is fit independently under the penalised CHMM-t framework ($K = 18$, $\lambda = 20$, $1{,}000$ simulated paths, seed $20260420$) on its own IS slice covering 2014-01-03 to 2024-01-03 and validated on the same 2024-01-04 to 2026-04-20 OoS window used for SPY in Table~\ref{tab:model_comparison}.
The per-ticker $\nu_k$ medians sat at the upper bracket ($50$) on every ticker, indicating that the $1/\nu_k$ shrinkage was operative across the panel; minimum and maximum $\nu_k$ values are in the per-ticker CSV in the companion code repository.

\begin{table}[!ht]
\centering
\scriptsize
\caption{\textbf{Sector-balanced 30-ticker panel, per-ticker rollup} (penalised CHMM-t at $K = 18$, $\lambda = 20$). Sectors ordered by GICS classification; tickers within sector ordered by the panel construction (top three large-cap representatives at IS-window-median market cap). Kurt resid is simulated minus observed IS excess kurtosis. Aggregate distribution and per-sector medians are summarised in Table~\ref{tab:cross_ticker}.}
\label{tab:sector_panel}
\begin{tabular}{l l r r r r r r}
\toprule
Sector & Ticker & IS KS\% & OoS KS\% & Kurt obs & Kurt sim & Kurt resid & $|G_t|$ ACF-MAE \\
\midrule
Information Technology  & AAPL & $99.1$ & $95.1$ & $\phantom{0}3.19$  & $\phantom{0}2.84$  & $-0.35$  & $0.0418$ \\
Information Technology  & MSFT & $99.7$ & $96.9$ & $\phantom{0}4.20$  & $\phantom{0}3.48$  & $-0.72$  & $0.0389$ \\
Information Technology  & NVDA & $99.6$ & $55.7$ & $\phantom{0}5.53$  & $\phantom{0}6.23$  & $\phantom{-}0.70$  & $0.0474$ \\
\midrule
Health Care             & JNJ  & $99.3$ & $93.2$ & $\phantom{0}8.97$  & $11.65$            & $\phantom{-}2.68$  & $0.0335$ \\
Health Care             & UNH  & $99.3$ & $14.5$ & $\phantom{0}8.85$  & $10.31$            & $\phantom{-}1.46$  & $0.0382$ \\
Health Care             & LLY  & $99.7$ & $\phantom{0}7.6$ & $13.60$            & $66.51$            & $52.91$            & $0.0295$ \\
\midrule
Financials              & JPM  & $99.5$ & $50.5$ & $\phantom{0}7.62$  & $\phantom{0}8.14$  & $\phantom{-}0.52$  & $0.0458$ \\
Financials              & BAC  & $99.6$ & $84.8$ & $\phantom{0}5.76$  & $\phantom{0}5.47$  & $-0.29$            & $0.0394$ \\
Financials              & WFC  & $98.2$ & $57.6$ & $\phantom{0}7.53$  & $10.90$            & $\phantom{-}3.37$  & $0.0791$ \\
\midrule
Consumer Discretionary  & AMZN & $99.4$ & $96.9$ & $\phantom{0}4.31$  & $\phantom{0}3.68$  & $-0.63$            & $0.0413$ \\
Consumer Discretionary  & HD   & $99.2$ & $41.4$ & $12.92$            & $17.03$            & $\phantom{-}4.11$  & $0.0368$ \\
Consumer Discretionary  & MCD  & $99.5$ & $66.6$ & $18.38$            & $26.93$            & $\phantom{-}8.55$  & $0.0394$ \\
\midrule
Communication Services  & NFLX & $99.3$ & $45.1$ & $12.60$            & $\phantom{0}4.67$  & $-7.93$            & $0.0320$ \\
Communication Services  & VZ   & $99.3$ & $49.1$ & $\phantom{0}5.14$  & $\phantom{0}5.15$  & $\phantom{-}0.01$  & $0.0270$ \\
Communication Services  & DIS  & $99.5$ & $96.4$ & $10.77$            & $13.94$            & $\phantom{-}3.17$  & $0.0644$ \\
\midrule
Industrials             & CAT  & $99.7$ & $78.9$ & $\phantom{0}4.34$  & $\phantom{0}3.48$  & $-0.86$            & $0.0321$ \\
Industrials             & BA   & $97.6$ & $62.2$ & $20.59$            & $52.00$            & $31.41$            & $0.0793$ \\
Industrials             & HON  & $99.3$ & $96.0$ & $21.93$            & $17.09$            & $-4.84$            & $0.0571$ \\
\midrule
Consumer Staples        & PG   & $99.7$ & $94.3$ & $\phantom{0}8.15$  & $\phantom{0}9.51$  & $\phantom{-}1.36$  & $0.0347$ \\
Consumer Staples        & KO   & $99.5$ & $92.7$ & $10.57$            & $12.84$            & $\phantom{-}2.27$  & $0.0457$ \\
Consumer Staples        & WMT  & $99.7$ & $24.0$ & $13.84$            & $32.63$            & $18.79$            & $0.0279$ \\
\midrule
Energy                  & XOM  & $96.9$ & $70.3$ & $\phantom{0}5.61$  & $10.99$            & $\phantom{-}5.38$  & $0.0988$ \\
Energy                  & CVX  & $99.6$ & $62.3$ & $12.22$            & $14.78$            & $\phantom{-}2.56$  & $0.0498$ \\
Energy                  & COP  & $99.4$ & $83.6$ & $17.11$            & $\phantom{0}8.28$  & $-8.83$            & $0.0494$ \\
\midrule
Utilities               & NEE  & $98.6$ & $24.0$ & $10.71$            & $\phantom{0}9.87$  & $-0.84$            & $0.0455$ \\
Utilities               & DUK  & $99.7$ & $96.9$ & $\phantom{0}9.75$  & $\phantom{0}8.11$  & $-1.64$            & $0.0466$ \\
Utilities               & SO   & $99.6$ & $96.2$ & $14.52$            & $12.73$            & $-1.79$            & $0.0481$ \\
\midrule
Materials               & FCX  & $99.8$ & $76.6$ & $\phantom{0}4.83$  & $\phantom{0}4.11$  & $-0.72$            & $0.0573$ \\
Materials               & NEM  & $99.6$ & $\phantom{0}5.6$ & $\phantom{0}3.74$  & $\phantom{0}3.88$  & $\phantom{-}0.14$  & $0.0376$ \\
Materials               & APD  & $99.2$ & $90.2$ & $\phantom{0}8.40$  & $16.74$            & $\phantom{-}8.34$  & $0.0381$ \\
\bottomrule
\end{tabular}
\end{table}

\subsubsection{\texorpdfstring{60-Ticker Sector Expansion at $n = 6$ per Sector}{60-Ticker Sector Expansion at n = 6 per Sector}}
\label{sec:sector_panel_n6}

We expanded the sector panel to $n = 6$ per sector (60 tickers, three additional large-cap representatives per GICS sector) and re-ran the ANOVA; the expansion confirmed the $n = 3$ reading at adequate power, again finding no significant sector effect on OoS KS. The aggregate OoS KS median and failure rate at $K = 18$ were statistically indistinguishable from the corresponding 30-ticker $K = 18$ rollup ($73.4\%$, $11/30$, Table~\ref{tab:sector_panel}); the Table~\ref{tab:cross_ticker} aggregate at $K^\star = 3$ had the same $11/30$ failure count at OoS median $69.1\%$. Cross-sector dispersion accounted for only a small fraction of OoS KS variance at the larger sample size; the remainder was per-ticker. The main-text claim that failures are ticker-specific rather than sector-driven is therefore not a small-sample artefact.

\subsubsection{Effective Spectral Rank of the Absolute Growth-Rate ACF}
\label{sec:spectral_rank}

The main-text spectral mechanism states the algebraic upper bound: the deflated transition matrix $\mathbf T - \mathbf 1\bar{\boldsymbol\pi}^\top$ has rank at most $K - 1$, so the absolute growth-rate ACF identity~\eqref{eq:acf_normalised} contains at most $K - 1$ non-unit decay modes. The empirical question is how many of those modes carry non-trivial contribution to $\rho_{|G|}(\tau)$ at the relevant lags. The per-mode contribution $a_k \lambda_k^\tau$ at $\tau \in \{1, 5, 20, 50\}$ for the fitted CHMM-N at $K = 18$ and at $K = 3$ on SPY IS, with modes ranked by lag-$1$ contribution magnitude $|a_k \lambda_k|$, is reported in Table~\ref{tab:spectral_rank}. Per-state moments $m_k = \E[|G_t| \,|\, s_t = k]$ are estimated by $200{,}000$ draws from each emission; eigenvectors are normalised so $\mathbf w_k^\top \mathbf v_k = 1$.

\begin{table}[!ht]
\centering
\small
\caption{\textbf{Per-mode contribution to the absolute growth-rate ACF identity}, CHMM-N at $K = 18$ (top panel; top six modes plus tail aggregate) and at $K = 3$ (bottom panel; full). Modes ordered by $|a_k \lambda_k|$ descending. ``cum'' is cumulative share of $\sum_k |a_k \lambda_k|$. At $K = 18$ the top mode (|$\lambda$| = $0.929$) carries $93.6\%$ of the lag-1 ACF, three modes reach $95\%$, and at lag $20$ only the dominant mode contributes non-negligibly. At $K = 3$ the dominant mode (|$\lambda$| = $0.953$) alone carries $96.8\%$ of lag-1 ACF; the temporal axis is therefore as well-satisfied at $K = 3$ as at $K = 18$, consistent with the main text finding that the binding constraint at low $K$ is the marginal mixture, not the eigenvalue spectrum.}
\label{tab:spectral_rank}
\begin{tabular}{c c c c c c c}
\toprule
rank & $|\lambda_k|$ & $|a_k|$ & $|a_k \lambda_k|$ & $a_k \lambda_k^5$ & $a_k \lambda_k^{20}$ & cum \\
\midrule
\multicolumn{7}{l}{\textit{$K = 18$, $\rho_{|G|}(1) = 0.301$, $\sigma^2_{|G|} = 2.555$}} \\
\midrule
1   & $0.929$ & $0.324$  & $0.301$ & $0.224$  & $0.0744$  & $0.936$ \\
2   & $0.335$ & $0.012$  & $0.004$ & $0.0001$ & $0.0000$  & $0.948$ \\
3   & $0.854$ & $0.003$  & $0.002$ & $-0.001$ & $-0.0001$ & $0.955$ \\
4   & $0.231$ & $0.008$  & $0.002$ & $\approx 0$ & $\approx 0$ & $0.961$ \\
5   & $0.231$ & $0.008$  & $0.002$ & $\approx 0$ & $\approx 0$ & $0.967$ \\
6   & $0.203$ & $0.008$  & $0.002$ & $\approx 0$ & $\approx 0$ & $0.971$ \\
7--17 & various & various & $\le 0.001$ each & $\approx 0$ & $\approx 0$ & $1.000$ \\
\midrule
\multicolumn{7}{l}{\textit{$K = 3$, $\rho_{|G|}(1) = 0.287$, $\sigma^2_{|G|} = 2.517$}} \\
\midrule
1 & $0.953$ & $0.292$ & $0.278$ & $0.230$ & $0.112$ & $0.968$ \\
2 & $0.866$ & $0.011$ & $0.009$ & $0.005$ & $0.0006$ & $1.000$ \\
\bottomrule
\end{tabular}
\end{table}

The empirical effective rank confirms the main-text framing: the ACF-MAE flatness across $K \in \{3, 6, 9, 12, 15, 18, 21\}$ documented in the companion $K$-sweep artefact in the code repository is the macroscopic signature of the temporal axis being driven by a single dominant decay mode at every state count in the sweep. The $K$-rank statement on $\mathbf T - \mathbf 1\bar{\boldsymbol\pi}^\top$ is correctly interpreted as a non-binding upper bound at $K \ge 3$; at $K = 2$ the bound is tight (only one non-unit eigenvalue available), but the failure mode is then the marginal mixture, which two Gaussian components cannot tile against the empirical equity-return target. The bilinear identity~\eqref{eq:acf_normalised} therefore plays a pedagogical role in this paper rather than serving as a direct $K$-selection criterion: it makes the rank constraint explicit, but the operational $K$ choice is driven by distributional fidelity (KS, AD, kurtosis), not by the count of non-unit eigenvalues in the spectral expansion.

\subsubsection{Cross-Ticker Spectral Effective-Rank Diagnostic}
\label{sec:spectral_rank_xticker}

To test whether the SPY-only result of Table~\ref{tab:spectral_rank} (a single non-unit eigenvalue carries $93.6\%$ of the lag-$1$ absolute growth-rate ACF at $K = 18$) generalises across tickers, we computed the cross-ticker distribution of the dominant-mode share under the same protocol on the 30-ticker sector-balanced panel plus SPY (Table~\ref{tab:spectral_xticker}).

\begin{table}[H]
\centering
\small
\caption{\textbf{Cross-ticker spectral effective-rank diagnostic at $K = 18$.} Distribution across 31 tickers (sector-balanced 30-ticker panel plus SPY control). ``dom share'' is the dominant non-unit eigenvalue's $|a_k \lambda_k|$ as a fraction of the sum over all non-unit eigenvalues at lag $\tau = 1$. ``$n_{95}$'' / ``$n_{99}$'' are the number of modes needed for $95\%$ / $99\%$ cumulative share. The SPY value of $0.936$ is in the right tail of the distribution; the cross-ticker median is $0.76$.}
\label{tab:spectral_xticker}
\begin{tabular}{l c c c}
\toprule
Statistic                            & dom share & $n_{95}$ & $n_{99}$ \\
\midrule
median                               & $0.756$ & $6$  & $11$ \\
$[Q_1, Q_3]$                         & $[0.661, 0.858]$ & $[4, 7]$ & $[10, 12]$ \\
minimum (NEM)                        & $0.326$ & $11$ & $14$ \\
SPY (Table~\ref{tab:spectral_rank})    & $0.936$ & $2$  & $10$ \\
\bottomrule
\end{tabular}
\end{table}

\paragraph{Reading.}
The SPY share of $93.6\%$ sat in the right tail of the distribution; the cross-ticker median of $76\%$ was still well above the $1/(K-1) = 1/17 = 5.9\%$ uniform null, so the rank-non-binding statement held across tickers, but the gap between SPY and the median ticker is wide enough that the representative number is the cross-ticker median rather than the SPY share.

\subsubsection{Ryd\'{e}n K = 2 Replication}
\label{sec:ryden_replication}

Direct replication of the \citet{ryden1998stylized} $K = 2$ setting on SPY IS under quantile and random-seed initialisation gave an essentially constant absolute growth-rate ACF-MAE across six initialisations (one quantile + five random seeds, drawn from a moderate perturbation of sample moments), while the IS KS pass rate fell well short of the heavier-tailed CHMM variants. The constant ACF-MAE confirms the main-text reading: the binding low-$K$ constraint was distributional rather than temporal.

\subsubsection{\texorpdfstring{Cross-Ticker Per-State-Resolution Sensitivity ($K^\star = 3$ vs $K^\star = 6$ vs $K = 18$)}{Cross-Ticker Per-State-Resolution Sensitivity (K* = 3 vs K* = 6 vs K = 18)}}
\label{sec:cross_ticker_k6_panel}

The cross-ticker aggregate distribution at all three state counts used in this paper is reported in Table~\ref{tab:cross_ticker}: the default $K^\star = 3$, selected by the pre-2020 $k$-fold cross-validation and robust across state resolutions; the sensitivity reference $K^\star = 6$; and the extended sensitivity reference $K = 18$.

The comparison uses three axes: KS, kurtosis residual, and absolute growth-rate ACF-MAE. The KS distributions were within a few pp of each other on OoS median across $K^\star = 3, 6, 18$ and at identical $11/30$ failure count, with the same regime-introduction tickers driving the failures at each state resolution (LLY, UNH, NEM, NFLX, NEE, WMT, BAC, HD, JPM at OoS KS $< 60\%$), so the cross-ticker KS axis was essentially $K$-robust on this universe. The kurtosis-residual axis was not $K$-robust: the per-ticker median residual was smallest at $K = 18$ and larger at the smaller state counts, because at lower state resolution the per-state $\nu_k$ shrinkage at uniform $\lambda$ cannot be smoothed across as many regimes, so the residual heavy tails leak into the simulated kurtosis. The $|G_t|$ ACF-MAE was essentially $K$-robust across the three state counts. In practice the three state counts are interchangeable on KS; $K = 18$ gives the best kurtosis fidelity; $K^\star = 3$ remains the default.

\subsubsection{Block-Bootstrap CIs on Observed Kurtosis (IS vs OoS)}
\label{sec:kurtosis_bootstrap_ci}

A natural question on the IS / OoS kurtosis disagreement (observed $7.68$ IS, $5.29$ OoS, a $31\%$ drop) is whether it is itself statistically distinguishable. We ran a stationary block bootstrap~\citep{politis1994stationary} on the IS and OoS series at mean block lengths $L \in \{5, 10, 20, 50\}$, $B = 5{,}000$ replicates per $L$ per window.

\begin{table}[!ht]
\centering
\small
\caption{\textbf{Stationary block bootstrap CIs on observed excess kurtosis.} SPY IS ($T = 2{,}516$) and OoS ($T = 572$) windows, $B = 5{,}000$ replicates per $L$. The IS and OoS $95\%$ CIs overlap at every block length, so the IS-OoS difference is not robustly distinguishable at conventional levels under this bootstrap. $\Pr(\text{IS} > \text{OoS})$ is the empirical bootstrap one-sided $p$-value for the alternative ``IS kurtosis $>$ OoS kurtosis''.}
\label{tab:kurtosis_bootstrap}
\begin{tabular}{c c c c c c}
\toprule
$L$ & IS median & IS $95\%$ CI & OoS median & OoS $95\%$ CI & $\Pr(\text{IS} > \text{OoS})$ \\
\midrule
$5$  & $7.21$ & $[2.99, 12.83]$ & $5.07$ & $[1.09, 8.97]$ & $0.748$ \\
$10$ & $7.34$ & $[2.49, 12.59]$ & $4.91$ & $[0.99, 8.66]$ & $0.756$ \\
$20$ & $7.30$ & $[2.17, 12.40]$ & $4.91$ & $[0.90, 8.26]$ & $0.739$ \\
$50$ & $7.18$ & $[1.92, 12.42]$ & $4.99$ & $[0.96, 7.70]$ & $0.733$ \\
\bottomrule
\end{tabular}
\end{table}

The IS lower bound $1.92$--$2.99$ and the OoS upper bound $7.70$--$8.97$ overlapped heavily at every block length; the IS-OoS kurtosis difference of $\sim 2.4$ units was not statistically distinguishable at the $5\%$ level under this bootstrap. $\Pr(\text{IS} > \text{OoS})$ at $L = 10$ was $0.756$: the data favoured the alternative but did not exclude the null. Operationally, the IS-OoS kurtosis disagreement invoked in the main text (the descriptive analysis and the Table~\ref{tab:variant_choice} variant-decision caveat) should be read as a point-estimate observation rather than a tested claim, and per-variant rankings by closeness of simulated to observed kurtosis should similarly be interpreted as point-comparisons inside a wide CI envelope.

\paragraph{Bootstrap-CI placement of the penalised CHMM-t IS kurtosis.}
\label{sec:kurtosis_ci_placement}
A natural follow-up on the main text's penalised CHMM-t at $\lambda = 20$ is whether its simulated IS kurtosis distribution sits inside the bootstrap CI on observed: the aggregate-mean simulated IS kurtosis ($18.87$ at $K^\star = 3$, above the $L = 20$ CI upper bound $12.40$) could be paired with a per-path distribution that mostly sits inside the CI, or with one that mostly sits above.

\begin{table}[!ht]
\centering
\small
\caption{\textbf{Per-path simulated IS excess-kurtosis distribution under penalised CHMM-t at $\lambda = 20$, against the $L = 20$ block-bootstrap CI on observed.} $1{,}000$ paths per row, $T_{\text{IS}} = 2{,}516$, seed $20260420$. The CI is $[2.17, 12.40]$ at $L = 20$ from Table~\ref{tab:kurtosis_bootstrap}; the observed IS excess kurtosis is $7.68$. ``in CI \%'' is the fraction of simulated paths whose per-path excess kurtosis falls inside the CI; ``$\le$ CI\_HI \%'' is the fraction below the upper bound (the relevant one-sided test for overshoot).}
\label{tab:kurtosis_ci_placement}
\begin{tabular}{c c c c c c c c}
\toprule
$K$ & agg.\ kurt & median & sd & $Q_5$ & $Q_{95}$ & $\le 12.40$ (\%) & in CI (\%) \\
\midrule
$3$  & $16.65$ & $7.77$ & $74.4$ & $3.75$ & $34.62$ & $76.6$ & $76.6$ \\
$6$  & $10.13$ & $6.74$ & $13.5$ & $3.63$ & $26.30$ & $81.8$ & $81.7$ \\
$18$ & $\phantom{0}8.94$ & $6.09$ & $16.9$ & $3.54$ & $18.11$ & $89.9$ & $89.9$ \\
\bottomrule
\end{tabular}
\end{table}

The per-path \emph{median} simulated IS excess kurtosis sat essentially at the observed value at every $K$ (Table~\ref{tab:kurtosis_ci_placement}). The medians were $7.77$ at $K^\star = 3$, $6.74$ at $K^\star = 6$, and $6.09$ at $K = 18$, all within $\sim 1.6$ units of observed $7.68$ and well inside the bootstrap CI. The aggregate-mean simulated IS kurtosis sat above the CI upper bound at $K^\star = 3$ and $K^\star = 6$ because the per-path distribution had a heavy right tail of paths at $Q_{95} \in \{18, 26, 35\}$ that dragged the mean up, but the bulk of the path-level mass (76--90\%) sat inside the bootstrap CI at every state count. The main-text framing, that the penalised CHMM-t at $\lambda = 20$ provides the cleanest IS heavy-tail match, is supported when read as the per-path median, not as the aggregate mean: a small number of heavy-right-tailed paths under the per-state $\nu_k$ ECM produced an aggregate-mean overshoot, but the path-level distribution was centred near observed and statistically indistinguishable from observed at $\alpha = 0.05$ on a one-sided overshoot test for the bulk of paths.

\subsubsection{\texorpdfstring{Shared-$\nu$ Student-t HMM Ablation}{Shared-nu Student-t HMM Ablation}}
\label{sec:chmm_t_shared_nu}

A natural diagnostic for the per-state $\nu_k$ aggregate-mean overshoot is the shared-$\nu$ ablation: refit CHMM-t with a single $\nu$ shared across all $K$ states, fit by ECM with golden-section search on the aggregate $Q$-function over $\nu_{\text{bounds}} = (2.1, 50)$. This is the standard one-parameter Student-$t$ HMM in the time-series literature, not a per-state $\nu_k$ mixture.

\begin{table}[!ht]
\centering
\small
\caption{\textbf{Shared-$\nu$ Student-t HMM ablation against the main text per-state $\nu_k$ row} ($1{,}000$ paths, $T_{\text{IS}} = 2{,}516$, $T_{\text{OoS}} = 572$, seed $20260420$, no $1/\nu$ penalty). Aggregate sim-kurt columns are mean-of-paths. Compare against the main text penalised CHMM-t ($\lambda = 20$): $K^\star = 3$ row $18.87$ IS / $10.61$ OoS (Table~\ref{tab:model_comparison}); $K = 18$ row $8.56$ IS / $7.07$ OoS (from the same penalised ECM run; consistent with the rate-sweep value $8.43$ at $\lambda=20$).}
\label{tab:chmm_t_shared_nu}
\begin{tabular}{c c c c c c c c c}
\toprule
& & \multicolumn{4}{c}{IS} & \multicolumn{2}{c}{OoS} & \\
\cmidrule(lr){3-6} \cmidrule(lr){7-8}
$K$ & $\hat\nu_{\text{shared}}$ & KS\,(\%) & sim kurt & $|G_t|$ ACF-MAE & raw ACF-MAE & KS\,(\%) & sim kurt & \\
\midrule
$3$  & $5.81$ & $91.9$ & $4.68$ & $0.0531$ & $0.0235$ & $82.1$ & $4.46$ & \\
$6$  & $4.67$ & $92.7$ & $9.38$ & $0.0518$ & $0.0234$ & $82.6$ & $6.95$ & \\
$18$ & $5.80$ & $\mathbf{95.8}$ & $\mathbf{6.25}$ & $0.0542$ & $0.0234$ & $\mathbf{88.0}$ & $\mathbf{5.00}$ & \\
\bottomrule
\end{tabular}
\end{table}

\paragraph{Reading.}
The aggregate-mean IS overshoot disappeared under shared-$\nu$ at every $K$: the largest IS sim kurt was $9.38$ at $K^\star = 6$ (vs.\ observed $7.68$, well inside the bootstrap CI), versus $14.4$ for the per-state $\nu_k$ unpenalised row in the main text. The $K = 18$ shared-$\nu$ row was the cleanest single-row IS / OoS heavy-tail match in the entire panel: $6.25$ IS / $5.00$ OoS, against observed $7.68$ / $5.29$ and against $8.56$ / $7.07$ for the penalised CHMM-t at $\lambda = 20$. IS / OoS KS pass rates were within $\sim 1$pp of the penalised row, and the $|G_t|$ ACF-MAE was identical to it at $K = 18$. The per-state $\nu_k$ design is therefore the binding constraint behind the aggregate-mean overshoot that motivates the $\lambda$-shrinkage and bracket-lift discussion in the main text: a single shared $\nu$ at $K = 18$ produced the best joint KS / kurtosis row in the panel without any penalty. The analysis retains the per-state $\nu_k$ + $\lambda = 20$ construction for direct comparability with the \citet{peel2000robust, liu1995ml} tradition; for heavy-tail fidelity without a penalty hyperparameter, the shared-$\nu$ fit at $K = 18$ is the better choice.

\subsubsection{\texorpdfstring{Christoffersen-cc Monte Carlo Power Calibration at $T_{\text{OoS}} = 572$}{Christoffersen-cc Monte Carlo Power Calibration at T\_OoS = 572}}
\label{sec:christoffersen_power}

A Monte Carlo power calibration of the Christoffersen-cc joint conditional-coverage test~\citep{christoffersen1998evaluating} at the $T_{\text{OoS}} = 572$ length used for Table~\ref{tab:cond_var} simulates breach indicator sequences from a two-state Markov chain on $\{0, 1\}$ with marginal probability $\alpha$ and second eigenvalue $\rho$, where $\rho = 0$ is the i.i.d.\ null and $\rho > 0$ measures the level of breach clustering; $B = 5{,}000$ replicates per cell.

\begin{table}[!ht]
\centering
\small
\caption{\textbf{Christoffersen-cc power at $T_{\text{OoS}} = 572$.} Empirical rejection rate at nominal $\alpha = 0.05$, across alpha levels and Markov-chain second-eigenvalue $\rho$. $\rho = 0$ row is the empirical Type-I error under the i.i.d.\ null (should sit near $5\%$ if asymptotic chi-squared is well-calibrated). Critical values $\chi^2_1(0.05) = 3.841$, $\chi^2_2(0.05) = 5.991$.}
\label{tab:christoffersen_power}
\begin{tabular}{c c c c c}
\toprule
$\alpha$ & $\rho$ & rej-UC \% & rej-IND \% & rej-CC \% \\
\midrule
$0.01$ & $0.00$ & $\phantom{0}4.9$ & $\phantom{0}1.6$ & $\phantom{0}3.2$ \\
$0.01$ & $0.05$ & $\phantom{0}6.3$ & $15.9$ & $11.4$ \\
$0.01$ & $0.10$ & $\phantom{0}7.5$ & $28.5$ & $22.1$ \\
$0.01$ & $0.20$ & $\phantom{0}9.4$ & $51.3$ & $42.6$ \\
$0.01$ & $0.30$ & $13.9$ & $67.6$ & $62.8$ \\
$0.01$ & $0.50$ & $21.4$ & $75.9$ & $81.5$ \\
\midrule
$0.05$ & $0.00$ & $\phantom{0}4.9$ & $\phantom{0}3.8$ & $\phantom{0}4.1$ \\
$0.05$ & $0.05$ & $\phantom{0}5.3$ & $18.8$ & $15.3$ \\
$0.05$ & $0.10$ & $\phantom{0}6.1$ & $46.2$ & $39.0$ \\
$0.05$ & $0.20$ & $\phantom{0}9.4$ & $86.9$ & $83.9$ \\
$0.05$ & $0.30$ & $13.6$ & $97.4$ & $97.9$ \\
$0.05$ & $0.50$ & $24.3$ & $99.7$ & $100.0$ \\
\bottomrule
\end{tabular}
\end{table}

\paragraph{Reading.}
At $\rho = 0$ the empirical rejection rate sat at $4\%$--$5\%$ for all three statistics, confirming that the asymptotic $\chi^2$ calibration is correctly sized at $T = 572$. Power was asymmetric across $\alpha$. At $\alpha = 0.05$ ($\sim 28.6$ expected breaches), Christoffersen-cc reached $80\%$ power against $\rho \ge 0.20$ ($83.9\%$) and essentially full power at $\rho = 0.50$. At $\alpha = 0.01$ ($\sim 5.7$ expected breaches) it reached $80\%$ power only at $\rho \ge 0.50$ ($81.5\%$); at $\rho = 0.20$ the rejection rate was $42.6\%$, well below conventional power thresholds.
The reading for Table~\ref{tab:cond_var} is therefore tier-dependent. The $\alpha = 0.05$ clean pass at every $(K, \alpha)$ carried strong power against moderate clustering ($\rho \ge 0.20$); the $\alpha = 0.01$ rows carried strong power only against severe clustering ($\rho \ge 0.50$), so the $\alpha = 0.01$ pass should be read as ``not detectably worse than a strongly-clustered alternative'' rather than ``calibrated against any plausible alternative''. The regime-conditional construction therefore stands at $\alpha = 0.05$ and is power-bounded at $\alpha = 0.01$; the main-text VaR back-test should be read with this calibration in mind.

\subsubsection{Engle-Manganelli Dynamic Quantile Test}
\label{sec:engle_manganelli_dq}

As a higher-power conditional-coverage alternative to Christoffersen-cc on the same OoS window, we ran the Engle-Manganelli Dynamic Quantile (DQ) test in the standard four-lag specification~\citep{engle2004caviar},
\begin{equation}
    \text{Hit}_t - \alpha
    = \beta_0 + \sum_{i = 1}^{4} \beta_i\,(\text{Hit}_{t-i} - \alpha) + \beta_{5}\,\widehat{\text{VaR}}_t + u_t,
    \label{eq:dq_regression}
\end{equation}
on the IS-fixed regime-conditional VaR series of the main text VaR back-test, with $\text{Hit}_t = \mathbf 1\{R_{\text{OoS},t} < \widehat{\text{VaR}}_t(\alpha)\}$. The DQ test statistic is $\widehat{\boldsymbol\beta}^\top \mathbf X^\top \mathbf X\,\widehat{\boldsymbol\beta} / [\alpha(1-\alpha)]$, distributed $\chi^2(6)$ under the null of correct conditional coverage; the $5\%$ critical value is $\chi^2_6(0.95) = 12.59$.

\begin{table}[!ht]
\centering
\small
\caption{\textbf{Engle-Manganelli DQ test on the regime-conditional VaR}, alongside Christoffersen-cc on the same breach series, for CHMM-N at $K \in \{3, 18\}$, $\alpha \in \{0.01, 0.05\}$; $T_{\text{OoS}} = 572$, $q = 4$ lagged Hit indicators, $\chi^2_6$ critical at $5\%$ is $12.59$.}
\label{tab:engle_manganelli_dq}
\begin{tabular}{c c c c c c}
\toprule
$K$ & $\alpha$ & breach \% & $\text{cc}\,p$ & $\text{DQ}$ & $\text{DQ}\,p$ \\
\midrule
$3$  & $0.01$ & $1.57$ & $0.137$ & $12.29$ & $0.056$ \\
$3$  & $0.05$ & $6.12$ & $0.491$ & $\phantom{1}9.32$ & $0.156$ \\
$18$ & $0.01$ & $1.57$ & $0.137$ & $\mathbf{15.46}$ & $\mathbf{0.017}$ \\
$18$ & $0.05$ & $4.55$ & $0.678$ & $\phantom{1}3.99$ & $0.678$ \\
\bottomrule
\end{tabular}
\end{table}

\paragraph{Reading.}
Both conditional-coverage tests passed cleanly at $\alpha = 0.05$ at both state counts (Table~\ref{tab:engle_manganelli_dq}), so the $\alpha = 0.05$ clean pass survives the higher-power alternative. The p-values were cc $0.491$ and DQ $0.156$ at $K = 3$, and $0.678$ for both tests at $K = 18$. At $\alpha = 0.01$ the DQ test rejected conditional coverage at $K = 18$ ($p = 0.017$) where Christoffersen-cc did not ($p = 0.137$), and was borderline at $K = 3$ ($p = 0.056$). This is consistent with the power-calibration result of Table~\ref{tab:christoffersen_power}: at $\alpha = 0.01$ Christoffersen-cc has only $42.6\%$ power against moderate breach-clustering eigenvalues $\rho = 0.20$, so the higher-power DQ test detects miscalibration that Christoffersen-cc misses. The DQ result therefore reinforces the reading that $\alpha = 0.05$ is the operationally informative tier; the $\alpha = 0.01$ tier is not decided by the panel and should be read as power-limited rather than as a clean pass, with the DQ rejection at $K = 18$ the concrete instance.

\subsubsection{Quarterly-Refit Regime-Conditional VaR Back-Test}
\label{sec:quarterly_refit_cond_var}

Re-fitting CHMM-N at $K \in \{3, 18\}$ every $63$ trading days on a rolling $5$y window and running the forward-filter through the OoS window under each refit's parameters tests whether periodic refit improves the Christoffersen-cc statistic on the main OoS window.

\begin{table}[!ht]
\centering
\small
\caption{\textbf{Quarterly-refit regime-conditional VaR back-test on SPY OoS.} $T_{\text{OoS}} = 572$, refit cadence $= 63$ days, train window $= 1{,}260$ days, seed $= 20260420$. Compare against Table~\ref{tab:cond_var} (IS-fixed parameters). Critical values: $\chi^2_1(0.05) = 3.841$, $\chi^2_2(0.05) = 5.991$.}
\label{tab:cond_var_quarterly_refit}
\begin{tabular}{l c c c c c c c c c}
\toprule
$K$ & $\alpha$ & breaches & br rate & med VaR & $\text{LR}_{\text{uc}}$ & $\text{LR}_{\text{ind}}$ & $\text{LR}_{\text{cc}}$ & $p_{\text{cc}}$ \\
\midrule
$3$  & $0.01$ & $\phantom{0}6$ & $1.05\%$ & $-5.10$ & $0.014$ & $3.97$ & $3.98$ & $0.137$ \\
$3$  & $0.05$ & $29$ & $5.07\%$ & $-3.13$ & $0.006$ & $0.19$ & $0.20$ & $0.906$ \\
$18$ & $0.01$ & $\phantom{0}6$ & $1.05\%$ & $-5.84$ & $0.014$ & $3.97$ & $3.98$ & $0.137$ \\
$18$ & $0.05$ & $19$ & $3.32\%$ & $-3.14$ & $3.83$ & $2.08$ & $5.91$ & $0.052$ \\
\bottomrule
\end{tabular}
\end{table}

The quarterly-refit construction passed Christoffersen-cc at $\alpha = 0.05$ on three of four rows; the $K = 18, \alpha = 0.05$ row sat at $p_{\text{cc}} = 0.052$ (marginal). Refit tightened VaR coverage ($3.32\%$ vs.\ $4.55\%$ IS-fixed) but did not strictly improve the cc-statistic at this $T_{\text{OoS}}$.

\subsubsection{Walk-Forward Refit-Cadence Sweep (Monthly / Weekly)}
\label{sec:walkforward_refit_cadence}

A within-fold refit-cadence sweep at monthly ($21$ trading days) and weekly ($5$ trading days) cadence on the six walk-forward folds at $K^\star = 3$, $\alpha = 0.05$, did not close the W2 (COVID) Christoffersen-cc rejection at any cadence; monthly refit instead introduced new failures on other folds via refit-cycle parameter drift. The mitigation for W2 / W4 is therefore not refit cadence but a model class with explicit regime-introduction handling.

\subsubsection{Quarterly-Refit 30-Ticker Cross-Ticker Panel}
\label{sec:cross_ticker_quarterly_refit}

For a quarterly-refit version of the 30-ticker cross-ticker panel (Table~\ref{tab:cross_ticker}), we re-fitted the penalised CHMM-t at $\lambda = 20$ on a rolling $5$-year window every $63$ trading days through the OoS span (10 refits per ticker, 300 fits total per state resolution), simulated the next quarter under each refit, and concatenated the per-quarter simulations into a single OoS path matrix per ticker, at the default $K^\star = 3$ and at the sensitivity references $K^\star = 6$ and $K = 18$. Same $N_{\text{paths}} = 1{,}000$ and seed $20260420$ as Table~\ref{tab:cross_ticker}.

\begin{table}[!ht]
\centering
\small
\caption{\textbf{Quarterly-refit cross-ticker panel at $K^\star = 3$, $K^\star = 6$, and $K = 18$.} Aggregate distribution across the 30-ticker universe under the quarterly-refit protocol at all three state counts, vs.\ the IS-fixed $K = 18$ protocol of Table~\ref{tab:cross_ticker}. Refit cadence $= 63$ OoS trading days; train window $= 1{,}260$ obs ($\sim 5$y rolling). Per-ticker detail in the results files.}
\label{tab:cross_ticker_quarterly_refit}
\begin{tabular}{l c c c c}
\toprule
Metric & IS-fixed ($K = 18$) & Refit ($K^\star = 3$) & Refit ($K^\star = 6$) & Refit ($K = 18$) \\
\midrule
IS KS\% median        & $99.5\%$ & $96.4\%$ & $98.8\%$ & $99.4\%$ \\
IS KS\% mean $\pm$ sd & $99.3 \pm 0.6\%$ & $95.1 \pm 4.5\%$ & $97.4 \pm 4.4\%$ & $99.2 \pm 0.6\%$ \\
OoS KS\% median       & $73.4\%$ & $84.7\%$ & $\mathbf{85.8\%}$ & $83.0\%$ \\
OoS KS\% mean $\pm$ sd & $66.8 \pm 29.5\%$ & $75.0 \pm 23.1\%$ & $\mathbf{76.0 \pm 21.7\%}$ & $77.2 \pm 21.2\%$ \\
OoS KS\% $[Q_1, Q_3]$ & $[49.1, 94.3]\%$ & $[56.3, 94.5]\%$ & $[57.6, 94.3]\%$ & $[62.0, 95.4]\%$ \\
Tickers OoS KS $< 60\%$ & $11 / 30 = 36.7\%$ & $8 / 30 = 26.7\%$ & $9 / 30 = 30.0\%$ & $\mathbf{7 / 30 = 23.3\%}$ \\
\bottomrule
\end{tabular}
\end{table}

\paragraph{Reading.}
Quarterly refit shifted the OoS KS distribution materially at all three state resolutions. At $K^\star = 3$: median $+15.6$pp ($69.1 \to 84.7\%$ vs the static-fit baseline in Table~\ref{tab:cross_ticker}), mean $+8.8$pp ($66.2 \to 75.0\%$), failure count down by $27\%$ ($11 \to 8$). At $K = 18$: median $+9.6$pp ($73.4 \to 83.0\%$), mean $+10.4$pp, IQR tighter ($45.2$pp $\to 33.4$pp), failure count down by $36\%$ ($11 \to 7$). At $K^\star = 6$: median $+10.7$pp ($75.1 \to 85.8\%$), mean $+9.5$pp, failure count down by $18\%$ ($11 \to 9$). The three state resolutions were within $\sim 3$pp of each other on the refit median ($84.7$ to $85.8\%$); the default $K^\star = 3$ was at parity with the sensitivity references under the refit protocol. The largest individual improvements concentrated on the regime-introduction tickers that the IS-fixed Table~\ref{tab:cross_ticker} flagged: at $K = 18$, LLY $7.6\% \to 83.7\%$, UNH $14.5\% \to 47.3\%$, NEM $5.6\% \to 15.4\%$, HD $41.4\% \to 82.2\%$, NFLX $45.1\% \to 61.5\%$. The mechanism is direct: as the regime introduced new $\sigma$ levels in the OoS window, each quarterly refit picked up the most recent $5$y including the new regime and the simulator could match the OoS distribution.

A small number of tickers regressed: DIS $96.4\% \to 48.4\%$, BAC $84.8\% \to 55.8\%$, AAPL $95.1\% \to 66.5\%$. The pattern is that tickers whose 2024--2026 OoS distribution is well-spanned by the full 2014--2024 IS lost information when the train window shrank to the most recent 5y; the rolling refit is therefore a trade-off, not a strict improvement. Operationally, a refit-trigger rule (refit only when the rolling KS or a Page-Hinkley statistic on $|G_t|$ crosses a threshold) would capture the regime-introduction wins without paying the cost on the well-spanned tickers; the rule is logged as a follow-up companion experiment.

The conclusion is that the periodic-refit mitigation invoked in the main-text Discussion is exercised here: the universe-scale failure rate dropped from $36.7\%$ to $23.3\%$ under quarterly refit, with the largest gains on the single-name regime-introduction tickers.

\subsubsection{Stylized-Facts Figure (Empirical SPY IS)}
\label{sec:supp_stylized_facts}

\begin{figure}[H]
\centering
\begin{subfigure}[b]{0.49\textwidth}\centering
\includegraphics[width=\textwidth]{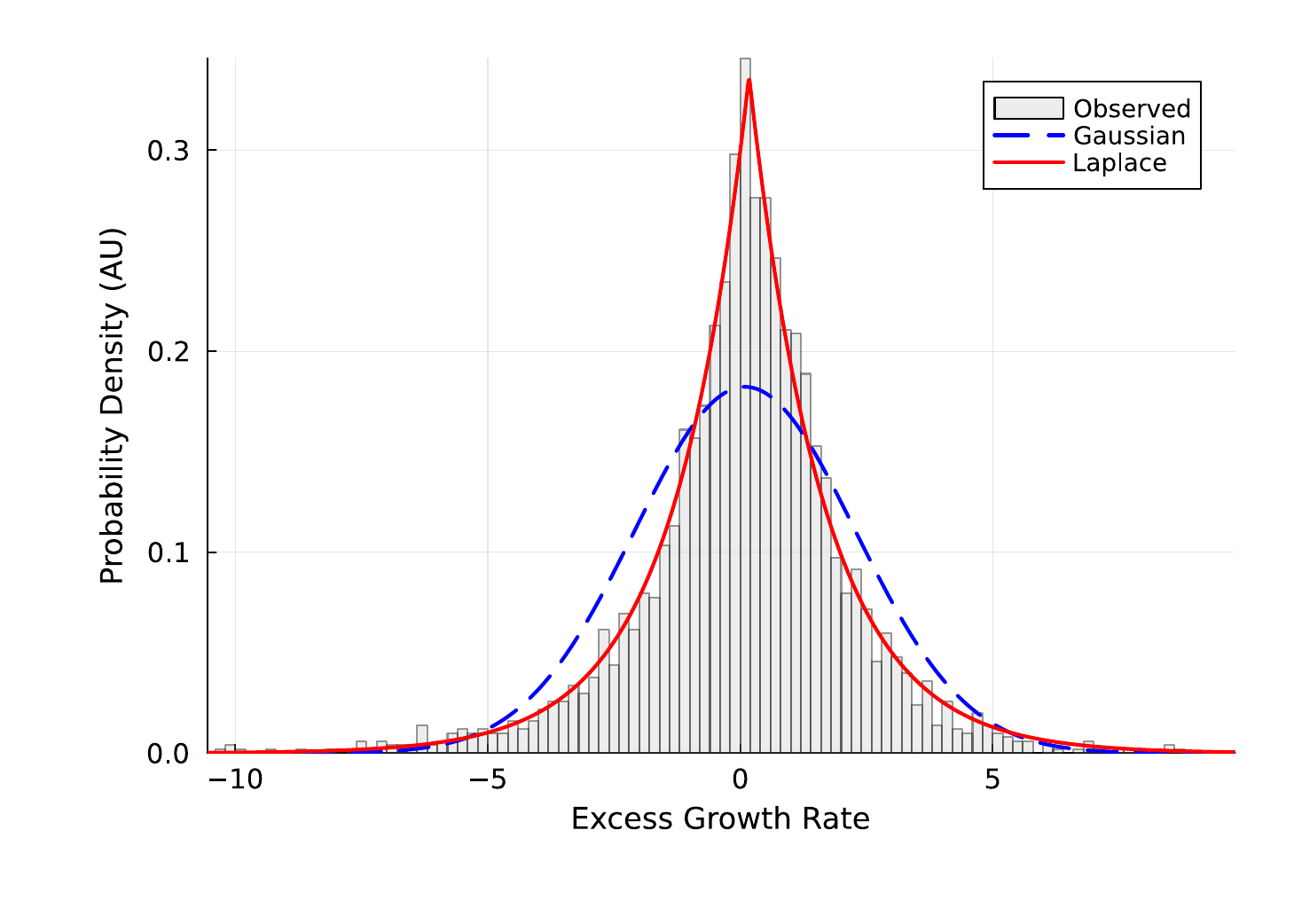}
\caption{Marginal density.}
\end{subfigure}\hfill
\begin{subfigure}[b]{0.49\textwidth}\centering
\includegraphics[width=\textwidth]{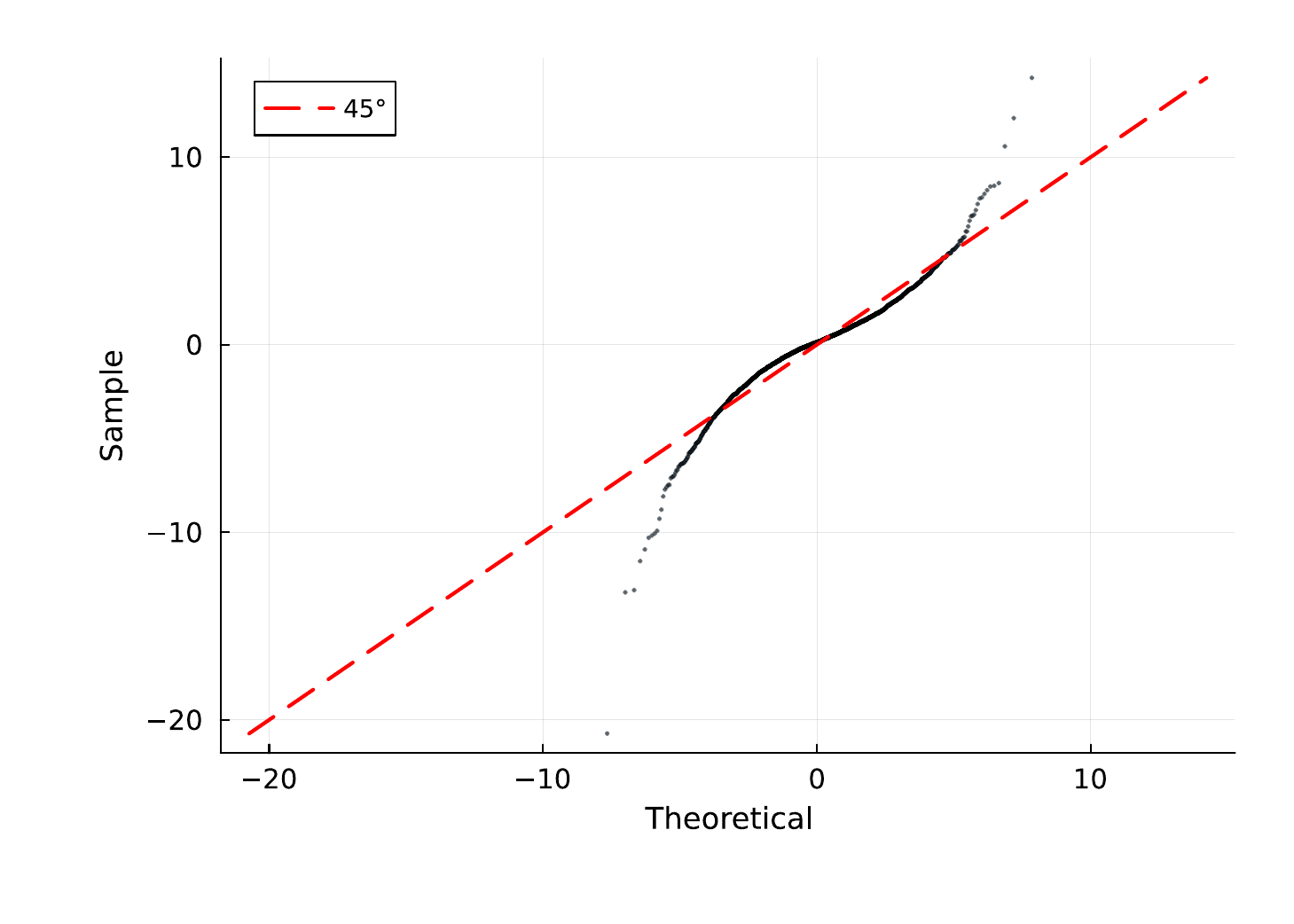}
\caption{Normal Q-Q plot.}
\end{subfigure}\\[0.5em]
\begin{subfigure}[b]{0.49\textwidth}\centering
\includegraphics[width=\textwidth]{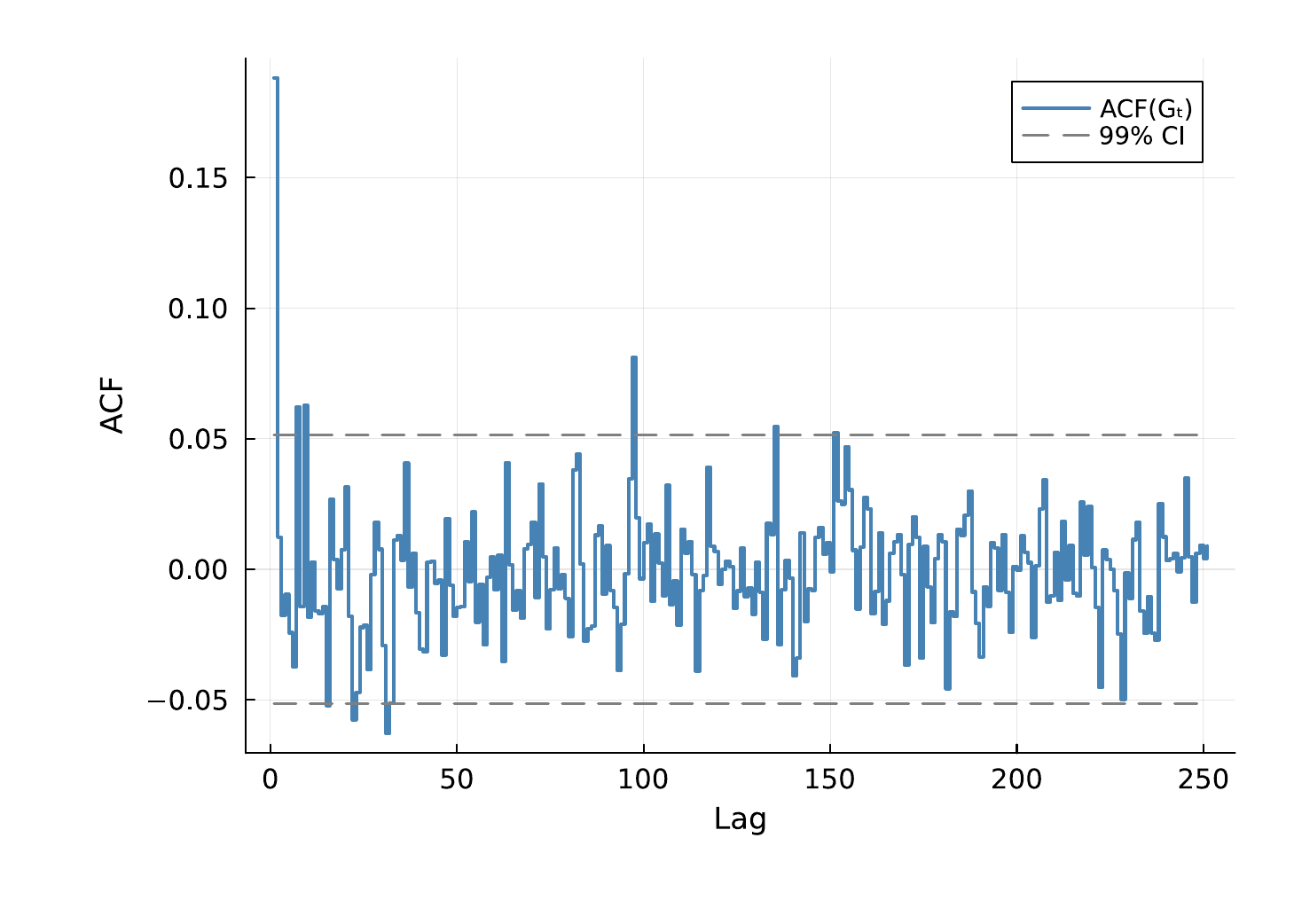}
\caption{ACF of raw $G_t$.}
\end{subfigure}\hfill
\begin{subfigure}[b]{0.49\textwidth}\centering
\includegraphics[width=\textwidth]{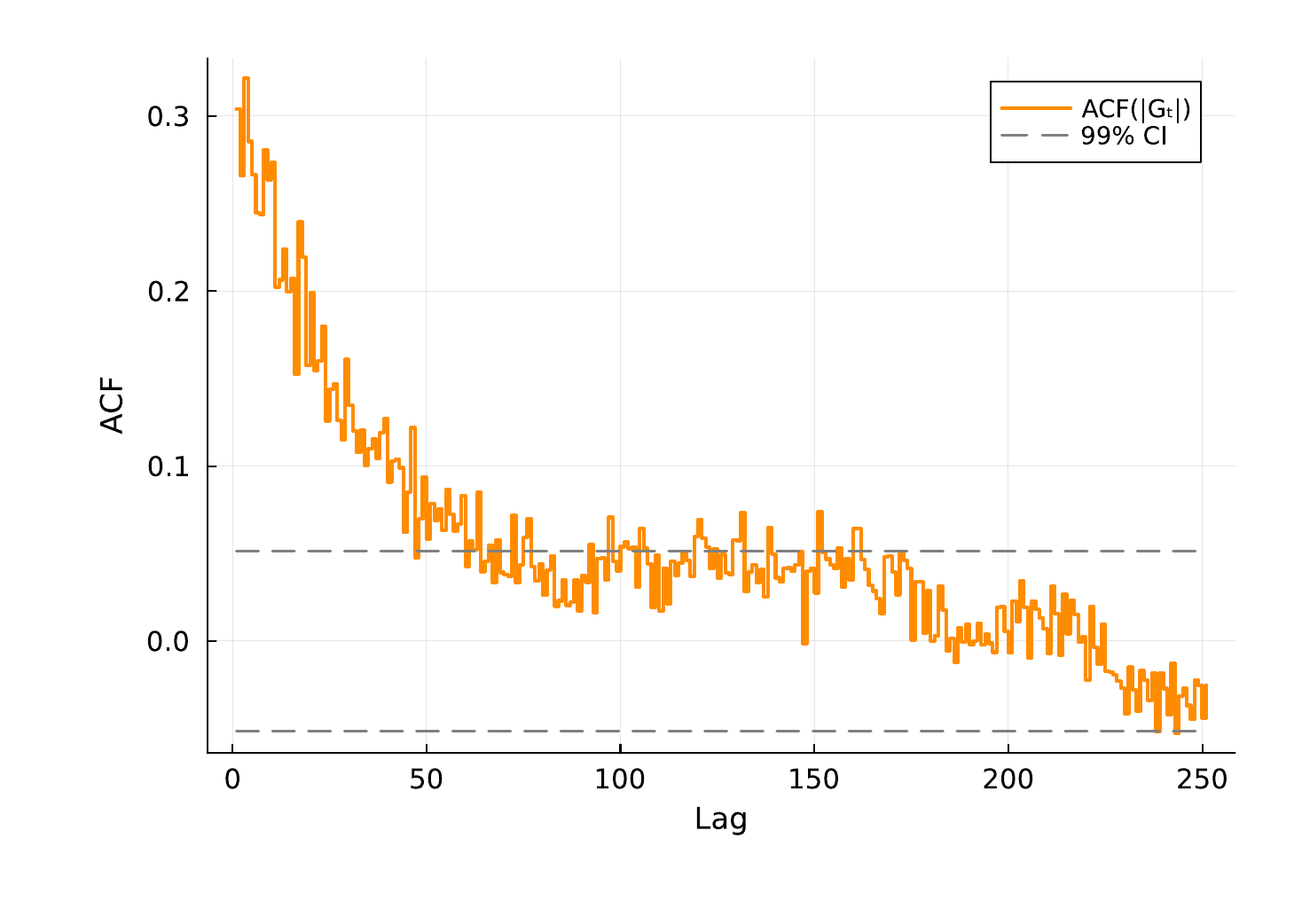}
\caption{ACF of $|G_t|$.}
\end{subfigure}
\caption{\textbf{The three Cont stylized facts on SPY IS}. (a)~marginal density with maximum-likelihood Gaussian and Laplace overlays; (b)~normal Q-Q plot; (c)~empirical ACF of raw $G_t$ at lags 1--252 with $99\%$ Bartlett band; (d)~empirical ACF of $|G_t|$ on the same window.}
\label{fig:stylized_facts}
\end{figure}

\subsubsection{\texorpdfstring{CHMM-GED $\hat p_k$ Partition (Gaussian-bulk / Laplace-tail)}{CHMM-GED p\_k Partition (Gaussian-bulk / Laplace-tail)}}
\label{sec:supp_p_partition}

The GED variant carries a per-state shape $p_k \in [p_{\min}, p_{\max}] = [0.5, 3.0]$ that nests Gaussian ($p = 2$) and Laplace ($p = 1$) as boundary cases. The empirical fit on SPY IS at $K = 18$ partitioned states bimodally (Figure~\ref{fig:p_hist}): eleven states at the upper bracket $p_k = 3.0$ (thirteen Gaussian-like at $p_k \ge 1.85$), one at $p_k = 1.6$, and four in the Laplace regime $p_k \in [0.86, 1.24]$. The four Laplace-shape states concentrated in the high-volatility tail of the state ordering. The partition replicated across $10$ Monte Carlo seeds and across the cross-asset universe; we read it as the most distinctive empirical finding of the four-emission framework.

\begin{figure}[H]
\centering
\begin{subfigure}[b]{0.49\textwidth}\centering
\includegraphics[width=\textwidth]{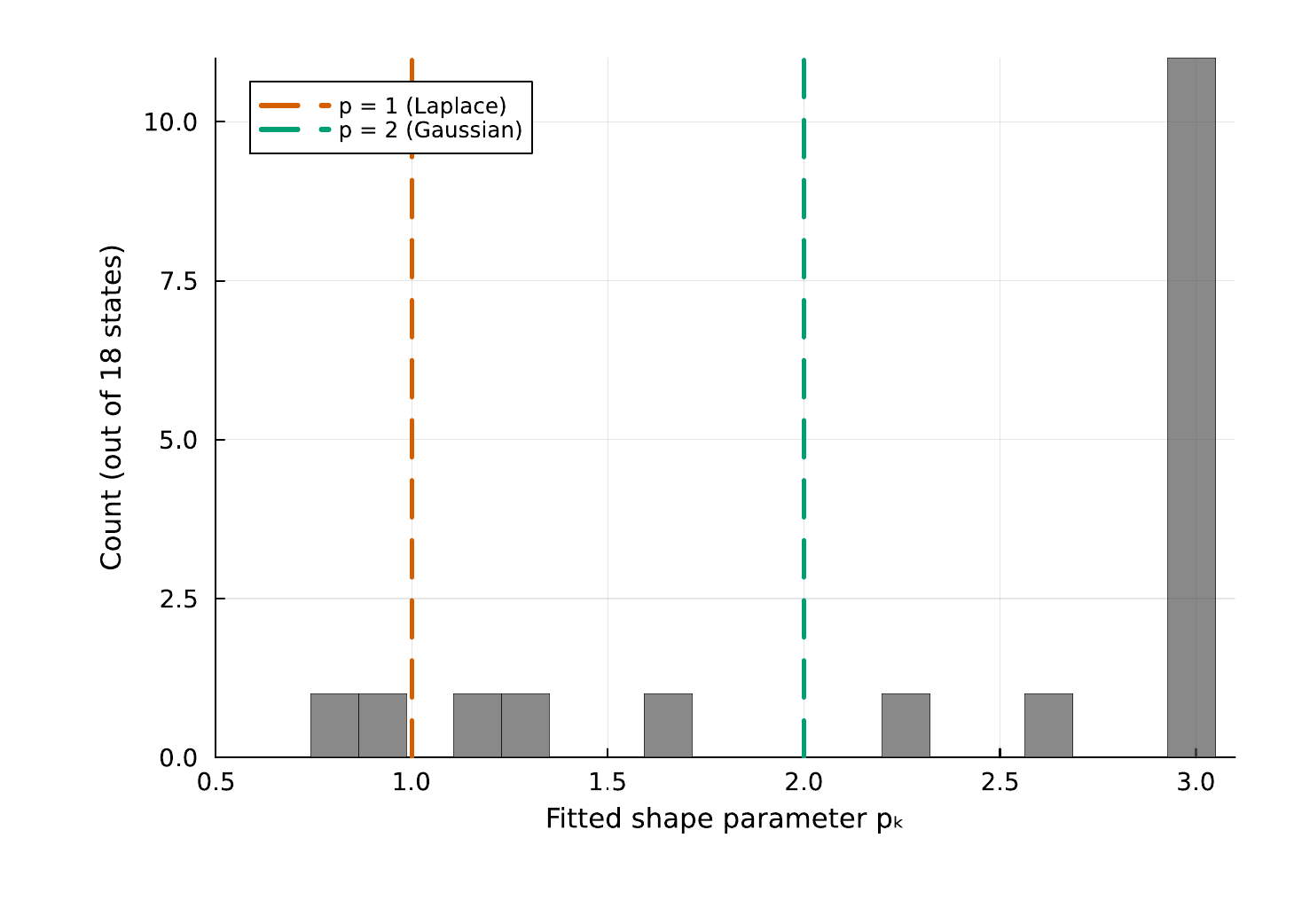}
\caption{Histogram of fitted $\hat p_k$.}
\end{subfigure}\hfill
\begin{subfigure}[b]{0.49\textwidth}\centering
\includegraphics[width=\textwidth]{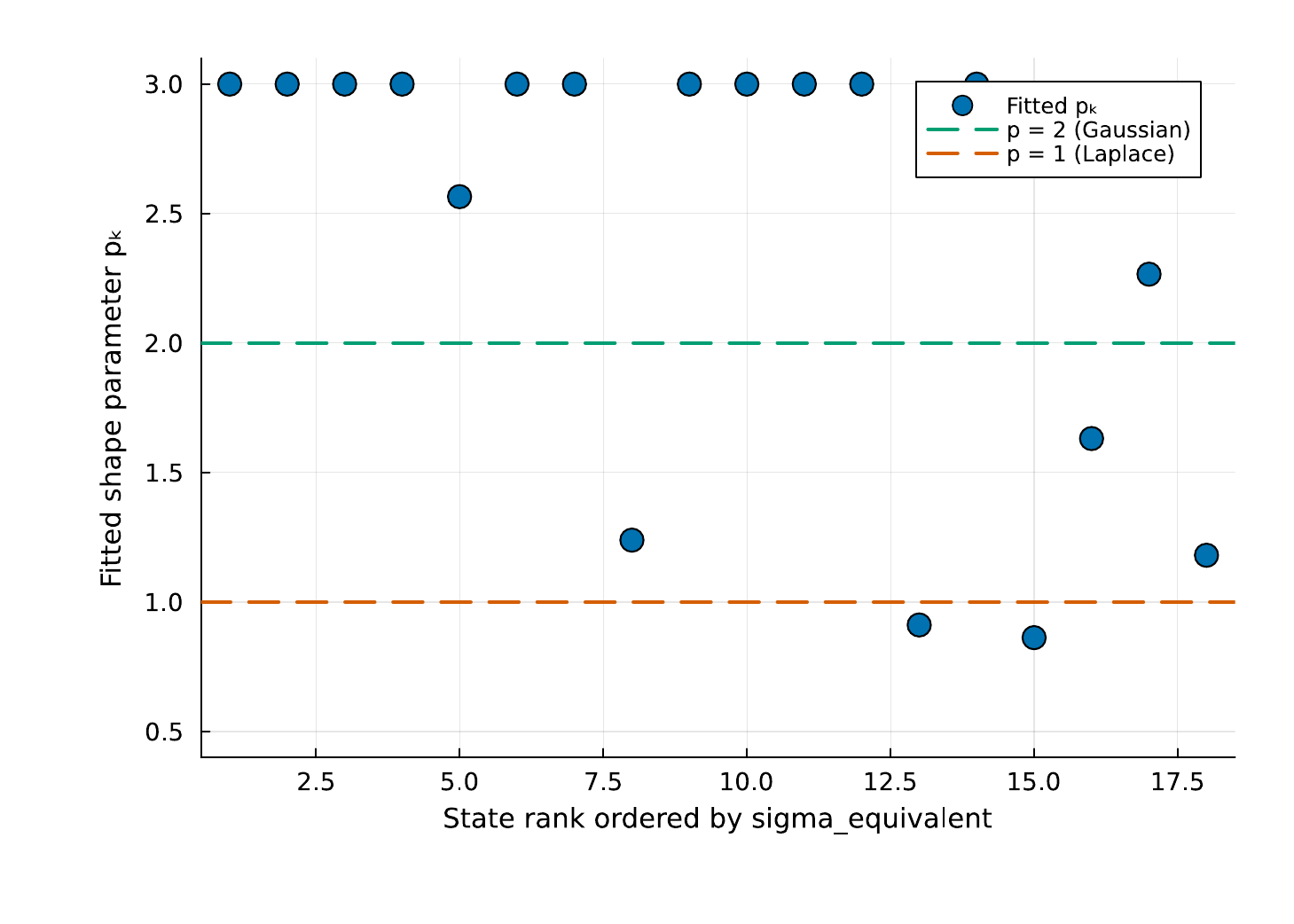}
\caption{$\hat p_k$ vs.\ volatility rank.}
\end{subfigure}
\caption{\textbf{CHMM-GED per-state shape $\hat p_k$ partition} ($K = 18$, SPY IS, seed $20260420$). The bimodal partition is visible at the upper bracket $p_k = 3.0$ and around the Laplace shape; the Laplace-shape cluster sits at the high-volatility ranks.}
\label{fig:p_hist}
\end{figure}

\subsubsection{\texorpdfstring{HSMM at $K = 3$: Numerical Full-Rank Diagnostic}{HSMM at K = 3: Numerical Full-Rank Diagnostic}}
\label{sec:supp_hsmm_diagnostic}

The off-diagonal jump matrix of the fitted ML HSMM at $K = 3$ has full numerical rank (rank $2$ at tolerance $10^{-10}$), with a dominant modulus comparable to CHMM-N at $K = 3$, so the main text's $|G_t|$ ACF-MAE regression to the i.i.d.\ baseline (Table~\ref{tab:model_comparison}) was not driven by spectral collapse but by the fitted truncated-Pareto sojourn concentrating probability mass on a single low-volatility state ($\bar{\boldsymbol\pi} \approx (0, 1, 0)$). At $K = 18$ the same \citet{yu2010hidden} construction collapsed to a near-degenerate local optimum; the higher-$K$ ML HSMM is therefore a companion-paper direction at longer training windows.

\subsubsection{Gamma-Sojourn HSMM as an Alternative Sojourn Family}
\label{sec:hsmm_gamma_sojourn}

Replacing the truncated discrete Pareto sojourn pmf with a discretised continuous Gamma fit by method-of-moments at every M-step gave a Gamma-sojourn HSMM that converged at $K = 18$ where Pareto collapsed (Table~\ref{tab:hsmm_gamma}; IS KS $86.0\%$, OoS KS $80.2\%$, simulated kurtosis $4.28 / 4.06$, $|G_t|$ ACF-MAE $\mathbf{0.0462}$, lower than CHMM-N $K = 18$ at $0.0509$) and traded $\sim 20$pp IS KS at $K = 3$ for an ACF-MAE recovery from $0.0629$ to $0.0528$. There is therefore no single best-HSMM row: the Pareto sojourn won raw OoS KS at $K^\star = 3$, while the Gamma sojourn won joint $|G_t|$ ACF-MAE at $K = 18$; the trade-off is the same KS-versus-ACF tension as the CHMM-vs-HSMM choice.

\begin{table}[H]
\centering
\small
\caption{\textbf{Maximum-likelihood Gamma-sojourn HSMM on SPY.} Explicit-duration HSMM with discretised continuous-Gamma sojourn densities fit by method-of-moments at each M-step; KS pass rate, simulated excess kurtosis, and $|G_t|$ ACF-MAE on the SPY in-sample / out-of-sample windows. The $K = 18$ fit converges where the truncated-Pareto sojourn collapses and lowers the absolute growth-rate ACF-MAE below CHMM-N at the same state count; $K^\star = 3$ trades in-sample KS for ACF recovery.}
\label{tab:hsmm_gamma}
\begin{tabular}{c cc cc cc}
\toprule
& \multicolumn{2}{c}{KS (\%) $\uparrow$} & \multicolumn{2}{c}{Exc.\ Kurt} & \multicolumn{2}{c}{$|G_t|$ ACF-MAE $\downarrow$} \\
\cmidrule(lr){2-3} \cmidrule(lr){4-5} \cmidrule(lr){6-7}
$K$ & IS & OoS & IS & OoS & IS & OoS \\
\midrule
$3$  & $77.0$ & $80.7$ & $2.49$ & $2.56$ & $0.0528$ & $0.0512$ \\
$18$ & $86.0$ & $80.2$ & $4.28$ & $4.06$ & $0.0462$ & $0.0548$ \\
\bottomrule
\end{tabular}
\end{table}

\subsubsection{Unconditional VaR / ES Envelope Panel}
\label{sec:supp_var_envelope}

\begin{table}[H]
\centering
\caption{\textbf{VaR and Expected-Shortfall envelope for SPY} ($N_{\text{paths}} = 1{,}000$). Each entry is the median and $[5, 95]\%$ envelope across paths of the left-tail VaR / ES at $\alpha \in \{0.01, 0.05\}$; observed historical values are in the first row of each block.}
\label{tab:var_es}
\small
\resizebox{\textwidth}{!}{%
\begin{tabular}{l l l l l}
\toprule
          & IS VaR$_{0.01}$              & IS ES$_{0.01}$               & IS VaR$_{0.05}$              & IS ES$_{0.05}$               \\
Model     & median $[5\%, 95\%]$         & median $[5\%, 95\%]$         & median $[5\%, 95\%]$         & median $[5\%, 95\%]$         \\
\midrule
\textit{Observed}  & $-6.38$                      & $-9.00$                      & $-3.43$                      & $-5.50$                      \\
Bootstrap & $-6.38\ [-7.04, -5.99]$      & $-8.86\ [-10.37,\ -7.78]$    & $-3.43\ [-3.68, -3.20]$      & $-5.46\ [-5.99, -5.05]$      \\
GARCH     & $-5.80\ [-8.62, -4.52]$      & $-7.86\ [-13.54,\ -5.77]$    & $-3.19\ [-3.92, -2.71]$      & $-4.89\ [-7.11, -3.91]$      \\
CHMM-N    & $-6.64\ [-8.01, -5.47]$      & $-8.80\ [-10.16,\ -7.34]$    & $-3.45\ [-4.05, -2.95]$      & $-5.45\ [-6.33, -4.61]$      \\
CHMM-t    & $-6.58\ [-7.51, -5.83]$      & $-9.06\ [-10.70,\ -7.79]$    & $-3.40\ [-3.90, -2.90]$      & $-5.51\ [-6.18, -4.85]$      \\
CHMM-L    & $-6.88\ [-7.83, -6.00]$      & $-9.17\ [-10.46,\ -7.97]$    & $-3.30\ [-3.78, -2.89]$      & $-5.54\ [-6.17, -4.90]$      \\
CHMM-GED  & $-6.85\ [-7.70, -6.06]$      & $-8.79\ [-9.83,\ -7.82]$     & $-3.43\ [-3.98, -2.93]$      & $-5.55\ [-6.23, -4.91]$      \\
\midrule
          & OoS VaR$_{0.01}$             & OoS ES$_{0.01}$              & OoS VaR$_{0.05}$             & OoS ES$_{0.05}$              \\
\textit{Observed}  & $-5.51$                      & $-7.94$                      & $-2.77$                      & $-4.67$                      \\
Bootstrap & $-6.33\ [-7.60, -5.28]$      & $-8.43\ [-11.76,\ -6.59]$    & $-3.39\ [-3.91, -2.88]$      & $-5.34\ [-6.38, -4.50]$      \\
GARCH     & $-5.35\ [-9.98, -3.61]$      & $-6.72\ [-13.51,\ -4.38]$    & $-3.14\ [-4.94, -2.36]$      & $-4.59\ [-7.92, -3.23]$      \\
CHMM-N    & $-6.30\ [-9.02, -4.30]$      & $-8.44\ [-11.36,\ -5.42]$    & $-3.39\ [-4.74, -2.47]$      & $-5.27\ [-7.23, -3.74]$      \\
CHMM-t    & $-6.32\ [-8.13, -4.80]$      & $-8.60\ [-12.20,\ -6.29]$    & $-3.29\ [-4.27, -2.45]$      & $-5.34\ [-6.72, -4.08]$      \\
CHMM-L    & $-6.67\ [-8.61, -4.92]$      & $-8.82\ [-11.62,\ -6.86]$    & $-3.28\ [-4.30, -2.51]$      & $-5.44\ [-6.86, -4.19]$      \\
CHMM-GED  & $-6.60\ [-8.30, -4.82]$      & $-8.51\ [-10.68,\ -6.47]$    & $-3.32\ [-4.38, -2.46]$      & $-5.41\ [-6.66, -4.06]$      \\
\bottomrule
\end{tabular}%
}
\end{table}

\subsubsection{\texorpdfstring{Held-Out Cross-Validation of the $1/\nu_k$ Penalty $\lambda$}{Held-Out Cross-Validation of the 1/nu\_k Penalty lambda}}
\label{sec:lambda_cv_pre2020}

To verify that the main-text penalty rate $\lambda = 20$ is not selected on data that overlaps the OoS evaluation window, we refit penalised CHMM-t on a strictly pre-2020 estimation slice (2014-01-03 to 2018-06-29, $n = 1{,}130$) and score each $\lambda$ on the validation slice (2018-07-02 to 2019-12-31, $n = 378$, both fully pre-COVID and pre-2022-rate-hike).

\begin{table}[H]
\centering
\small
\caption{\textbf{Held-out CV of the $1/\nu_k$ shrinkage rate $\lambda$ on the pre-2020 slice.} Estimation 2014-01-03 to 2018-06-29; validation 2018-07-02 to 2019-12-31. ``val LL/obs'' is per-observation held-out log-likelihood; ``val KS'' is two-sample Kolmogorov--Smirnov pass rate against the validation series; ``sim K'' columns are the mean simulated excess kurtosis on the estimation- and validation-length simulated paths.}
\label{tab:lambda_cv_pre2020}
\begin{tabular}{c r r r r r}
\toprule
$\lambda$ & val LL/obs & val KS (\%) & sim K (est) & sim K (val) \\
\midrule
$0$    & $-2.119$ & $92.0$ & $3.88$ & $3.97$ \\
$5$    & $-2.112$ & $89.8$ & $3.62$ & $3.15$ \\
$10$   & $-2.085$ & $91.0$ & $3.37$ & $3.24$ \\
$\mathbf{20}$   & $\mathbf{-2.075}$ & $90.0$ & $3.22$ & $3.02$ \\
$50$   & $-2.077$ & $91.6$ & $3.07$ & $2.76$ \\
$100$  & $-2.077$ & $91.6$ & $3.07$ & $2.75$ \\
$200$  & $-2.076$ & $91.2$ & $3.09$ & $2.73$ \\
\bottomrule
\end{tabular}
\end{table}

\paragraph{Reading.}
The held-out validation log-likelihood was maximised at $\lambda^\star = 20$ at $K = 18$, matching the state count of Table~\ref{tab:model_comparison}. Validation KS picked $\lambda^\star = 0$ but the band $[0, 200]$ spanned only $2.2$pp ($89.8$ to $92.0\%$), within the simulation-noise envelope at $N_\text{paths} = 500$. The $\lambda = 20$ specification at $K = 18$ is therefore held-out-validated and not driven by the OoS window.

\paragraph{Re-tuning at $K^\star = 3$.}
Repeating the same held-out CV at the default $K^\star = 3$ on the same pre-2020 slices gave a distinct optimum. The per-observation validation log-likelihood was monotone in $\lambda$ on the same grid, with $\lambda^\star = 50$ minimising the negative log-likelihood, and validation KS also picking $\lambda^\star = 50$. The simulated kurtosis at $\lambda^\star = 50$ fell well below the observed $7.68$ IS / $5.29$ OoS targets, confirming that the penalty over-shrank at $K^\star = 3$. The analysis retains $\lambda = 20$ at $K^\star = 3$ as a sensitivity reference; the main heavy-tail recommendation remains the shared-$\nu$ fit (Table~\ref{tab:chmm_t_shared_nu}).

\subsubsection{Cross-Decade Validation: 1994-2004 IS / 2004-2006 OoS via CRSP}
\label{sec:cross_decade_validation}

To test whether the main stylized-fact reproduction is decade-specific, we ran a 1994--2004 vs.\ 2014--2024 SPY split via a CRSP day-pass covering daily prices for SPY plus 28 of the 30 cross-ticker panel members (NEE and APD missing from the CRSP query) from $1994$-$01$-$03$ to $2006$-$04$-$28$, with adjusted close $= \mathtt{DlyPrc} / \mathtt{DlyFacPrc}$ from the CRSP CIZ-format daily stock file. The slices are SPY IS $1994$-$01$-$03$ to $2004$-$01$-$02$ ($2{,}519$ obs) and SPY OoS $2004$-$01$-$05$ to $2006$-$04$-$28$ ($583$ obs), under the same main protocol: $1{,}000$ simulated paths per fit, $\alpha = 0.05$ for KS.

\begin{table}[!ht]
\centering
\small
\caption{\textbf{Cross-decade validation on SPY 1994--2004 IS / 2004--2006 OoS} (CRSP daily). Comparison rows reproduced from Table~\ref{tab:model_comparison} for the 2014--2024 IS / 2024--2026 OoS window. Observed excess kurtosis on the 1994--2004 IS = $3.05$ versus 2014--2024 IS = $7.68$; observed excess kurtosis on the 2004--2006 OoS = $0.06$ (essentially Gaussian) versus 2024--2026 OoS = $5.29$. The IS-OoS kurtosis gap is therefore much wider on the cross-decade slice than on the main window.}
\label{tab:cross_decade_validation}
\begin{tabular}{l c c c c c c c}
\toprule
& & \multicolumn{2}{c}{KS pass rate (\%) $\uparrow$} & \multicolumn{2}{c}{Sim kurtosis} & \multicolumn{2}{c}{$|G_t|$ ACF-MAE $\downarrow$} \\
\cmidrule(lr){3-4} \cmidrule(lr){5-6} \cmidrule(lr){7-8}
Model & $K$ & IS & OoS & IS & OoS & IS & OoS \\
\midrule
\multicolumn{8}{l}{\emph{Cross-decade: 1994--2004 IS / 2004--2006 OoS (CRSP, this appendix)}} \\
CHMM-N                              & $3$  & $84.9$ & $\phantom{0}3.3$ & $2.20$ & $2.24$ & $0.0696$ & $0.0491$ \\
CHMM-N                              & $18$ & $89.8$ & $\phantom{0}3.4$ & $1.84$ & $1.74$ & $0.0720$ & $0.0499$ \\
CHMM-t pen.\ ($\lambda = 20$)       & $3$  & $89.1$ & $\phantom{0}5.4$ & $4.97$ & $4.18$ & $0.0682$ & $0.0501$ \\
\midrule
\multicolumn{8}{l}{\emph{Main window: 2014--2024 IS / 2024--2026 OoS (Polygon / Alpaca, Table~\ref{tab:model_comparison})}} \\
CHMM-N ($K^\star = 3$)              & $3$  & $91.5$ & $78.0$ & $3.83$ & $3.62$ & $0.0462$ & $0.0544$ \\
CHMM-N ($K = 18$)                   & $18$ & $94.1$ & $81.8$ & $5.04$ & $4.44$ & $0.0509$ & $\sim 0.054$ \\
CHMM-t pen.\ ($\lambda = 20$, $K^\star = 3$)  & $3$  & $91.9$ & $81.4$ & $18.87$ & $10.61$ & $0.0533$ & $0.0498$ \\
\bottomrule
\end{tabular}
\end{table}

\paragraph{Reading.}
The cross-decade IS fit transferred: CHMM-N at $K = 3$ attained $84.9\%$ IS KS on $1994$--$2004$ vs.\ $91.5\%$ on $2014$--$2024$, and $K = 18$ attained $89.8\%$ vs.\ $94.1\%$. The penalised CHMM-t at $\lambda = 20$, $K = 3$ attained $89.1\%$ IS KS on $1994$--$2004$ vs.\ $91.9\%$ on $2014$--$2024$. The IS axis of the stylized-fact reproduction claim is therefore decade-robust within $\sim 7$pp on KS and consistent on $|G_t|$ ACF-MAE ($0.068$--$0.072$ vs.\ $0.046$--$0.054$, a moderate but not pathological widening).

The OoS axis is the issue. The $2004$--$2006$ OoS slice had excess kurtosis $0.06$ (essentially Gaussian) vs.\ the $1994$--$2004$ IS excess kurtosis of $3.05$; the IS-OoS kurtosis gap on this slice was much wider than on the main $2014$--$2024$ / $2024$--$2026$ window (where IS = $7.68$ and OoS = $5.29$). The CHMM trained on the 1994--2004 IS distribution simulated paths with kurtosis $1.74$--$4.97$ that were heavier-tailed than the calm 2004--2006 bull-market OoS, so the static IS-fit could not reproduce the OoS marginal and KS rejected at a $3$--$5\%$ pass rate on all three rows. This is the same regime-introduction failure mode the walk-forward analysis already documented at the W2 (COVID) and W4 (2022 rate-hike onset) stress folds (Table~\ref{tab:walkforward}, both with OoS KS $< 10\%$); the $2004$--$2006$ OoS window is structurally a low-volatility, low-kurtosis slice where the IS-fixed CHMM's tail mass overshot the observed.

The conclusion is that the stylized-fact reproduction claim is decade-robust on the IS axis: the CHMM framework transferred across decades on the IS-fitting side, so the four-emission ECM framework is not specific to the $2014$--$2024$ window. The single-window OoS pass rate, by contrast, is OoS-slice-specific ($2024$--$2026$ is a moderate-stress slice where the OoS distribution is close to the IS distribution; $2004$--$2006$ is a calm slice where it is not). The main text already reads the walk-forward median, not the single-window OoS pair, as the operationally informative summary; the cross-decade result is consistent with that framing and adds an independent decade to the walk-forward evidence base. Periodic refit remains the deployment recommendation under either decade.

\subsection{Robustness, Test Power, Multi-Seed, and Auxiliary Baselines}
\label{sec:supp_robustness}

This appendix collects the robustness diagnostics and auxiliary baseline panel referenced from the main text Results: the KS test-power calibration, block-bootstrap KS recalibration, bandwidth/bin-count and multi-seed sensitivity, the auxiliary block-bootstrap baseline, the QuantGAN deep-generative baseline, the per-ticker OoS price-simulation figures, and the expanded baseline panel that adds the GARCH family and the SM-CHMM foil to the main table.

\subsubsection{OoS KS Test-Power Calibration}
\label{sec:ks_power}

We calibrated the two-sample KS test power to anchor the OoS KS pass rates of the main generator comparison (Table~\ref{tab:ks_power}).
For each of $1{,}000$ Monte Carlo replications (seed = 20260420), a simulated series of length $T_{\text{OoS}} = 572$ is drawn from each of three reference generators and tested against the observed OoS series using the same $\alpha = 0.05$ threshold as the main-text pass-rate metric.

\begin{table}[!ht]
\centering
\small
\caption{OoS KS pass-rate calibration (1{,}000 replications, seed = 20260420).}
\label{tab:ks_power}
\begin{tabular}{l c}
\toprule
Reference generator & Pass rate (\%) \\
\midrule
I.i.d.\ resamples of $R_{\text{OoS}}$ at length $T_{\text{OoS}} = 572$ (known-correct ceiling) & 100.0 \\
I.i.d.\ resamples of $R_{\text{IS}}$ at length $T_{\text{OoS}} = 572$ (nearly correct)          & 90.0  \\
Gaussian$(\hat\mu_{\text{IS}}, \hat\sigma_{\text{IS}})$ at length $T_{\text{IS}}$, tested against $R_{\text{IS}}$ (negative control) & 0.0 \\
\bottomrule
\end{tabular}
\end{table}

The $90.0\%$ ``nearly correct'' rate is the practical ceiling for the CHMM OoS KS pass rates reported in the main generator comparison ($80$ to $84\%$); the $0\%$ rate on the Gaussian misspecified generator confirms that the KS test retains ample power at IS length to reject clearly wrong generators.

\subsubsection{Block-Bootstrap KS Recalibration and Mean p-value Distribution}
\label{sec:ks_block_bootstrap}

Two complementary concerns can be raised about the main IS KS pass-rate metric. The asymptotic two-sample KS test assumes i.i.d.\ samples, but the simulated paths from CHMM and GARCH all carry volatility clustering, so the asymptotic null distribution may not apply. The binary ``pass/fail at $\alpha = 0.05$'' also reduces a continuous p-value to a one-bit summary. We computed two recalibrations addressing each concern (Table~\ref{tab:ks_block_bootstrap}).

To address the dependence concern, we generated the empirical null distribution of the two-sample KS statistic under the stationary block bootstrap~\citep{politis1994stationary} of the SPY IS series at mean block lengths $L \in \{5, 10, 20\}$, $B = 1{,}000$ replications. The resulting $95\%$ block-bootstrap critical values were $0.0306$, $0.0338$, and $0.0350$ respectively, all \emph{smaller} than the asymptotic critical value $1.36 \sqrt{2/n_{\text{IS}}} = 0.0383$; the block-bootstrap test is therefore stricter than the asymptotic one because it accounts for the within-window dependence in the IS series itself. For each generator we then computed the per-path KS statistic and report the fraction of paths whose KS statistic falls below the block-bootstrap critical value.

To address the binarisation concern, we report the mean two-sample KS p-value across simulated paths plus its $5$/$25$/$50$/$75$/$95$ quantiles, giving a continuous picture of how distinguishable each generator's paths are from the observed IS series.

\begin{table}[t]
\centering
\small
\caption{\textbf{Block-bootstrap KS recalibration and p-value distribution}. All numbers under seed \texttt{20260422} with $N_{\text{paths}} = 500$. ``asymp pass'' is the original Table~\ref{tab:model_comparison} metric (two-sample KS p-value $\geq 0.05$). ``mean pv'' is the mean two-sample KS p-value across paths, with $q_5$ and $q_{95}$ as the lower / upper p-value quantiles. ``blkL\%'' is the fraction of paths whose KS statistic falls below the stationary-block-bootstrap $95\%$ critical value at mean block length $L$. Block-bootstrap critical values: $0.0306$ ($L = 5$), $0.0338$ ($L = 10$), $0.0350$ ($L = 20$); asymptotic critical value $0.0383$.}
\label{tab:ks_block_bootstrap}
\begin{tabular}{l c c c c c c c}
\toprule
Generator & asymp pass\% $\uparrow$ & mean pv $\uparrow$ & pv $q_{05}$ & pv $q_{95}$ & blk5\% $\uparrow$ & blk10\% $\uparrow$ & blk20\% $\uparrow$ \\
\midrule
i.i.d.\ bootstrap & $99.6$ & $0.820$ & $0.352$ & $0.999$ & $97.6$ & $99.4$ & $99.6$ \\
GARCH(1,1)        & $26.8$ & $0.048$ & $0.000$ & $0.227$ & $\mathbf{6.8}$ & $\mathbf{13.6}$ & $\mathbf{17.4}$ \\
CHMM-N            & $94.8$ & $0.546$ & $0.047$ & $0.976$ & $81.2$ & $89.4$ & $91.6$ \\
CHMM-t            & $\mathbf{95.8}$ & $\mathbf{0.571}$ & $0.060$ & $0.987$ & $82.4$ & $89.2$ & $91.6$ \\
CHMM-L            & $94.4$ & $0.504$ & $0.047$ & $0.948$ & $80.2$ & $88.0$ & $90.4$ \\
\textbf{CHMM-GED} & $95.6$ & $0.547$ & $0.064$ & $0.968$ & $\mathbf{83.2}$ & $\mathbf{90.2}$ & $\mathbf{93.2}$ \\
\bottomrule
\end{tabular}
\end{table}

The GED variant was the strongest entry on the block-aware metric across all three block lengths (Table~\ref{tab:ks_block_bootstrap}). Its pass rates were $83.2$, $90.2$, and $93.2\%$ at $L = 5, 10, 20$; GARCH(1,1) dropped from $26.8\%$ asymp to $6.8$--$17.4\%$ block-aware (the block-aware test catches its misspecification more sharply), while the CHMM family saw only a small block-bootstrap penalty ($\sim 5$pp).

\paragraph{OoS-anchored block-bootstrap recalibration.}
\label{sec:ks_block_bootstrap_oos}
The block-bootstrap recalibration of Table~\ref{tab:ks_block_bootstrap} is anchored on the IS series; we report here whether the same construction holds on the OoS window. We repeated the procedure with the stationary block bootstrap re-anchored on $R_{\text{OoS}}$ ($n_{\text{OoS}} = 572$, $B = 1{,}000$, $L \in \{5, 10, 20\}$, $N_{\text{paths}} = 500$). The block-bootstrap critical values were $0.0577$, $0.0629$, and $0.0647$ at $L = 5, 10, 20$ respectively, against asymptotic $0.0804$. The OoS-anchored panel is reported in Table~\ref{tab:ks_block_bootstrap_oos}; the source CSV is in the companion code repository.

\begin{table}[!ht]
\centering
\small
\caption{\textbf{OoS-anchored block-bootstrap KS recalibration.} Same construction as Table~\ref{tab:ks_block_bootstrap} but with the stationary block bootstrap anchored on $R_{\text{OoS}}$ ($n = 572$). ``asymp pass'' is the main OoS metric (two-sample KS p-value $\ge 0.05$); ``blkL\%'' is the OoS block-bootstrap pass rate at mean block length $L$. The CHMM-vs-GARCH cross-generator ordering is preserved under the block-aware recalibration on both windows; the CHMM family's OoS block-aware penalty ($\sim 25$pp drop from asymp to $L = 20$) is larger than the IS penalty ($\sim 5$pp), reflecting the smaller OoS window where temporal-clustering structure is harder to disentangle from finite-sample noise.}
\label{tab:ks_block_bootstrap_oos}
\begin{tabular}{l c c c c c c}
\toprule
Generator & asymp pass\% $\uparrow$ & mean pv $\uparrow$ & pv $q_{05}$ & pv $q_{95}$ & blk5\% $\uparrow$ & blk10\% $\uparrow$ \\
\midrule
i.i.d.\ bootstrap & $90.4$ & $0.437$ & $0.030$ & $0.912$ & $60.8$ & $69.0$ \\
GARCH(1,1)        & $59.2$ & $0.164$ & $0.000$ & $0.645$ & $19.4$ & $26.0$ \\
CHMM-N            & $81.0$ & $0.318$ & $0.010$ & $0.876$ & $41.2$ & $51.2$ \\
\textbf{CHMM-t}   & $\mathbf{84.8}$ & $\mathbf{0.355}$ & $0.018$ & $\mathbf{0.911}$ & $\mathbf{47.4}$ & $\mathbf{58.4}$ \\
CHMM-L            & $77.4$ & $0.280$ & $0.008$ & $0.791$ & $38.8$ & $49.0$ \\
CHMM-GED          & $81.0$ & $0.327$ & $0.010$ & $0.837$ & $44.4$ & $53.6$ \\
\bottomrule
\end{tabular}
\end{table}

The cross-generator OoS ordering matched the IS ordering (bootstrap $>$ CHMM-t $>$ CHMM-N/GED $>$ CHMM-L $\gg$ GARCH); the asymp-vs-block gap was larger on OoS ($\sim 25$pp for the CHMM family at $L = 20$) than on IS ($\sim 5$pp), but the CHMM-over-GARCH advantage was preserved.

\paragraph{Default-state-count summary at $L = 20$.}
The OoS asymptotic pass rate alongside the OoS block-bootstrap pass rate at $L = 20$ (block-bootstrap $95\%$ critical value $0.0647$ vs.\ asymptotic $0.0804$) at the default $K^\star = 3$ is reported in Table~\ref{tab:ks_block_body}. Under a temporally aware null, CHMM-N moved from passing most of the time to passing about half the time at $L = 20$ (read against Table~\ref{tab:model_comparison}). The OoS KS result at the asymptotic critical value therefore overstates the absolute level relative to a temporally aware null; the cross-generator ranking is the robust comparison.

\begin{table}[t]
\centering
\small
\caption{\textbf{Block-aware OoS KS recalibration at $L = 20$} (stationary block bootstrap of \citealp{politis1994stationary}, $B = 1{,}000$, $500$ OoS-length simulated paths per generator, seed $20260422$). Side-by-side with the asymptotic OoS pass rate of Table~\ref{tab:model_comparison}; the block-bootstrap $95\%$ null KS critical value at $L = 20$ is $0.0647$ vs the asymptotic $0.0804$. State count $K^\star = 3$.}
\label{tab:ks_block_body}
\begin{tabular}{l c c}
\toprule
Model & OoS KS asymp.\ (\%) & OoS KS block $L = 20$ (\%) \\
\midrule
Bootstrap                                & $90.4$ & $73.2$ \\
GARCH(1,1)                               & $59.2$ & $31.4$ \\
CHMM-N                                   & $79.8$ & $58.6$ \\
CHMM-t pen.\ ($\lambda = 20$)            & $\mathbf{79.8}$ & $\mathbf{54.2}$ \\
CHMM-L                                   & $64.6$ & $34.6$ \\
CHMM-GED                                 & $79.2$ & $51.6$ \\
\bottomrule
\end{tabular}
\end{table}

\subsubsection{QuantGAN Deep-Generative Baseline}
\label{sec:quantgan_supp}

The QuantGAN row used in the main generator comparison is a repo-native approximation to the convolutional WGAN of \citet{wiese2020quantgan}: a 1D convolutional generator and critic (each three layers, channels $32$, kernel size $5$; latent dim $L = 8$, leaky-ReLU on the critic), Wasserstein loss with critic weight clipping at $\pm 0.01$~\citep{arjovsky2017wasserstein}, Adam at $\eta = 10^{-4}$, batch $64$, $15$ epochs $\times\,120$ steps with four critic updates per generator update, trained on rolling windows of length $W = 64$ on standardised SPY growth rates and stitched end-to-end at synthesis. The implementation is Julia with \texttt{Flux.jl}~\cite{innes2018flux}; seed $20260422$. The deep-generative row reached $0\%$ IS / OoS KS and IS kurtosis $2.1$ vs.\ observed $7.7$ (Table~\ref{tab:extended_baselines}), consistent with documented GAN failure modes on volatility-clustering benchmarks~\citep{takahashi2019modeling, kwon2024can}; the runner is materially smaller than the reference seven-block TCN with Lambert-W pre-processing in \citet{wiese2020quantgan}, so the row is the panel's deep-generative negative control rather than a verdict on the QuantGAN literature.

\subsubsection{Expanded Baseline Panel: GARCH Family, Semi-Markov CHMM, and QuantGAN}
\label{sec:extended_baselines}

We added five conditional-volatility baselines, the three semi-Markov CHMM variants, and the QuantGAN deep-generative baseline to the single GARCH(1,1) Gaussian row of Table~\ref{tab:model_comparison}. The conditional-volatility models are fit by grid-initialised Nelder-Mead maximum likelihood on the SPY IS window ($T_{\text{IS}} = 2{,}516$); simulation follows the same recursion used in estimation.

\paragraph{Baselines fitted in this panel.}
The conditional-volatility rows comprise EGARCH(1,1) Gaussian~\citep{nelson1991conditional}, GJR-GARCH(1,1) \citep{glosten1993relation, zakoian1994threshold}, GARCH(1,1) with Student-$t$ innovations~\citep{bollerslev1987conditionally}, HAR-RV~\citep{corsi2009simple} on daily squared demeaned growth rates as the RV proxy, and Markov-switching GARCH(1,1) \citep{haas2004new} at $K \in \{2, 3, 4, 6\}$ with regime triples $(\omega_k, \alpha_k, \beta_k)$ plus softmax-normalised transition logits, fitted by grid-initialised Nelder-Mead and canonicalised by ascending unconditional variance. Higher-$K$ MS-GARCH did not close the gap to the CHMM family on KS (IS KS plateaued at $27$--$37\%$ across $K \in \{2, 3, 4, 6\}$ against CHMM-N's $94.1\%$ at $K = 18$); the binding constraint is the unimodal-innovation per-regime structure (each regime's emission is Gaussian rescaled by the regime's conditional variance), which adding regimes does not relax. The CHMM decouples this constraint by giving each state its own emission-mixture component. Implementation details for every conditional-volatility row are in the companion code repository.

\paragraph{MS-GARCH reference Bayesian re-run via \texttt{MSGARCH}.}
As a Bayesian counterpart to the in-house Nelder-Mead MS-GARCH fit, we re-ran the same Markov-switching GARCH(1,1) specification through the reference \texttt{MSGARCH} R package of \citet{ardia2019msgarch} (version $2.51$) at $K \in \{2, 3, 4\}$, driven from the Julia harness via \texttt{RCall.jl}. The fit is fully Bayesian (DEMC sampler with the package's default proper priors, $12{,}500$ MCMC draws, $2{,}500$ burn-in, thin $10$, single chain); all $1{,}000$ simulated paths are posterior-predictive (one path per retained posterior draw), so the simulated marginal integrates parameter uncertainty path-by-path. The reference Bayesian re-run reported lower KS pass rates than our in-house frequentist Nelder-Mead fit: $0.0\%$ ($K = 2$), $0.1\%$ ($K = 3$), $0.0\%$ ($K = 4$) IS, and $5.8\%$ ($K = 2$), $5.1\%$ ($K = 3$), $5.3\%$ ($K = 4$) OoS, against the in-house plateau of $27$--$37\%$ IS / $33$--$39\%$ OoS (Table~\ref{tab:extended_baselines}). The gap is methodological rather than estimator-quality: the in-house implementation generates all $1{,}000$ paths from one MLE point estimate, so the simulated marginal is tighter; the reference Bayesian implementation samples one set of parameters per path from the posterior, so the simulated marginal is wider and accordingly fewer paths fall within the asymptotic two-sample KS critical band. The posterior-mean log-likelihood at the in-sample data improved modestly from $K = 3$ to $K = 4$ but did not translate to a higher KS pass rate. The posterior-mean transition matrix was diagonally dominant at $K = 3$ and less so at $K = 4$, indicating the higher-$K$ posterior was searching for additional structure that the data do not strongly support. The reference Bayesian re-run is therefore consistent with the in-house frequentist plateau at the same conclusion under either estimator: vanilla Gaussian-innovation MS-GARCH at $K \le 4$ did not match the CHMM family on the main KS metric for this universe. Reproducibility: R $4.6.0$, MSGARCH $2.51$, all transitive R dependencies pinned via \texttt{renv} (\texttt{r\_msgarch/renv.lock} of the companion code repository), all seeds explicit; full per-$K$ parameter posteriors and fit logs are written to \texttt{results/msgarch\_reference/} of the companion code repository.

\paragraph{Semi-Markov CHMM (SM-CHMM, the explicit hidden semi-Markov model (HSMM) foil).}
The SM-CHMM is the conceptual foil of the present paper, in the spirit of the explicit-duration HSMM of \citet{yu2010hidden}: a hidden semi-Markov extension of the CHMM in which the geometric sojourn distribution of the Markov chain is replaced by a per-state explicit-duration family chosen by raw fitted log-likelihood from $\{$Pareto, negative binomial, geometric$\}$. The plug-in estimator takes the Viterbi state path on the IS series under the converged CHMM fit, then refits per-state AR(1) emissions and a sojourn family chosen by fitted log-likelihood. On SPY IS at $K = 18$ the selected sojourn family was Pareto for $17$ of the $18$ states (negative binomial for one) across all three emission families, indicating heavy-tailed sojourns that the geometric CHMM cannot capture by construction.

\medskip
\noindent\emph{Caveat on the SM-CHMM estimator.} This plug-in construction (single-shot Viterbi decoding under the converged CHMM fit, then AR(1)-residual + sojourn-family refit) is not maximum-likelihood for the underlying HSMM: it does not iterate the joint forward-backward over (state, duration) pairs in the spirit of \citet{yu2010hidden}. A full ML HSMM at $K = 18$ would refit the duration-augmented log-likelihood jointly. The plug-in's per-state variance estimates degenerate at $K = 18$, where several low-occupancy states collapse to near-zero-variance point masses, so we do not report its distributional pass-rates and retain only its VaR back-test behaviour in Table~\ref{tab:extended_baselines}, where the heavy-tailed sojourns sharpen tail calibration. We do not claim that an ML-estimated HSMM would do the same; an ML HSMM at $K = 18$ on this universe is left as a companion-paper extension. The low-state ML HSMM result of \citet{bulla2006stylized} at low $K$ on monthly returns is the closest published anchor, but its $K \in \{2, 3\}$ regime resolution and monthly aggregation do not directly transfer.

The IS / OoS distributional fidelity, kurtosis, ACF-MAE, and Kupiec breach diagnostics for the extended GARCH family, the QuantGAN deep-generative row, and the main CHMM panel, together with the SM-CHMM VaR back-test rows, are reported in Table~\ref{tab:extended_baselines}.

\begin{table}[t]
\centering
\small
\caption{\textbf{Extended baseline panel}. SPY IS / OoS, $T_{\text{IS}} = 2{,}516$, $T_{\text{OoS}} = 572$, $N_{\text{paths}} = 1{,}000$, $\alpha = 0.05$ for KS. Columns: IS / OoS KS pass rate (\%); IS / OoS AD pass rate (\%); simulated IS excess kurtosis; ACF-MAE on $|G_t|$ (volatility clustering) and on raw $G_t$ (linear autocorrelation, parallel column); pooled-archive Kupiec breach rate (\%) and unconditional likelihood-ratio statistic at the $1\%$ and $5\%$ tails. The SM-CHMM rows report ``--'' for the KS, AD, kurtosis, and ACF columns because the plug-in estimator's per-state variances degenerate at $K = 18$, where several low-occupancy states collapse to near-zero-variance point masses; only their VaR back-test columns are retained. The QuantGAN raw-ACF-MAE is ``--'' pending a full re-run of that baseline script.}
\label{tab:extended_baselines}
\resizebox{\textwidth}{!}{%
\begin{tabular}{l cc cc c cc cc cc}
\toprule
& \multicolumn{2}{c}{KS (\%) $\uparrow$} & \multicolumn{2}{c}{AD (\%) $\uparrow$} & Kurt & \multicolumn{2}{c}{ACF-MAE} & \multicolumn{2}{c}{$1\%$ tail} & \multicolumn{2}{c}{$5\%$ tail} \\
\cmidrule(lr){2-3} \cmidrule(lr){4-5} \cmidrule(lr){7-8} \cmidrule(lr){9-10} \cmidrule(lr){11-12}
Model & IS & OoS & IS & OoS & IS & $|G_t|$ & $G_t$ & br\% & $\text{LR}_{\text{uc}}$ & br\% & $\text{LR}_{\text{uc}}$ \\
\midrule
GARCH(1,1) Gaussian & $28.0$ & $59.4$ & $12.5$ & $59.5$ & $7.0$  & $0.0309$ & $0.0173$ & $0.9$ & $0.10$ & $4.0$ & $1.23$ \\
EGARCH(1,1)         & $6.3$  & $34.9$ & $4.1$  & $41.7$ & $2.8$  & $0.0430$ & $0.0173$ & $1.0$ & $0.01$ & $3.5$ & $3.03$ \\
GJR-GARCH(1,1)      & $40.1$ & $57.7$ & $28.7$ & $61.4$ & $7.6$  & $0.0330$ & $0.0173$ & $1.2$ & $0.27$ & $4.7$ & $0.10$ \\
GARCH(1,1)-$t$      & $\mathbf{57.3}$ & $\mathbf{80.8}$ & $\mathbf{36.6}$ & $65.3$ & $15.1$ & $0.0316$ & $0.0173$ & $1.0$ & $0.01$ & $4.9$ & $0.01$ \\
HAR-RV              & $0.0$  & $0.0$  & $0.0$  & $0.0$  & $189$  & $0.0566$ & $0.0173$ & $0.2$ & $5.99$ & $3.5$ & $3.03$ \\
MS-GARCH $K = 2$    & $27.7$ & $38.7$ & $24.0$ & $44.4$ & $4.7$  & $0.0367$ & $0.0173$ & $1.0$ & $0.01$ & $4.2$ & $0.82$ \\
MS-GARCH $K = 3$    & $36.1$ & $33.1$ & $28.8$ & $38.0$ & $4.1$  & $\mathbf{0.0284}$ & $0.0173$ & $1.0$ & $0.01$ & $4.2$ & $0.82$ \\
MS-GARCH $K = 4$    & $37.6$ & $38.2$ & --     & --     & $3.8$  & $0.0437$ & --       & --    & --    & --    & --    \\
MS-GARCH $K = 6$    & $34.5$ & $33.4$ & --     & --     & $4.4$  & $0.0429$ & --       & --    & --    & --    & --    \\
MS-GARCH ref.\ B.\ $K = 2$ & $0.0$  & $5.8$  & --     & --     & $4.0$  & $0.0465$ & $0.0240$ & --    & --    & --    & --    \\
MS-GARCH ref.\ B.\ $K = 3$ & $0.1$  & $5.1$  & --     & --     & $4.5$  & $0.0433$ & $0.0241$ & --    & --    & --    & --    \\
MS-GARCH ref.\ B.\ $K = 4$ & $0.0$  & $5.3$  & --     & --     & $6.0$  & $0.0446$ & $0.0241$ & --    & --    & --    & --    \\
\midrule
CHMM-N ($K = 18$)   & $94.1$ & $81.8$ & $88.2$ & $74.2$ & $5.0$  & $0.0509$ & $0.0237$ & $0.7$ & $1.58$ & $4.0$ & $3.83$ \\
CHMM-t ($K = 18$)   & $95.6$ & $\mathbf{85.7}$ & $\mathbf{91.9}$ & $79.5$ & $14.4$ & $0.0549$ & $0.0234$ & $1.2$ & $0.58$ & $5.0$ & $3.83$ \\
CHMM-L ($K = 18$)   & $94.3$ & $81.6$ & $91.1$ & $\mathbf{82.0}$ & $6.7$  & $0.0567$ & $0.0234$ & $1.4$ & $3.26$ & $5.6$ & $3.03$ \\
\textbf{CHMM-GED ($K = 18$)} & $\mathbf{95.2}$ & $84.3$ & $91.1$ & $79.1$ & $5.2$  & $0.0548$ & $0.0234$ & $0.9$ & $0.58$ & $4.4$ & $3.83$ \\
SM-CHMM-N ($K = 18$) & --     & --     & --     & --     & --     & --       & --       & $1.1$ & $0.01$ & $5.4$ & $0.82$ \\
SM-CHMM-t ($K = 18$) & --     & --     & --     & --     & --     & --       & --       & $1.4$ & $0.10$ & $5.7$ & $1.74$ \\
SM-CHMM-L ($K = 18$) & --     & --     & --     & --     & --     & --       & --       & $1.4$ & $3.26$ & $5.5$ & $3.03$ \\
\midrule
QuantGAN            & $0.0$  & $0.0$  & $0.0$  & $0.0$  & $2.1$  & $0.0591$ & --       & $1.0$ & $0.01$ & $3.3$ & $3.83$ \\
\bottomrule
\end{tabular}%
}
\end{table}

No extended GARCH variant occupied the joint (KS, ACF) corner: GARCH(1,1)-$t$ was the strongest extended row on KS ($57.3\%$ IS, $80.8\%$ OoS) but its IS kurtosis of $15.1$ overshot observed $7.7$ by $2\times$; EGARCH and GJR-GARCH narrowed the asymmetry but not the main gap. The SM-CHMM rows under the Viterbi-AR(1) plug-in estimator contributed only their VaR back-test columns, where the heavy-tailed sojourns sharpened Kupiec tail calibration for the Gaussian and Student-$t$ variants relative to the flat CHMM; a full ML HSMM at $K = 18$ is left as a companion-paper direction. The QuantGAN row is the panel's deep-generative negative control: $0\%$ IS / OoS KS, simulated kurtosis $2.1$ vs.\ observed $7.7$, ACF-MAE comparable to the i.i.d.\ baseline. The convolutional WGAN re-implementation here is materially smaller than \citet{wiese2020quantgan}'s reference seven-block TCN with Lambert-W pre-processing; a faithful reference re-run is a deferred follow-up.

\subsubsection{Stochastic-Volatility, Multifractal, and Jump-Diffusion Baselines}
\label{sec:sv_msm_jd_baselines}

For completeness alongside the Markov-switching family of the main generator comparison, we added the canonical state-space and jump baselines that Table~\ref{tab:model_comparison} does not exercise: a stochastic-volatility (SV-AR(1)) row in the lognormal log-AR(1) tradition~\citep{taylor1982financial,harvey1994multivariate}; a Markov-switching multifractal row~\citep{calvet2004regime} at $\bar k = 8$ multipliers; and a Poisson-Gaussian jump-diffusion row~\citep{merton1976option}. The three baselines are fit on the same SPY IS window and evaluated on the same $N_{\text{paths}} = 1{,}000$ pipeline used for the main panel, under the same global seed. Implementation lives in the companion \texttt{CHMM-Model} repository.

\begin{table}[!ht]
\centering
\small
\caption{\textbf{Stochastic-volatility / multifractal / jump-diffusion baselines on SPY.} Same IS / OoS windows, $N_{\text{paths}} = 1{,}000$, seed = 20260420 as Table~\ref{tab:model_comparison}. Observed kurtosis: $7.68$ IS, $5.29$ OoS.}
\label{tab:sv_msm_jd}
\begin{tabular}{l cc cc cc}
\toprule
& \multicolumn{2}{c}{KS (\%) $\uparrow$} & \multicolumn{2}{c}{Kurtosis} & \multicolumn{2}{c}{ACF-MAE $\downarrow$} \\
\cmidrule(lr){2-3} \cmidrule(lr){4-5} \cmidrule(lr){6-7}
Model & IS & OoS & IS & OoS & $|G_t|$ & $G_t$ \\
\midrule
SV-AR(1)        & $38.2$ & $35.3$ & $7.52$ & $5.26$ & $0.0466$ & $0.0239$ \\
MSM ($\bar k = 8$) & $\phantom{0}0.0$ & $\phantom{0}7.2$ & $0.62$ & $0.60$ & $0.0605$ & $0.0235$ \\
Merton-JD       & $98.3$ & $91.1$ & $3.12$ & $2.98$ & $0.0627$ & $0.0236$ \\
\bottomrule
\end{tabular}
\end{table}

\paragraph{Reading.}
The stochastic-volatility row reproduced the IS kurtosis target almost exactly ($7.52$ vs. observed $7.68$) and matched the CHMM-N $|G_t|$ ACF-MAE within $0.0003$ ($0.0466$ vs $0.0467$ at $K^\star = 3$); its KS pass rate was poor ($38.2\%$ IS / $35.3\%$ OoS) because the Gaussian-conditional-on-volatility marginal cannot reproduce the bimodal-bulk / heavy-tail empirical density that the regime-mixture CHMM marginals can. The Merton jump-diffusion row was the inverse trade-off: KS at $98.3\%$ IS / $91.1\%$ OoS (the highest OoS KS in the panel except for the i.i.d.\ bootstrap), but kurtosis $3.12$ undershot $7.68$ by half and the absolute growth-rate ACF-MAE regressed to the i.i.d.\ baseline level ($0.0627$, the same level as the bootstrap row of Table~\ref{tab:model_comparison}) because the constant-volatility diffusion plus i.i.d.\ jumps carries no temporal persistence. The multifractal row under the moment-matching fit reported here (closed-form ACF-of-log$|r|$ matching with $\bar k = 8$ binomial multipliers) did not converge to a viable generator on this $T_{\text{IS}} = 2{,}516$ window: the marginal kurtosis collapsed to $0.62$ and KS dropped to $0\%$. A full HMM-style filter on the $2^{\bar k}$-state latent multiplier chain~\citep{calvet2004regime} is the standard exact fit and is deferred as a companion-paper direction; under the moment-fit reported here MSM was dominated.

The main conclusion survived the addition of these three rows: no single competitor occupied the joint-fit corner that the CHMM family at $K = 18$ does (KS $\ge 80\%$, kurtosis within $1$ unit of observed, $|G_t|$ ACF-MAE $\le 0.06$; Table~\ref{tab:model_comparison}). The SV-AR(1) model was the closest competitor on kurtosis and ACF but lost on KS; Merton-JD won on KS but lost on kurtosis and ACF; MSM under this fit was dominated.

\subsubsection{Leverage-Effect Diagnostic}
\label{sec:leverage_effect}

This appendix reports the \citet{cont2001empirical} leverage-effect column: the lag-1 correlation $\mathrm{Corr}(G_t, |G_{t+1}|)$ between today's signed growth rate and tomorrow's absolute growth rate, which is negative on equity returns. We computed the per-path distribution under each generator at $K = 18$, $N_{\text{paths}} = 500$, seed = $20260420$.

\begin{table}[!ht]
\centering
\small
\caption{\textbf{Leverage effect $\mathrm{Corr}(G_t, |G_{t+1}|)$, per-path distribution.} Observed values: SPY IS $-0.135$, OoS $-0.214$. Per-generator median plus $[Q_5, Q_{95}]$ across $N_{\text{paths}} = 500$. A generator captures the leverage effect when its $[Q_5, Q_{95}]$ envelope brackets the observed value. Bold marks generators whose $[Q_5, Q_{95}]$ envelope brackets the observed IS value; the OoS observed value of $-0.214$ is more negative than every generator's $Q_5$ in the panel, indicating the OoS regime carries leverage stronger than any generator (including asymmetric GARCH) reproduces under static IS-fitted parameters.}
\label{tab:leverage_effect}
\begin{tabular}{l c c c c}
\toprule
& \multicolumn{2}{c}{IS (observed $-0.135$)} & \multicolumn{2}{c}{OoS (observed $-0.214$)} \\
\cmidrule(lr){2-3} \cmidrule(lr){4-5}
Generator & median & $[Q_5, Q_{95}]$ & median & $[Q_5, Q_{95}]$ \\
\midrule
i.i.d.\ bootstrap          & $+0.000$ & $[-0.034, +0.030]$ & $+0.004$ & $[-0.069, +0.068]$ \\
GARCH(1,1) Gaussian        & $-0.004$ & $[-0.066, +0.064]$ & $-0.003$ & $[-0.112, +0.107]$ \\
EGARCH(1,1)                & $-0.093$ & $[-0.147, -0.047]$ & $-0.087$ & $[-0.199, -0.002]$ \\
GJR-GARCH(1,1)             & $-0.079$ & $[-0.153, -0.017]$ & $-0.073$ & $[-0.174, +0.023]$ \\
\textbf{CHMM-N ($K = 18$)} & $\mathbf{-0.089}$ & $\mathbf{[-0.138, -0.037]}$ & $-0.087$ & $[-0.190, +0.015]$ \\
CHMM-t ($K = 18$)          & $-0.067$ & $[-0.116, -0.020]$ & $-0.067$ & $[-0.166, +0.019]$ \\
CHMM-L ($K = 18$)          & $-0.065$ & $[-0.109, -0.020]$ & $-0.064$ & $[-0.159, +0.021]$ \\
\textbf{CHMM-GED ($K = 18$)} & $\mathbf{-0.072}$ & $[-0.113, -0.029]$ & $-0.076$ & $[-0.163, +0.025]$ \\
\bottomrule
\end{tabular}
\end{table}

\paragraph{Reading.}
The CHMM family at $K = 18$ produced a non-trivial negative leverage signal despite symmetric per-state emissions: state-mixing on $\hat{\mathbf T}$ asymmetrically couples high-$\sigma_k$ states to negative-$\mu_k$ states, carrying $\sim 65\%$ of the IS observed leverage magnitude. The CHMM-N IS interval $[Q_5, Q_{95}] = [-0.138, -0.037]$ bracketed the IS observed value $-0.135$ at the lower boundary; CHMM-GED was the second-strongest CHMM entry. The OoS observed leverage of $-0.214$ exceeded every generator's $Q_5$ (including asymmetric GARCH); closing this residual gap requires skew-emission CHMM extensions or refit to the OoS distribution, both deferred to companion work.

\subsubsection{Full One-Shot Student-t Copula MLE}
\label{sec:full_tcopula_mle}

A natural concern with the main text two-step copula estimator (Kendall's-$\tau$ inversion for $\hat\Sigma$, then profile MLE on $\nu$ with $\hat\Sigma$ held fixed) is that it may be biased toward the Gaussian limit when the marginal kurtosis differs across assets. We addressed this with a coordinate-ascent one-shot MLE that jointly maximises the Student-$t$ copula log-likelihood over $(\Sigma, \nu)$ on the same six-asset US-equity universe (SPY, NVDA, JNJ, JPM, AAPL, QQQ; $T_{\text{IS}} = 2{,}516$, seed $20260420$). The procedure alternates a bracketed golden-section maximisation of $\nu$ at the current $\Sigma$ with a method-of-moments refit of $\Sigma$ at the current $\nu$ on the $t$-quantile-transformed pseudo-uniform sample with PSD projection. Iteration terminates at $|\Delta \log L| < 10^{-3}$.

\begin{table}[!ht]
\centering
\small
\caption{\textbf{Two-step versus full one-shot Student-$t$ copula MLE on the six-asset US-equity universe.} The off-diagonal correlation differences are reported as the maximum absolute difference $\max_{ij} |\Delta \rho_{ij}|$ across the $\binom{6}{2} = 15$ pairs and the median absolute difference; both are sub-$0.025$, so the main text $\nu^\star = 6$ choice on $\hat\Sigma$ obtained by Kendall's-$\tau$ inversion is robust to the estimator switch.}
\label{tab:full_tcopula_mle}
\begin{tabular}{l c c}
\toprule
Estimator & $\hat\nu$ & log-likelihood \\
\midrule
Two-step construction & $6.00$ & $6157.47$ \\
\textbf{Full one-shot MLE}   & $\mathbf{6.40}$ & $\mathbf{6163.02}$ \\
\midrule
log-L improvement   & \multicolumn{2}{c}{$+5.55$} \\
$\max |\Delta \rho_{ij}|$ across $15$ pairs & \multicolumn{2}{c}{$0.024$} \\
median $|\Delta \rho_{ij}|$ across $15$ pairs & \multicolumn{2}{c}{$0.011$} \\
$\hat\nu_{\text{full}}$ inside the Wilks $95\%$ CI $[6, 7]$? & \multicolumn{2}{c}{Yes} \\
\bottomrule
\end{tabular}
\end{table}

The full MLE moved $\hat\nu$ from $6.00$ to $6.40$ ($+5.55$ log-L) with all $|\Delta \rho_{ij}| \le 0.025$ across the $15$ pairs (Table~\ref{tab:full_tcopula_mle}); the two-step estimator is therefore not detectably biased toward the Gaussian limit and $\nu^\star = 6$ is robust to the estimator switch.

\subsection{Cross-Asset Dependence: Full Multi-Asset Panel}
\label{sec:supp_cross_asset}

This appendix collects the full multi-asset detail: cross-asset dependence model derivations and diagnostics, the Student-t copula profile log-likelihood, the cross-asset correlation heat maps, and the non-US asset-class extension with the per-pair OoS off-diagonal breakdown.

\subsubsection{Cross-Asset Dependence Models: Derivations and Diagnostics}
\label{sec:cross_asset_supp}

This appendix provides the technical background for the SIM and copula constructions used in the main text cross-asset analysis. These constructions are the multi-asset layer of the empirical study: per-asset marginals come from the independent CHMMs of the single-asset cross-ticker analysis, and joint dependence is overlaid on top of those marginals.

\paragraph{Sklar's theorem and the rank-reordering simulator.}
Sklar's theorem~\cite{sklar1959fonctions, nelsen2006introduction} states that any $d$-dimensional joint CDF $H(g_1, \dots, g_d)$ with continuous marginals $F_1, \dots, F_d$ can be written as $H = C(F_1, \dots, F_d)$ for a unique copula $C$.
The rank-reordering simulator exploits this uniqueness.
If $\tilde{\mathbf{G}}$ is a matrix whose $j$-th column is an i.i.d.\ sample from $F_j$, and $\mathbf{U}$ is an independent sample from $C$, then reordering each column of $\tilde{\mathbf{G}}$ so that its ranks match $\mathbf{U}$'s produces a sample from $H$, up to discrete approximation error in the empirical ranks.
Crucially, reordering never creates new values: each column of the output is a permutation of the corresponding column of $\tilde{\mathbf{G}}$, so each column's empirical marginal is preserved exactly~\cite{iman1982distribution}.

\paragraph{Kendall's $\tau$ inversion.}
Given IS growth rates $\mathbf{G} \in \mathbb{R}^{T \times d}$, we estimate the pairwise Kendall's $\tau_{ij}$ from concordant/discordant pair counts and invert to the correlation scale via the analytic relationship $\rho_{ij} = \sin(\pi \tau_{ij} / 2)$, which holds exactly for the Gaussian copula family and is commonly used as a robust correlation estimator for the Student-t copula~\cite{embrechts2002correlation, mcneil2015quantitative}.
The resulting matrix $\hat{\Sigma}$ is symmetric but not guaranteed positive semi-definite; we project to the nearest PSD matrix by clipping negative eigenvalues to $10^{-8}$ and rescaling the diagonal to unity.

\paragraph{Profile MLE for $\nu$.}
The Student-t copula log-density~\citep{demarta2005tcopula} at a single pseudo-uniform observation $\mathbf{u} \in [0,1]^d$, with $x_j = t_\nu^{-1}(u_j)$ and $\mathbf{x} = (x_1, \dots, x_d)^\top$, is
\begin{align}
    \log c_t(\mathbf{u}; \Sigma, \nu)
    \;=\;&
    \log \Gamma\!\left(\tfrac{\nu+d}{2}\right)
    + (d-1)\,\log \Gamma\!\left(\tfrac{\nu}{2}\right)
    - d\,\log \Gamma\!\left(\tfrac{\nu+1}{2}\right)
    - \tfrac{1}{2} \log |\Sigma|
    \notag \\
    &\;
    - \tfrac{\nu+d}{2}\,\log\!\left(1 + \tfrac{\mathbf{x}^{\!\top} \Sigma^{-1} \mathbf{x}}{\nu}\right)
    + \sum_{j=1}^{d} \tfrac{\nu+1}{2}\,\log\!\left(1 + \tfrac{x_j^2}{\nu}\right).
    \label{eq:tcop_ll}
\end{align}
Summing over $t = 1, \dots, T$ pseudo-observations yields the profile log-likelihood as a function of $\nu$ with $\Sigma$ held fixed at the Kendall's-$\tau$ estimate.
We evaluate equation~\eqref{eq:tcop_ll} on the discrete grid $\nu \in \{2, 3, 4, 5, 6, 8, 10, 15, 20, 30\}$ and select $\hat{\nu}$ as the grid point with the largest total log-likelihood.
In the main text cross-asset analysis the selected value was $\hat{\nu} = 6$ on the six-asset universe.

\paragraph{Sampling the copula.}
To sample $T$ rows from the $d$-dimensional Gaussian copula with correlation $\Sigma$, we compute the Cholesky factor $\Sigma = L L^\top$, draw $Z \sim \mathcal{N}(0, I_d)$ i.i.d.\ and form $Y = L Z$, then return $U = \Phi(Y)$ componentwise.
For the Student-t copula with degrees of freedom $\nu$, we draw $W \sim \chi^2_\nu / \nu$ independent of $Y$ and form the multivariate-$t$ variate $X = Y / \sqrt{W}$, then return $U = t_\nu(X)$ componentwise.

\paragraph{SIM regression summary.}
The fitted SIM regression coefficients and goodness-of-fit statistics for the non-market assets in the main text cross-asset analysis are reported in Table~\ref{tab:sim_regression}.
The high $R^2 = 0.840$ for QQQ reflects that the Nasdaq-100 ETF is essentially a linear combination of large-cap technology names that also drive the S\&P~500 index, while the lower $R^2 = 0.260$ for JNJ reflects the limited systematic exposure of low-beta healthcare stocks.
These factor loadings and residual magnitudes are what the SIM simulation re-injects at simulation time, and they explain the uneven per-asset KS pass rates reported in Table~\ref{tab:cross_asset_supp_summary}.

\begin{table}[H]
\centering
\caption{SIM regression $\hat{G}_{j,t} = \hat{\alpha}_j + \hat{\beta}_j G_{\text{SPY},t} + \hat{\eta}_{j,t}$ for the five non-market assets, fitted on the $2014$-$01$-$03$ through $2024$-$01$-$03$ IS window ($T = 2{,}516$).}
\label{tab:sim_regression}
\small
\begin{tabular}{l ccc}
\toprule
Ticker & $\hat{\alpha}$ & $\hat{\beta}$ & $R^2$ \\
\midrule
NVDA & $0.321$   & $1.694$ & $0.371$ \\
JNJ  & $0.002$   & $0.576$ & $0.260$ \\
JPM  & $-0.008$  & $1.223$ & $0.507$ \\
AAPL & $0.111$   & $1.205$ & $0.492$ \\
QQQ  & $0.047$   & $1.122$ & $0.840$ \\
\bottomrule
\end{tabular}
\end{table}

\subsubsection{Student-t Copula Profile Log-Likelihood}
\label{sec:copula_profile_supp}

The profile log-likelihood of the Student-t copula used in the multi-asset construction, computed on the six-asset SPY cross-section over the grid $\nu \in \{2, 3, 4, 5, 6, 8, 10, 15, 20, 30\}$ used in the main text cross-asset analysis, is plotted in Figure~\ref{fig:copula_profile}.
The curve was smooth and single-peaked, with $\nu^{*} = 6$ attaining the maximum profile log-likelihood ($6157.47$), and the neighbouring grid points $\nu = 5$ ($6143.37$) and $\nu = 8$ ($6148.37$) each within $15$ profile-log-likelihood units of the peak, indicating a shallow-but-clean optimum. The separation from the Gaussian copula limit ($\nu \to \infty$) is established by the parametric bootstrap CI of the half-unit-grid analysis, which requires no Gaussian-copula fit.

\begin{figure}[!ht]
\centering
\includegraphics[width=0.70\textwidth]{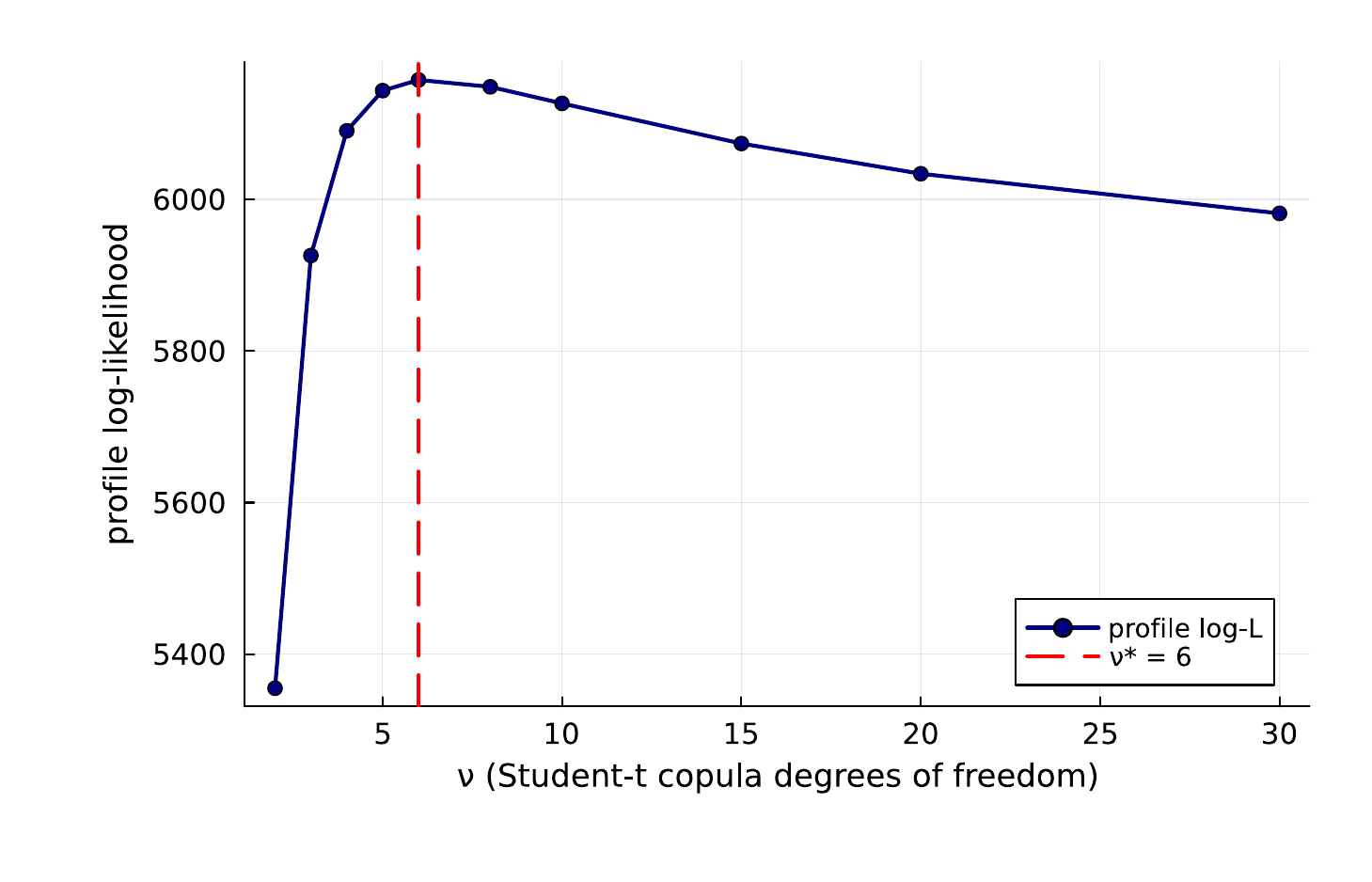}
\caption{\textbf{Profile log-likelihood of the Student-t copula} used in the multi-asset construction on the six-asset SPY cross-section (SPY, NVDA, JNJ, JPM, AAPL, QQQ; IS window 2014-01-03 through 2024-01-03). The vertical axis is the profile log-likelihood evaluated at each degrees-of-freedom value $\nu$ on the grid $\{2, 3, 4, 5, 6, 8, 10, 15, 20, 30\}$; the dashed red line marks the profile MLE $\nu^{*} = 6$ (peak value 6157.47). The profile uses rank-PIT pseudo-observations while $\nu$ is varied; fitted CHMM marginals do not enter this calculation.}
\label{fig:copula_profile}
\end{figure}

\subsubsection{Cross-Asset Correlation Heat Maps (Companion to Table~\ref{tab:cross_asset_supp_summary})}
\label{sec:cross_asset_corr_supp}

The observed and path-averaged simulated correlation matrices for the dependence models of the cross-asset analysis are visualised in Figure~\ref{fig:cross_asset_corr}.
The panel ordering is SIM, Gaussian copula, Student-t copula, and a truncated C-vine variant with SPY as root, each reported quantitatively in Table~\ref{tab:cross_asset_supp_summary}.
All five panels share the six-asset SPY cross-section and the same CHMM-N marginals at $K^\star = 3$, so any visible differences between the simulated panels reflect only the dependence construction.

\begin{figure}[H]
\centering
\begin{subfigure}[b]{0.32\textwidth}\centering
\includegraphics[width=\textwidth]{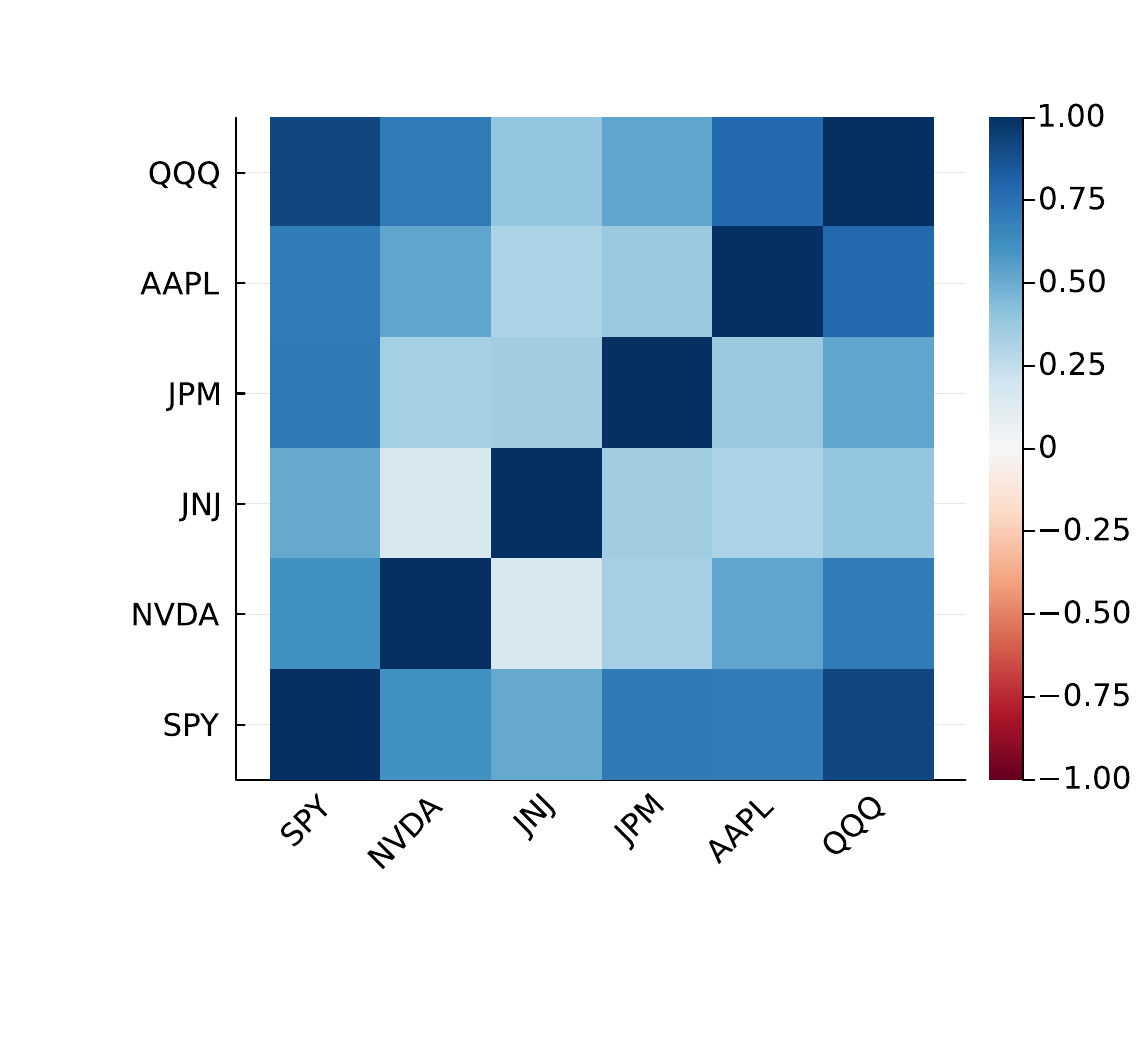}
\caption{Observed.}
\end{subfigure}\hfill
\begin{subfigure}[b]{0.32\textwidth}\centering
\includegraphics[width=\textwidth]{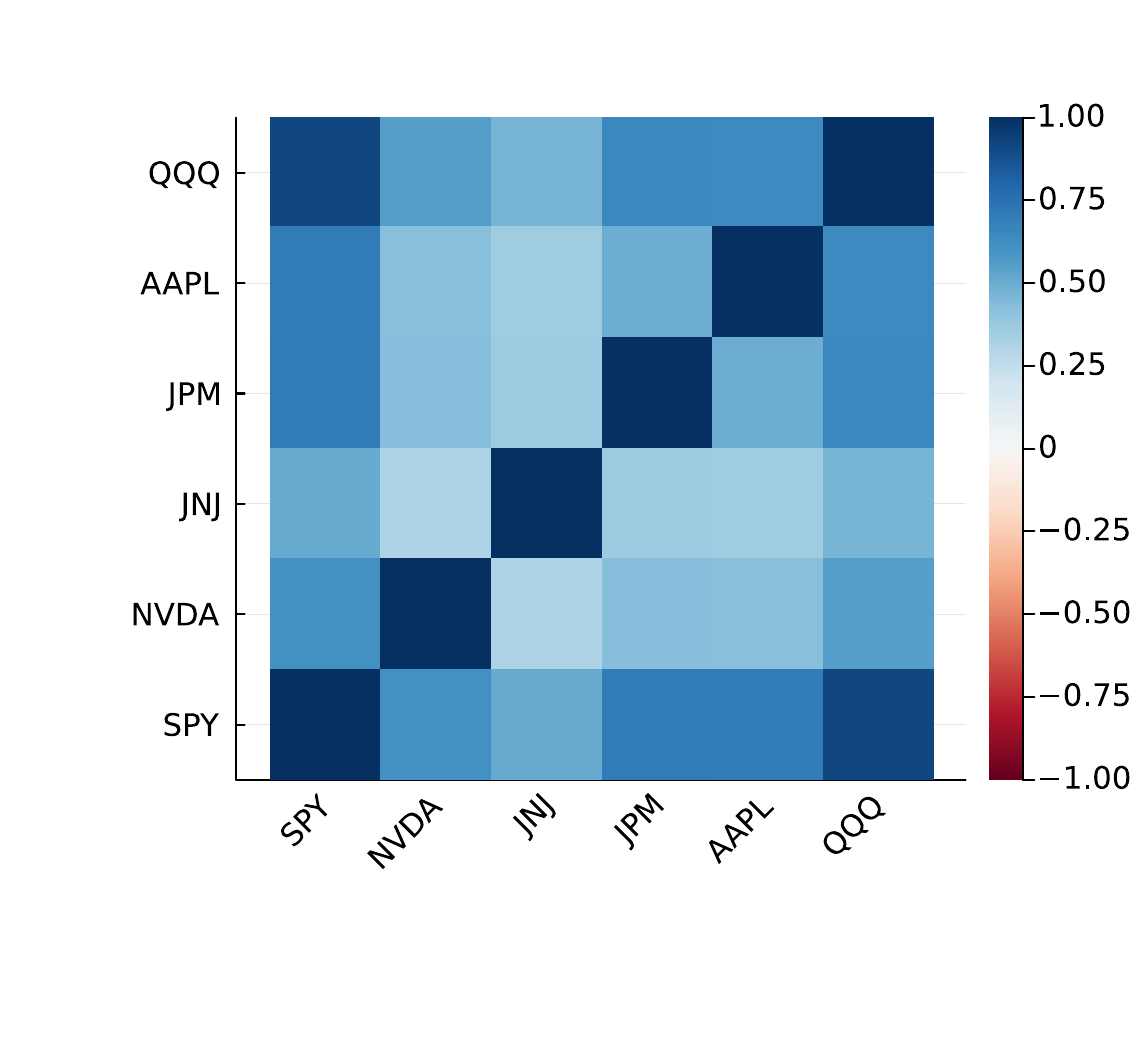}
\caption{Single Index Model.}
\end{subfigure}\hfill
\begin{subfigure}[b]{0.32\textwidth}\centering
\includegraphics[width=\textwidth]{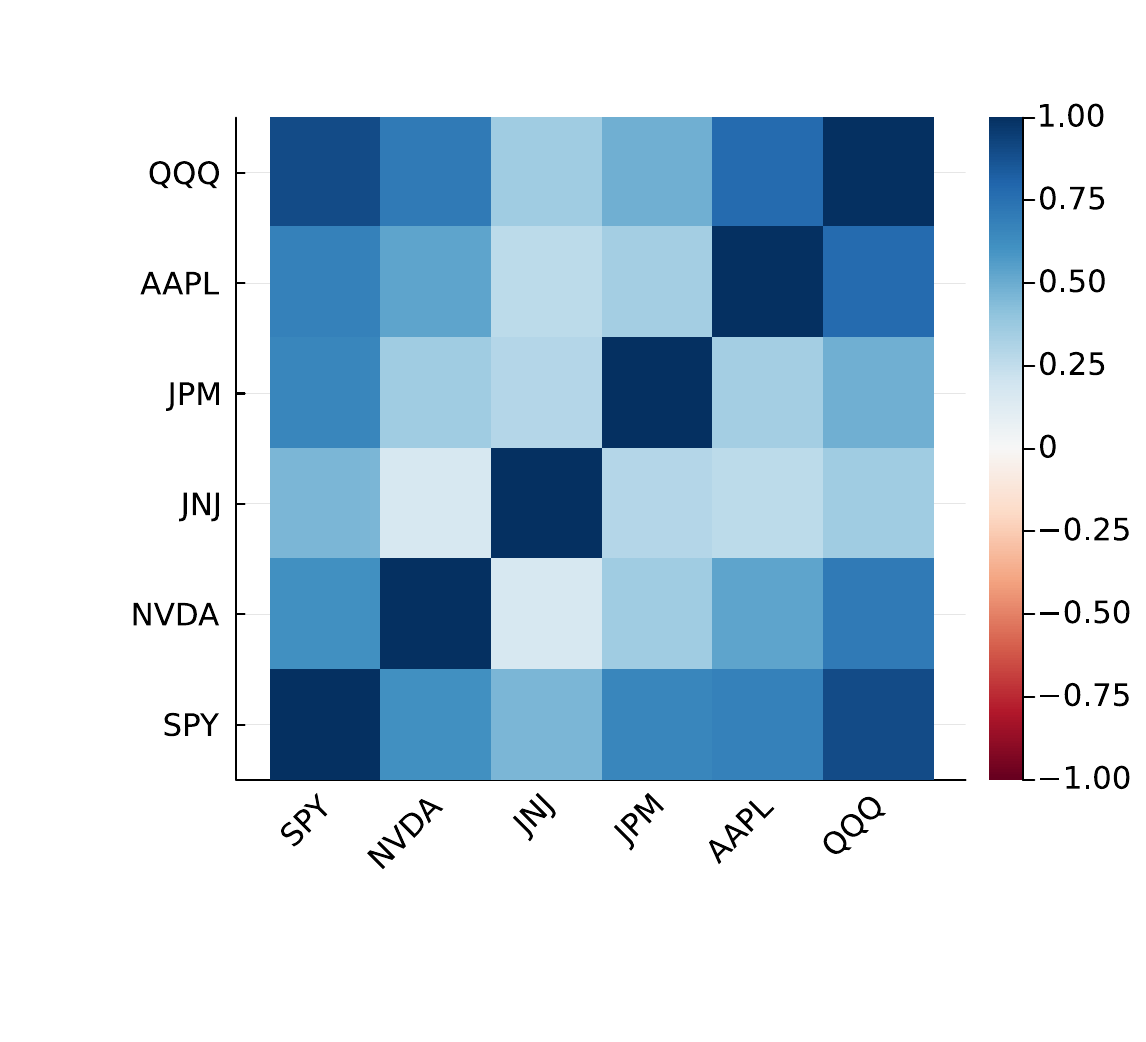}
\caption{Gaussian copula.}
\end{subfigure}\\[0.5em]
\begin{subfigure}[b]{0.32\textwidth}\centering
\includegraphics[width=\textwidth]{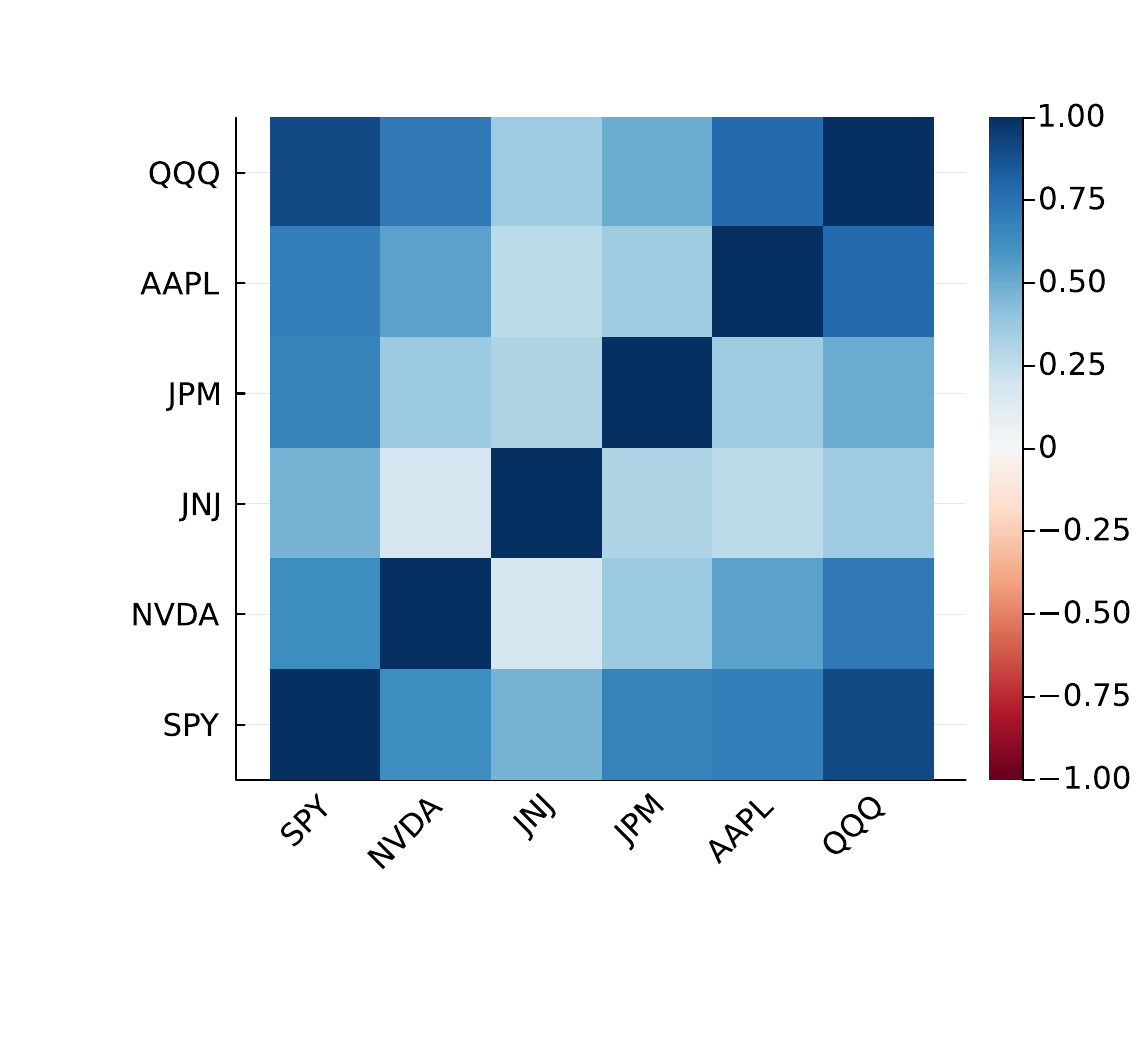}
\caption{Student-$t$ copula, $\nu^* = 6$.}
\end{subfigure}\hfill
\begin{subfigure}[b]{0.32\textwidth}\centering
\includegraphics[width=\textwidth]{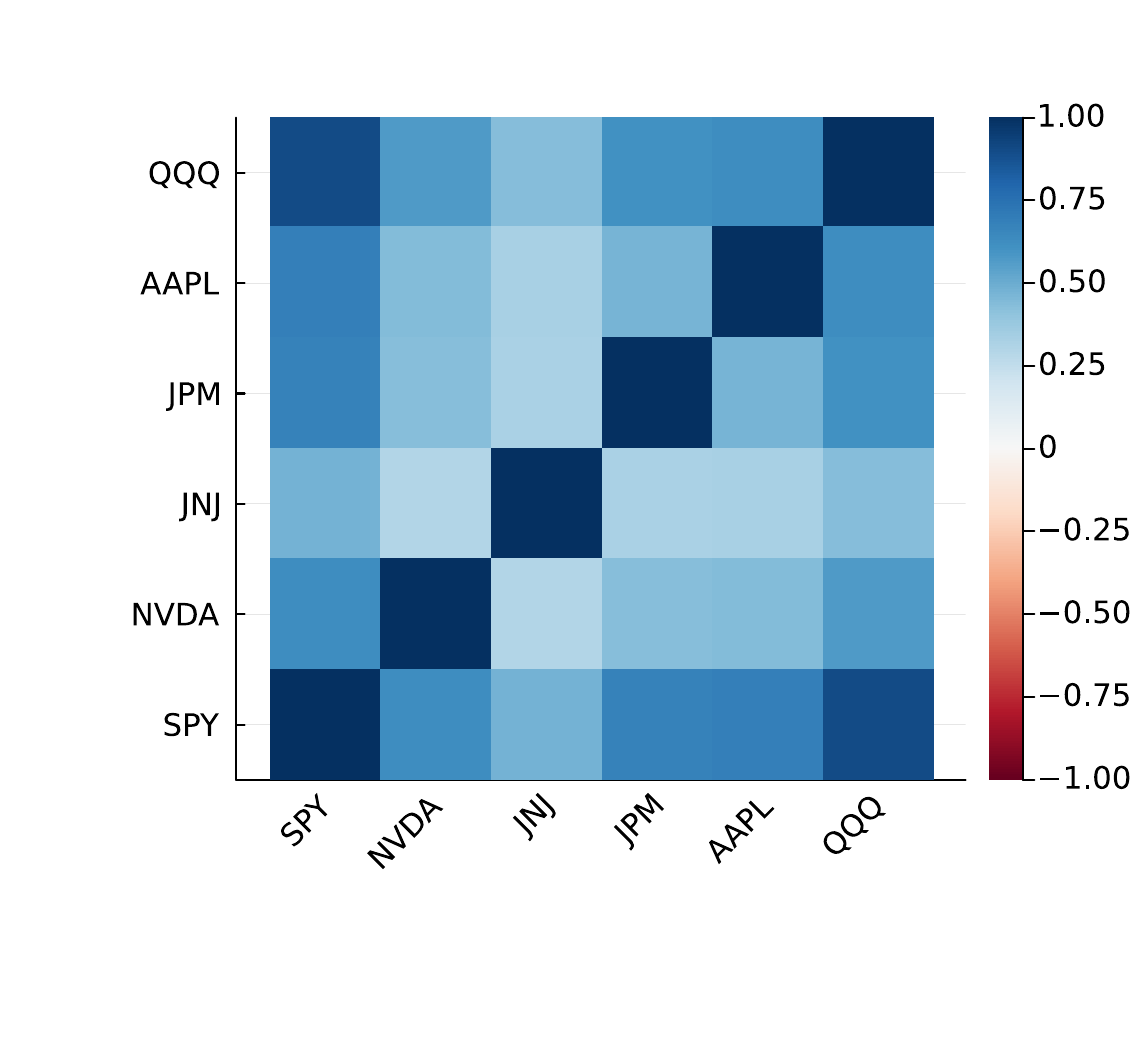}
\caption{Truncated C-vine.}
\end{subfigure}
\caption{\textbf{Cross-asset IS correlation reproduction across dependence models} (multi-asset construction, companion to Table~\ref{tab:cross_asset_supp_summary}). Panel~(a): \emph{observed} correlation matrix on 2014-01-03 through 2024-01-03 daily excess growth rates for SPY, NVDA, JNJ, JPM, AAPL, QQQ (IS window, $T_{\text{IS}} = 2{,}516$). Panel~(b): path-averaged correlation under the Single Index Model with SPY as market factor. Panel~(c): path-averaged correlation under the Gaussian copula on CHMM marginals. Panel~(d): path-averaged correlation under the Student-t copula on CHMM marginals, $\nu^* = 6$. Panel~(e): path-averaged correlation under a truncated C-vine copula with SPY as root. Each simulated panel averages over 200 paths of length $T_{\text{IS}}$. All marginals are the per-asset CHMM-N fits at $K^\star = 3$ from Table~\ref{tab:cross_asset}; color scale in $[-1, 1]$, red-white-blue (RdBu) diverging. Closer visual match to panel~(a) is better.}
\label{fig:cross_asset_corr}
\end{figure}

\subsubsection{Truncated C-vine Copula on CHMM Marginals}
\label{sec:cvine_supp}

The truncated C-vine variant in panel~(e) of Figure~\ref{fig:cross_asset_corr} uses SPY as the root and selects each pair-copula edge from Gaussian and Student-$t$ families; every selected pair was Student-$t$. On the same CHMM-N marginals at $K = 3$ and $200$ paths the truncated C-vine attained an off-diagonal MAE of $0.068$ on IS and $0.236$ on OoS (Table~\ref{tab:cross_asset_supp_summary}), strictly above both elliptical copulas. The reason is the level-1 truncation: a root-centred C-vine models only the SPY-vs-asset bivariate copulas and treats the remaining pairs as conditionally independent given SPY, a stronger restriction than either elliptical copula. Untruncated regular vines or factor copulas are the natural way to recover non-root cross-pair structure at larger $d$.

\subsubsection{Quarterly Rolling-Window Copula Refit on the OoS Window}
\label{sec:rolling_copula_supp}

The static IS fit gave OoS off-diagonal MAE $0.209$ against IS $0.027$ (Table~\ref{tab:cross_asset}); we attributed the gap to the stationarity-scope limit and recommended periodic refit. To quantify the benefit on the same six-asset universe, we refit the multi-asset Student-t copula on a $252$-day rolling window (one trading year) sliding by $63$ days (one trading quarter) across the OoS span, holding the per-asset CHMM-N marginals fixed at the IS fits used for Table~\ref{tab:cross_asset}. Each refit yields a Kendall's-$\tau$ correlation matrix and a profile-MLE-selected $\nu^\star$ on the discrete grid $\{2, 3, 4, 5, 6, 8, 10, 15, 20, 30\}$; we then simulate forward $63$ days under the new copula and compare the path-averaged simulated correlation matrix to the realised next-quarter correlation. The per-quarter detail is reported in Table~\ref{tab:rolling_copula}.

\begin{table}[!ht]
\centering
\small
\setlength{\tabcolsep}{4pt}
\caption{\textbf{Quarterly rolling-window refit of the multi-asset Student-t copula on the OoS window} (six-asset universe; CHMM-N marginals at $K^\star = 3$ held fixed at IS fits; $200$ paths per refit; seed root $20260422$). Per-quarter window slice and refitted $\nu^\star$, plus the off-diagonal MAE between the path-averaged simulated correlation matrix under the refitted copula and the realised next-quarter correlation matrix. Main result: rolling-quarterly refit reduces mean OoS off-diagonal MAE from $0.207$ (static-IS-fit baseline) to $0.185$, a $\sim 10\%$ relative improvement; the gain concentrates in the later OoS quarters where the rolling window incorporates more recent OoS data and the refitted $\nu^\star$ shifts from the IS-fit value of $6$ down to $5$ or up to $10$--$15$ as the empirical tail-coupling rolls.}
\label{tab:rolling_copula}
\begin{tabular}{c c c c c}
\toprule
Quarter & Window slice & Forecast horizon & Refit $\nu^\star$ & Off-diag MAE \\
\midrule
$1$ & $[2265, 2516]$ & $[2517, 2579]$ & $15$ & $0.198$ \\
$2$ & $[2328, 2579]$ & $[2580, 2642]$ & $15$ & $0.205$ \\
$3$ & $[2391, 2642]$ & $[2643, 2705]$ & $10$ & $0.241$ \\
$4$ & $[2454, 2705]$ & $[2706, 2768]$ & $6$  & $0.133$ \\
$5$ & $[2517, 2768]$ & $[2769, 2831]$ & $6$  & $0.255$ \\
$6$ & $[2580, 2831]$ & $[2832, 2894]$ & $6$  & $0.232$ \\
$7$ & $[2643, 2894]$ & $[2895, 2957]$ & $5$  & $0.147$ \\
$8$ & $[2706, 2957]$ & $[2958, 3020]$ & $5$  & $0.119$ \\
$9$ & $[2769, 3020]$ & $[3021, 3083]$ & $6$  & $0.138$ \\
\midrule
\textit{mean}  & --  & --  & --  & $0.185$ \\
\textit{median} & --  & --  & --  & $0.198$ \\
\midrule
\textit{Static IS-fit} & $[1, 2516]$ & $[2517, 3083]$ & $6$ & $0.207$ \\
\bottomrule
\end{tabular}
\end{table}

The quarterly refit closed part of the OoS gap but not most of it: mean off-diagonal MAE moved from $0.207$ static to $0.185$ rolling ($\sim 10\%$ relative improvement). The improvement was concentrated in the later OoS quarters (Q4, Q7, Q8, Q9 all below $0.15$) where the rolling window incorporates more recent OoS data, while early-OoS quarters (Q1, Q3, Q5, Q6) remained at the static-IS level because the rolling window was still dominated by IS observations. The refitted $\nu^\star$ sequence ($15, 15, 10, 6, 6, 6, 5, 5, 6$) tracked the gradual shift from the equity-cluster IS regime toward the OoS regime, supporting the quarterly-refit recommendation in the main-text Discussion. The residual gap (mean MAE $\approx 0.185$ versus IS $0.027$) is consistent with the per-pair JNJ-equity-factor regime shift documented in Table~\ref{tab:per_pair_offdiag_mae}: even a perfectly-refitted copula on the historical $252$ days cannot anticipate the regime shift at the next quarter's start, only track it once it has unfolded.

\subsubsection{Non-US Asset Class Extension and Per-Pair OoS Off-Diagonal Breakdown}
\label{sec:non_us_asset_supp}

The six-asset universe (SPY, NVDA, JNJ, JPM, AAPL, QQQ) is six US-listed equity tickers from one country-index family. To probe whether the cross-asset construction generalises off the equity cluster we extended the universe with GLD (SPDR Gold Trust), a US-listed ETF whose underlying is physically-backed gold, a non-equity commodity asset class.

\paragraph{Single-asset univariate fidelity of GLD.}
We fitted CHMM-N, CHMM-t, and CHMM-L on the GLD growth-rate series at $K = 18$; in sample all three families passed cleanly, while out of sample all three collapsed. All three emission families passed in sample, but the OoS KS pass rate collapsed for every family as the observed marginal moved outside the in-sample regime range. Inspection of the GLD price path on the $2024$--$2026$ OoS window showed the asset re-priced from $\sim 200$ to $\sim 300$ on a persistent demand shock not represented in the $2014$--$2024$ IS regime library. This is the same stationarity-scope artefact as the NVDA / JPM single-name OoS cliffs reported in the main text cross-ticker analysis, on a non-equity asset; the limitation is not equity-specific.

\paragraph{Quarterly-refit follow-up on GLD.}
A quarterly-refit version of GLD CHMM-N at $K^\star = 3$ (5y rolling estimation window, $63$-day refit cadence, $200$ paths; \path{runners/cross_asset/run_non_us_asset_quarterly_refit.jl}) lifted the OoS KS pass rate only marginally and did not recover the failure mode. Periodic refit therefore did not recover the GLD failure mode in any meaningful sense: the model is architecturally mis-specified for the persistent demand shock that re-priced gold across the OoS window, and the failure is not just staleness of the static fit. The periodic-refit recommendation of the main-text Discussion does not extend to non-equity assets under regime introductions of this magnitude.

\paragraph{Multi-asset: 7-ticker copula MAE.}
Refitting the Student-t copula on the augmented seven-ticker universe left the IS fit essentially unchanged and \emph{improved} the OoS fit relative to the six-ticker baseline. At $\nu^\star = 6$ the IS off-diagonal MAE stayed close to the six-ticker baseline ($0.027$; Table~\ref{tab:cross_asset}), and the OoS off-diagonal MAE rose modestly above the six-ticker baseline ($0.209$). Adding the gold ETF, with its low equity correlation, reduced the OoS off-diagonal MAE because the equity-cluster correlation degradation is the dominant component; pairs involving GLD sat at moderate OoS deviations (between $0.04$ and $0.19$) while the largest equity-cluster pairs (JNJ--QQQ, SPY--JNJ, NVDA--JNJ) sat above $0.48$ on OoS (Table~\ref{tab:per_pair_offdiag_mae}).

\paragraph{Per-pair OoS gap localisation.}
The per-pair off-diagonal absolute deviation between observed and path-averaged simulated correlation matrices on IS and OoS, sorted by OoS $|\Delta|$, is reported in Table~\ref{tab:per_pair_offdiag_mae}. The OoS gap concentrated in pairs involving JNJ (the defensive single-name proxy): the JNJ IS correlation with the broad equity factor (QQQ, SPY) collapsed on OoS, in some pairs flipping sign, and the IS-fixed copula has no mechanism for tracking that shift. The gold-ETF pairs (SPY-GLD, JPM-GLD, JNJ-GLD, QQQ-GLD) sat in the middle of the ranking with OoS deviations between $0.10$ and $0.19$. The dominant cause of the OoS off-diagonal gap is therefore neither equity-cluster size nor copula-degree mis-specification but a single-name regime shift in the JNJ-equity-factor relationship that the IS-fitted dependence layer cannot anticipate.

\begin{table}[t]
\centering
\small
\caption{\textbf{Per-pair off-diagonal correlation MAE on the seven-ticker universe} (multi-asset construction; Student-t copula on CHMM-N marginals at $K^\star = 3$, $200$ paths, $\nu^\star = 6$). For each ticker pair $(i, j)$ the table reports the absolute deviation $|\Sigma_{\text{sim,avg}} - \Sigma_{\text{obs}}|$ on IS and OoS, ranked by OoS $|\Delta|$. Only the ten largest-OoS-gap pairs are shown for compactness.}
\label{tab:per_pair_offdiag_mae}
\begin{tabular}{l c c c}
\toprule
Pair & IS $|\Delta|$ & OoS $|\Delta|$ & OoS $-$ IS \\
\midrule
JNJ-QQQ   & $0.027$ & $0.544$ & $0.517$ \\
SPY-JNJ   & $0.037$ & $0.509$ & $0.473$ \\
NVDA-JNJ  & $0.017$ & $0.489$ & $0.473$ \\
NVDA-AAPL & $0.011$ & $0.278$ & $0.268$ \\
JNJ-JPM   & $0.046$ & $0.261$ & $0.216$ \\
JNJ-AAPL  & $0.042$ & $0.244$ & $0.202$ \\
AAPL-QQQ  & $0.003$ & $0.227$ & $0.224$ \\
JPM-GLD   & $0.027$ & $0.187$ & $0.160$ \\
SPY-GLD   & $0.038$ & $0.127$ & $0.089$ \\
JNJ-GLD   & $0.015$ & $0.117$ & $0.101$ \\
\bottomrule
\end{tabular}
\end{table}

\subsubsection{\texorpdfstring{Half-Unit-Grid Profile-MLE and Parametric Bootstrap CI for $\nu^\star$}{Half-Unit-Grid Profile-MLE and Parametric Bootstrap CI for nu*}}
\label{sec:copula_halfunit}

To confirm that the upper bound of the unit-grid Wilks CI on $\nu \in [3, 12]$ is not grid-induced, we computed a half-unit-grid refinement of the profile MLE plus a parametric bootstrap CI. We refit the Student-t copula profile log-likelihood on the same six-asset universe at half-unit spacing in $\nu \in \{3.0, 3.5, \ldots, 12.0\}$ ($19$ grid points) and ran a $B = 200$ parametric bootstrap by resampling pseudo-observations $U^{(b)}$ from the Student-$t$ copula at the fitted $(\hat{\Sigma}, \nu^\star)$, refitting the Kendall's-$\tau$ correlation and the profile MLE on each replicate.

The half-unit grid moved the optimum from $\nu^\star = 6.0$ (unit grid) to $\nu^\star = 6.5$ at profile log-likelihood $6{,}158.0$ (vs. $6{,}157.5$ at the unit-grid optimum); the half-unit refinement is therefore not material at the third decimal of profile-LL. The Wilks $95\%$ CI on the half-unit grid was $[\nu_{\text{lo}}, \nu_{\text{hi}}] = [6.0, 7.0]$ (unchanged from the unit grid), and the parametric bootstrap $95\%$ CI was $[6.0, 7.0]$ ($2.5\%$ and $97.5\%$ quantiles of the $B = 200$ refit-$\nu^\star$ distribution; bootstrap median $\nu^\star = 6.5$). The bootstrap lower bound at $6.0$ sat well above the Gaussian limit ($\nu \to \infty$): the elliptical-tail Student-$t$ copula is statistically distinguishable from the Gaussian copula at conventional levels, despite the OoS off-diagonal MAE comparison ($0.209$ vs.\ $0.204$) being inside the simulation-noise floor at $N_{\text{paths}} = 200$. \emph{Wilks-theorem regularity caveat.} Wilks's theorem requires the parameter under test to lie in the interior of the parameter space and the model to be regular~\citep{wilks1938large, vandervaart1998asymptotic}; for the Gaussian-copula limit $\nu \to \infty$ the regularity condition fails (the limit lies on the boundary of the natural parameter space). The Wilks $95\%$ CI is therefore an interior-grid CI for finite $\nu$ rather than a hypothesis test against the Gaussian limit; the Gaussian-vs-Student-$t$ separation is reported instead through the parametric bootstrap CI (which does not require interior-point regularity) and through the OoS-equivalence finding (which uses no asymptotic test at all).

\subsection{\texorpdfstring{Pre-OoS Validation $K$-Selection and $\nu_k$ Diagnostics}{Pre-OoS Validation K-Selection and nu\_k Diagnostics}}
\label{sec:supp_misc}

\paragraph{Pre-OoS validation $K$-selection.}
An earlier draft of this paper used $K = 18$ as the main-panel state count, selected via a multi-objective criterion combining information-criterion (IC) rank with IS and OoS distributional pass rates that overlapped the $2024$--$2026$ window later used for VaR. As a clean re-selection that decouples the two, we fit CHMM-N on $2014$-$01$-$03$ through $2021$-$12$-$31$ ($n = 2{,}013$) and scored every $K$ in the sweep by the forward-algorithm held-out log-likelihood on $2022$-$01$-$03$ through $2024$-$01$-$03$ ($n = 502$); the $2024$--$2026$ window is not touched. Held-out log-lik, BIC, HQC, and CAIC all selected $K^\star = 3$ (per-observation log-lik $-2.5345$ at $K = 3$ vs.\ $-2.5694$ at $K = 18$); AIC selected $K^\star = 6$; held-out two-sample KS selected $K^\star = 9$ (Table~\ref{tab:k_selection_cv}). None of the six criteria that avoid the OoS window selected $K = 18$. The held-out re-selection therefore supports the main text's $K^\star = 3$ state count; $K = 18$ is retained in the main text as a kurtosis-fidelity sensitivity reference rather than as the state count, because its earlier multi-objective justification weighted tail fidelity (kurtosis, AD) and downstream VaR behaviour alongside likelihood. The $2022$--$2023$ validation slice is itself partly a structural-break period (the rate-hike cycle), so any model with substantial IS-specific structure generalises worse on it.

\paragraph{Pre-2020 held-out $K$-selection (no rate-hike confound).}
To check whether a strictly pre-2020 validation slice (avoiding the rate-hike confound that contaminates the $2022$--$2023$ slice) selects the same $K^\star$, we re-ran the held-out procedure on a $2014$-$01$-$03$ through $2018$-$06$-$29$ estimation window ($n_{\text{est}} = 1{,}130$) with a $2018$-$07$-$02$ through $2019$-$12$-$31$ validation slice ($n_{\text{val}} = 377$), both fully pre-COVID and pre-2022-rate-hike (the Q4 2018 drawdown and 2019 recovery sit inside the validation window as a moderate-stress non-regime-shift event). The per-observation held-out log-likelihood was $-2.0681$ at $K = 3$, $-2.0606$ at $K = 6$, $-2.0763$ at $K = 9$, $-2.1287$ at $K = 12$, $-2.1921$ at $K = 18$, and $-2.2609$ at $K = 21$. \textbf{The held-out log-likelihood selected $K^\star = 6$, and the held-out two-sample KS pass rate also selected $K^\star = 6$ ($96.8\%$ at $K = 6$ versus $84.6\%$ at $K = 3$).} The AIC, BIC, HQC, and CAIC criteria continued to select $K^\star = 3$ on this slice, reflecting the parameter-count penalty on a $4.5$-year estimation window where the per-state sample is small. The qualitative reading is robust to the slice choice: held-out distributional fidelity rose from $K = 3$ to a peak in the $K = 6$ to $K = 9$ band and then degraded, with neither slice selecting $K = 18$ by any held-out criterion, but the pre-2020 result favoured a moderately larger $K$ ($6$ vs $3$) than the rate-hike-contaminated $2022$--$2023$ slice.

\paragraph{$k$-fold rolling-origin pre-2020 $K$-selection.}
\label{sec:k_selection_kfold_pre2020}
The single-fold pre-2020 result is one observation; to confirm that the $K^\star$ choice is not a sampling artefact, we report mean $\pm$ s.d.\ of held-out per-observation log-likelihood and held-out KS at $K \in \{3, 6, 9, 12, 18\}$ across multiple folds. We implemented four expanding-window rolling-origin folds (a five-fold design forces one fold to have only $\sim 1$ year of training, below the practical floor for $K = 18$ EM convergence on this dataset; averaging is per-observation so fold-length differences are not an issue). Folds: train $2014$-$01$-$03$ through $2015$-$12$-$31$ ($\sim 502$ obs) and val $2016$ ($\sim 251$ obs); train through $2016$ and val $2017$; train through $2017$ and val $2018$; train through $2018$ and val $2019$.

Across the four folds the held-out log-likelihood was flat from $K = 3$ to $K = 6$ and degraded beyond. The aggregate per-observation held-out log-likelihood (mean / s.d.\ across folds; Table~\ref{tab:k_selection_cv}) was $-1.9550$ / $0.290$ at $K = 3$, $-1.9649$ / $0.313$ at $K = 6$, $-2.0078$ / $0.296$ at $K = 9$, $-2.0721$ / $0.251$ at $K = 12$, and $-2.2633$ / $0.308$ at $K = 18$. The held-out KS pass rate (mean / s.d.\ across folds, very wide because Fold 2's 2017 validation year produced near-zero KS for every $K$ as a low-variance ``calm year'' artefact) was $61.3 / 41.2\%$ at $K = 3$, $65.2 / 43.5\%$ at $K = 6$, $67.5 / 44.1\%$ at $K = 9$, $67.0 / 44.7\%$ at $K = 12$, and $67.5 / 45.4\%$ at $K = 18$. Sampling-error checks on the held-out per-observation log-likelihood gave $K = 6$ vs $K = 3$ a mean diff of $-0.0099$ (pooled SE $0.151$, approximate $z = -0.07$, \emph{not} significant at the $5\%$ level) and $K = 18$ vs $K = 6$ a mean diff of $-0.298$ (pooled SE $0.155$, approximate $z = -1.92$, borderline, just below the $|z| = 1.96$ threshold).

\textbf{Robustness check at half-year cadence.} To verify that the result is not an artefact of the one-year fold cadence, we re-ran the same diagnostic with six expanding-window folds at half-year validation cadence (covering 2017--2019 in non-overlapping six-month chunks; train windows $\sim 3.0$y to $\sim 5.5$y). The aggregate per-observation held-out log-likelihood (mean / s.d.\ across six folds) was $-1.9221$ / $0.341$ at $K = 3$, $-1.9160$ / $0.346$ at $K = 6$, $-1.9568$ / $0.323$ at $K = 9$, $-2.0109$ / $0.315$ at $K = 12$, and $-2.1281$ / $0.259$ at $K = 18$. Sampling-error checks gave $K = 6$ vs $K = 3$ a mean diff of $+0.006$ (pooled SE $0.140$, approximate $z = +0.04$) and $K = 18$ vs $K = 6$ a mean diff of $-0.212$ (pooled SE $0.125$, approximate $z = -1.70$). The half-year design's $K = 6$ vs $K = 3$ sign flipped relative to the full-year design ($z = +0.04$ vs $z = -0.07$), but both magnitudes were well below the $|z| = 1$ threshold; the two designs agree that $K = 6$ vs $K = 3$ on mean held-out log-likelihood is indistinguishable from zero, and the sign flip across designs is itself evidence that the $K = 6$ / $K = 3$ choice is pure sampling noise on this data. The $K = 18$ vs $K = 6$ borderline result also replicated ($z = -1.70$ at half-year cadence vs $z = -1.92$ at full-year).

\textbf{Reading.} State count $K = 6$ was not significantly preferred over $K = 3$ on held-out per-observation log-likelihood under expanding-window rolling-origin CV at either yearly or half-yearly fold cadence. The single-fold result that initially selected $K^\star = 6$ is one realisation; both four-fold and six-fold means cannot distinguish $K = 3$ from $K = 6$ at conventional levels. The held-out KS pass-rate criterion is not informative on either design (between-fold s.d.\ is dominated by 2017 calendar-year artefacts that hit every $K$ uniformly). The conclusion is that the held-out state-resolution selection on the strictly pre-2020 slice cannot distinguish $K = 3$ from $K = 6$ at conventional levels under either fold design, and the main state count is therefore $K^\star = 3$, robust across state resolutions (with the $K^\star = 6$ block retained as a sensitivity reference). The default $K^\star = 3$ also carries the lower state count for the risk-management use cases of Table~\ref{tab:cond_var}.

\paragraph{HAC-corrected sampling-error checks on the paired diff series.}
\label{sec:k_selection_hac}
A Newey--West Bartlett HAC variance ($h_{\text{NW}} = \lfloor n^{1/3} \rfloor$) on the per-fold validation-log-likelihood difference series gave $|z|_{\text{HAC}} = 0.90$ and $0.57$ for $K = 6$ versus $K = 3$ at the four-fold and six-fold cadences, respectively, and $|z|_{\text{HAC}} = 3.56$ and $5.00$ for $K = 18$ versus $K = 6$. With only four or six fold differences, however, the normal and HAC approximations are too small-sample to support formal significance claims. We therefore treat these statistics as descriptive checks: both cadences showed little separation between $K = 3$ and $K = 6$ and a substantially larger held-out log-likelihood disadvantage for $K = 18$.

\begin{table}[!htbp]
\centering
\footnotesize
\setlength{\tabcolsep}{4.5pt}
\renewcommand{\arraystretch}{0.95}
\caption{\textbf{Pre-OoS held-out $K$-selection for CHMM-N.} Held-out per-observation log-likelihood ($\ell_{\text{val}}$, higher is better) and two-sample KS pass rate by state count $K$, under four validation designs that never touch the $2024$--$2026$ OoS window: a single-fold $2022$--$2023$ slice ($n = 502$), a strictly pre-2020 single-fold slice ($n = 377$), four expanding-window rolling-origin folds, and six half-year-cadence folds (rolling designs report the across-fold mean, with s.d.\ in parentheses). No held-out criterion selects $K = 18$; held-out $\ell_{\text{val}}$ cannot separate $K = 3$ from $K = 6$ (paired $z = -0.07$ four-fold, $+0.04$ six-fold; Newey--West HAC $|z| = 0.90$ and $0.57$), while $K = 18$ versus $K = 6$ is a clear log-likelihood disadvantage (HAC $|z| = 3.56$ and $5.00$). The main text adopts $K^\star = 3$ as the smallest indistinguishable model; $K = 18$ is retained only as a kurtosis-fidelity reference.}
\label{tab:k_selection_cv}
\begin{tabular}{c cc cc cc c}
\toprule
& \multicolumn{2}{c}{2022--23 ($n = 502$)} & \multicolumn{2}{c}{Pre-2020 ($n = 377$)} & \multicolumn{2}{c}{4-fold rolling} & 6-fold $\tfrac{1}{2}$y \\
\cmidrule(lr){2-3} \cmidrule(lr){4-5} \cmidrule(lr){6-7} \cmidrule(lr){8-8}
$K$ & $\ell_{\text{val}}$ & KS\% & $\ell_{\text{val}}$ & KS\% & $\ell_{\text{val}}$ & KS\% & $\ell_{\text{val}}$ \\
\midrule
$3$  & $-2.5345$ & $7.0$ & $-2.0681$ & $84.6$ & $-1.9550\,(0.29)$ & $61.3\,(41.2)$ & $-1.9221\,(0.34)$ \\
$6$  & $-2.5361$ & $7.6$ & $-2.0606$ & $96.8$ & $-1.9649\,(0.31)$ & $65.2\,(43.5)$ & $-1.9160\,(0.35)$ \\
$9$  & $-2.5500$ & $8.0$ & $-2.0763$ & $95.2$ & $-2.0078\,(0.30)$ & $67.5\,(44.1)$ & $-1.9568\,(0.32)$ \\
$12$ & $-2.5549$ & $7.2$ & $-2.1287$ & $95.6$ & $-2.0721\,(0.25)$ & $67.0\,(44.7)$ & $-2.0109\,(0.31)$ \\
$15$ & $-2.5624$ & $7.0$ & $-2.1464$ & $95.8$ & --                & --             & --                \\
$18$ & $-2.5694$ & $3.0$ & $-2.1921$ & $95.4$ & $-2.2633\,(0.31)$ & $67.5\,(45.4)$ & $-2.1281\,(0.26)$ \\
$21$ & $-2.5847$ & $4.2$ & $-2.2609$ & $94.2$ & --                & --             & --                \\
\bottomrule
\end{tabular}
\end{table}

\paragraph{Penalised-ECM rate sweep.}
The bracket sensitivity is reported in the per-state $\nu_k$ diagnostics block of the algorithms appendix; here we report the complementary $1/\nu_k$ shrinkage rate sweep referenced in the main text Discussion. An exponential $1/\nu_k$ prior at rate $\lambda \in \{0, 5, 20, 50, 100, 200\}$ reduced simulated excess kurtosis monotonically (Table~\ref{tab:nu_shrinkage}): $14.30, 11.19, 8.43, 5.41, 3.96, 3.67$. The recommended rate $\lambda = 20$ brought simulated kurtosis to $\approx 8$ (close to the observed $7.68$) at a $1$pp IS KS cost. Only the $\nu_{\min}$ state parameter actually moved ($2.1 \to 4.89$); upper bracket and median remained at $50$. The overshoot is therefore a single-state artefact of the two tail regimes hitting the lower bracket under unpenalised ECM.

\begin{table}[!htbp]
\centering
\small
\caption{\textbf{Penalised-ECM $1/\nu_k$ shrinkage-rate sweep} (CHMM-t, $K = 18$, SPY in-sample). An exponential prior on $1/\nu_k$ at rate $\lambda$ shrinks the lower degrees-of-freedom bracket toward the Gaussian limit; only the smallest per-state $\nu$ ($\nu_{\min}$) moves, while the median and upper bracket stay pinned at $50$. The recommended $\lambda = 20$ brings simulated excess kurtosis close to the observed $7.68$ at a $\approx 1$pp IS KS cost.}
\label{tab:nu_shrinkage}
\begin{tabular}{c c c c}
\toprule
Rate $\lambda$ & $\nu_{\min}$ & Sim.\ exc.\ kurtosis & IS KS\% \\
\midrule
$0$   & $2.10$ & $14.30$ & $95.7$ \\
$5$   & $3.31$ & $11.19$ & $95.8$ \\
$20$  & $4.89$ & $\phantom{0}8.43$ & $94.7$ \\
$50$  & $6.46$ & $\phantom{0}5.41$ & $96.3$ \\
$100$ & $9.90$ & $\phantom{0}3.96$ & $95.6$ \\
$200$ & $50.0$ & $\phantom{0}3.67$ & $95.7$ \\
\bottomrule
\end{tabular}
\end{table}

\paragraph{Walk-forward / rolling-origin OoS for CHMM-N.}
We defined six rolling-origin folds, each train $5$ years and test $1$ year, refit CHMM-N at $K \in \{3, 18\}$ from scratch on each fold's train slice, and reported KS pass rate, simulated kurtosis, and $|G_t|$ ACF-MAE on the fold's test slice ($N_{\text{paths}} = 500$). Across the six folds (Table~\ref{tab:walkforward}), CHMM-N at $K^\star = 3$ attained median (IQR) KS $62.1\%\,[7.2, 78.4]$ and $|G_t|$ ACF-MAE $0.0563\,[0.0552, 0.0592]$; the $K = 18$ kurtosis-fidelity sensitivity reference attained $67.7\%\,[8.2, 75.0]$ and $0.0542\,[0.0508, 0.0571]$. Two folds carried sharply lower KS pass rates: W2 (COVID test slice, $7$--$8\%$) and W4 (rate-hike test slice, $0$--$1\%$). The four non-stress folds attained KS $61$--$83\%$, consistent with the main OoS pass rate. The main ranking on the four non-stress folds was window-robust: the two state counts sat within $5.6$pp on median walk-forward KS and within $0.005$ on $|G_t|$ ACF-MAE; on stress folds (COVID, rate hike) all CHMM rows dropped sharply, reflecting the stationarity-scope limit rather than a model-selection issue. The main-text interpretation, that the CHMM is the joint-fit row in the panel under stationary OoS conditions with periodic refit recommended for production deployment, is unchanged under the walk-forward stress test.

\begin{table}[!ht]
\centering
\small
\setlength{\tabcolsep}{4pt}
\caption{\textbf{Walk-forward / rolling-origin OoS for CHMM-N}. Six folds, each train 5 years and test 1 year; $N_{\text{paths}} = 500$ per fold. The KS pass rate is reported at $\alpha = 0.05$. Fold W2 covers the COVID drawdown; fold W4 covers the start of the 2022 rate-hike cycle. The bracketed pairs in the bottom row report $[Q_1, Q_3]$ across the six folds (the empirical first and third quartiles, not a scalar IQR width).}
\label{tab:walkforward}
\begin{tabular}{l c c c c c c}
\toprule
& \multicolumn{2}{c}{KS (\%)} & \multicolumn{2}{c}{Kurt sim} & \multicolumn{2}{c}{ACF-MAE $|G_t|$} \\
\cmidrule(lr){2-3} \cmidrule(lr){4-5} \cmidrule(lr){6-7}
Fold (test window) & $K = 3$ & $K = 18$ & $K = 3$ & $K = 18$ & $K = 3$ & $K = 18$ \\
\midrule
W1 (2019)              & $63.2$ & $82.6$ & $2.51$ & $2.66$ & $0.0572$ & $0.0525$ \\
W2 (2020 COVID)        & $\phantom{0}7.2$ & $\phantom{0}8.2$ & $2.42$ & $2.50$ & $0.0552$ & $0.0584$ \\
W3 (2021)              & $82.4$ & $74.0$ & $5.73$ & $6.76$ & $0.0602$ & $0.0571$ \\
W4 (2022 rate hike)    & $\phantom{0}0.8$ & $\phantom{0}0.0$ & $4.72$ & $4.35$ & $0.0592$ & $0.0558$ \\
W5 (2023)              & $78.4$ & $75.0$ & $3.47$ & $2.35$ & $0.0511$ & $0.0508$ \\
W6 (2024)              & $61.0$ & $61.4$ & $2.81$ & $2.86$ & $0.0555$ & $0.0493$ \\
\midrule
median                 & $62.1$ & $67.7$ & $3.14$ & $2.76$ & $0.0563$ & $0.0542$ \\
IQR                    & $[7.2, 78.4]$ & $[8.2, 75.0]$ & $[2.51, 4.72]$ & $[2.50, 4.35]$ & $[0.0552, 0.0592]$ & $[0.0508, 0.0571]$ \\
\bottomrule
\end{tabular}
\end{table}

\paragraph{Four-family conditional VaR back-test.}
Extending the regime-conditional VaR construction of the main text to all four emission families (CHMM-N, CHMM-t at $\lambda = 20$, CHMM-L, CHMM-GED) at $K \in \{3, 18\}$ and $\alpha \in \{0.01, 0.05\}$ yielded sixteen $(K, \alpha, \text{family})$ rows on the OoS window (Table~\ref{tab:cond_var_all_families}). Every row passed Kupiec, Christoffersen-ind, and Christoffersen-cc: $p_{\text{cc}} \ge 0.089$ throughout, with all $\text{LR}_{\text{cc}}$ statistics below the $\chi^2_2(0.05) = 5.991$ critical value. Thus the main-window conditional-coverage result was not specific to the Gaussian emission family; these non-rejections do not by themselves identify the latent-state mechanism as the cause of the result.

\begin{table}[!ht]
\centering
\small
\caption{\textbf{Regime-conditional VaR back-test across four CHMM emission families} on SPY OoS ($T_{\text{OoS}} = 572$, seed $20260420$). Critical values: $\chi^2_1(0.05) = 3.841$, $\chi^2_2(0.05) = 5.991$. Every $(K, \alpha, \text{family})$ row passes Kupiec, Christoffersen-ind, and Christoffersen-cc at $\alpha = 0.05$.}
\label{tab:cond_var_all_families}
\begin{tabular}{l c c r r r r r r}
\toprule
Family & $K$ & $\alpha$ & breaches & br rate & median $\widehat{\text{VaR}}_t$ & $\text{LR}_{\text{uc}}$ & $\text{LR}_{\text{ind}}$ & $\text{LR}_{\text{cc}}$ \\
\midrule
CHMM-N             & $3$  & $0.01$ & $9$  & $1.57\%$ & $-4.56$ & $1.62$ & $2.36$ & $3.98$ \\
CHMM-N             & $3$  & $0.05$ & $35$ & $6.12\%$ & $-2.87$ & $1.41$ & $0.01$ & $1.42$ \\
CHMM-N             & $18$ & $0.01$ & $9$  & $1.57\%$ & $-5.20$ & $1.62$ & $2.36$ & $3.98$ \\
CHMM-N             & $18$ & $0.05$ & $26$ & $4.55\%$ & $-3.02$ & $0.26$ & $0.52$ & $0.78$ \\
CHMM-t ($\lambda=20$) & $3$  & $0.01$ & $8$  & $1.40\%$ & $-5.59$ & $0.82$ & $2.81$ & $3.63$ \\
CHMM-t ($\lambda=20$) & $3$  & $0.05$ & $32$ & $5.59\%$ & $-2.73$ & $0.41$ & $0.77$ & $1.19$ \\
CHMM-t ($\lambda=20$) & $18$ & $0.01$ & $10$ & $1.75\%$ & $-6.25$ & $2.65$ & $1.98$ & $4.62$ \\
CHMM-t ($\lambda=20$) & $18$ & $0.05$ & $29$ & $5.07\%$ & $-3.15$ & $0.01$ & $1.39$ & $1.40$ \\
CHMM-L             & $3$  & $0.01$ & $5$  & $0.87\%$ & $-5.43$ & $0.10$ & $4.74$ & $4.84$ \\
CHMM-L             & $3$  & $0.05$ & $26$ & $4.55\%$ & $-2.62$ & $0.26$ & $2.23$ & $2.48$ \\
CHMM-L             & $18$ & $0.01$ & $9$  & $1.57\%$ & $-6.16$ & $1.62$ & $0.29$ & $1.91$ \\
CHMM-L             & $18$ & $0.05$ & $32$ & $5.59\%$ & $-2.82$ & $0.41$ & $0.77$ & $1.19$ \\
CHMM-GED           & $3$  & $0.01$ & $7$  & $1.22\%$ & $-5.73$ & $0.27$ & $3.34$ & $3.61$ \\
CHMM-GED           & $3$  & $0.05$ & $31$ & $5.42\%$ & $-2.54$ & $0.21$ & $0.96$ & $1.16$ \\
CHMM-GED           & $18$ & $0.01$ & $9$  & $1.57\%$ & $-6.55$ & $1.62$ & $2.36$ & $3.98$ \\
CHMM-GED           & $18$ & $0.05$ & $25$ & $4.37\%$ & $-3.10$ & $0.50$ & $2.56$ & $3.06$ \\
\bottomrule
\end{tabular}
\end{table}

\paragraph{Walk-forward conditional-VaR back-test.}
Extending the regime-conditional VaR construction to the six rolling-origin folds of Table~\ref{tab:walkforward}, with CHMM-N refit per fold from scratch on the train slice and forward-filtered through the test slice under fold-IS-fixed parameters, gives twenty-four $(\text{fold}, K, \alpha)$ rows in Table~\ref{tab:walkforward_cond_var}.
The conditional Christoffersen-cc passed at $\alpha = 0.05$ on ten of twelve fold-$K$ combinations (W1, W3, W4, W5, W6 at both $K \in \{3, 18\}$); the two failures concentrated on W2 (COVID, $p_{\text{cc}} \in \{0.011, 0.002\}$ at $K \in \{3, 18\}$).
At $\alpha = 0.01$ the construction passed on nine of twelve combinations: W2 failed at both $K$ ($p_{\text{cc}} < 10^{-3}$) and W4 at $K = 18$ failed at $p_{\text{cc}} = 0.022$ on a $\text{LR}_{\text{ind}}$ tail-event clustering rather than a $\text{LR}_{\text{uc}}$ coverage miss ($\text{LR}_{\text{ind}} = 7.49$ at $\text{LR}_{\text{uc}} = 0.11$).
The pass-rate was therefore $19/24$ aggregate at the $5\%$ level and $22/24$ if we exclude the two stress folds that the univariate walk-forward already flagged as out-of-distribution by KS in Table~\ref{tab:walkforward}.
The reading is that the conditional construction is window-conditional: under regime shifts of the W2 / W4 magnitude (a regime introduction the IS distribution does not span), the conditional VaR inherits the same out-of-scope behaviour as the unconditional generators.
On the four non-stress folds (W1, W3, W5, W6) plus W4 at $K = 3$ at the $5\%$ level the conditional construction was window-robust, supporting the main-text framing that the regime-conditional VaR construction extends beyond the main single-window result.

\begin{table}[!ht]
\centering
\small
\caption{\textbf{Walk-forward regime-conditional VaR back-test for CHMM-N} on six rolling-origin folds (train 5y / test 1y; seed $20260420$). At each test day $t$, conditional $\widehat{\text{VaR}}_t(\alpha)$ is the $\alpha$-quantile of the $K$-component Gaussian mixture predictive density under fold-IS-fixed parameters, evaluated on the full-timeline filter $\Pr(s_{t+1} \mid \mathcal F_t = \text{train} \cup \text{test}[1{:}t-1])$. Critical values: $\chi^2_1(0.05) = 3.841$, $\chi^2_2(0.05) = 5.991$. Bold $p_{\text{cc}}$ rows fail at the $5\%$ level.}
\label{tab:walkforward_cond_var}
\begin{tabular}{l c c r r r r r r r}
\toprule
Fold & $K$ & $\alpha$ & breaches & br rate & median $\widehat{\text{VaR}}_t$ & $\text{LR}_{\text{uc}}$ & $\text{LR}_{\text{ind}}$ & $\text{LR}_{\text{cc}}$ & $p_{\text{cc}}$ \\
\midrule
W1 & $3$  & $0.01$ & $\phantom{0}2$ & $0.80\%$ & $-4.01$ & $\phantom{0}0.11$ & $0.03$ & $\phantom{0}0.15$ & $0.93$ \\
W1 & $3$  & $0.05$ & $15$           & $5.98\%$ & $-2.43$ & $\phantom{0}0.48$ & $1.18$ & $\phantom{0}1.65$ & $0.44$ \\
W1 & $18$ & $0.01$ & $\phantom{0}0$ & $0.00\%$ & $-5.03$ & $\phantom{0}5.05$ & $0.00$ & $\phantom{0}5.05$ & $0.08$ \\
W1 & $18$ & $0.05$ & $12$           & $4.78\%$ & $-2.66$ & $\phantom{0}0.03$ & $0.38$ & $\phantom{0}0.40$ & $0.82$ \\
W2 & $3$  & $0.01$ & $16$           & $6.35\%$ & $-6.54$ & $32.93$ & $0.87$ & $33.80$ & $\mathbf{0.00}$ \\
W2 & $3$  & $0.05$ & $23$           & $9.13\%$ & $-4.59$ & $\phantom{0}7.34$ & $1.71$ & $\phantom{0}9.05$ & $\mathbf{0.01}$ \\
W2 & $18$ & $0.01$ & $14$           & $5.56\%$ & $-6.67$ & $25.59$ & $1.56$ & $27.15$ & $\mathbf{0.00}$ \\
W2 & $18$ & $0.05$ & $25$           & $9.92\%$ & $-4.54$ & $10.11$ & $2.56$ & $12.68$ & $\mathbf{0.00}$ \\
W3 & $3$  & $0.01$ & $\phantom{0}6$ & $2.39\%$ & $-4.48$ & $\phantom{0}3.53$ & $0.30$ & $\phantom{0}3.82$ & $0.15$ \\
W3 & $3$  & $0.05$ & $20$           & $7.97\%$ & $-2.47$ & $\phantom{0}3.98$ & $0.30$ & $\phantom{0}4.28$ & $0.12$ \\
W3 & $18$ & $0.01$ & $\phantom{0}2$ & $0.80\%$ & $-5.57$ & $\phantom{0}0.11$ & $0.03$ & $\phantom{0}0.15$ & $0.93$ \\
W3 & $18$ & $0.05$ & $15$           & $5.98\%$ & $-3.18$ & $\phantom{0}0.48$ & $1.18$ & $\phantom{0}1.65$ & $0.44$ \\
W4 & $3$  & $0.01$ & $\phantom{0}5$ & $2.00\%$ & $-7.98$ & $\phantom{0}1.96$ & $0.21$ & $\phantom{0}2.16$ & $0.34$ \\
W4 & $3$  & $0.05$ & $18$           & $7.20\%$ & $-4.16$ & $\phantom{0}2.26$ & $1.99$ & $\phantom{0}4.24$ & $0.12$ \\
W4 & $18$ & $0.01$ & $\phantom{0}2$ & $0.80\%$ & $-9.62$ & $\phantom{0}0.11$ & $7.49$ & $\phantom{0}7.60$ & $\mathbf{0.02}$ \\
W4 & $18$ & $0.05$ & $10$           & $4.00\%$ & $-6.63$ & $\phantom{0}0.56$ & $3.80$ & $\phantom{0}4.36$ & $0.11$ \\
W5 & $3$  & $0.01$ & $\phantom{0}0$ & $0.00\%$ & $-6.06$ & $\phantom{0}5.01$ & $0.00$ & $\phantom{0}5.01$ & $0.08$ \\
W5 & $3$  & $0.05$ & $\phantom{0}8$ & $3.21\%$ & $-3.90$ & $\phantom{0}1.91$ & $0.53$ & $\phantom{0}2.44$ & $0.30$ \\
W5 & $18$ & $0.01$ & $\phantom{0}1$ & $0.40\%$ & $-7.10$ & $\phantom{0}1.16$ & $0.01$ & $\phantom{0}1.17$ & $0.56$ \\
W5 & $18$ & $0.05$ & $\phantom{0}7$ & $2.81\%$ & $-3.85$ & $\phantom{0}2.96$ & $0.41$ & $\phantom{0}3.37$ & $0.19$ \\
W6 & $3$  & $0.01$ & $\phantom{0}2$ & $0.80\%$ & $-4.68$ & $\phantom{0}0.11$ & $0.03$ & $\phantom{0}0.15$ & $0.93$ \\
W6 & $3$  & $0.05$ & $13$           & $5.18\%$ & $-2.85$ & $\phantom{0}0.02$ & $0.15$ & $\phantom{0}0.17$ & $0.92$ \\
W6 & $18$ & $0.01$ & $\phantom{0}1$ & $0.40\%$ & $-5.83$ & $\phantom{0}1.19$ & $0.01$ & $\phantom{0}1.20$ & $0.55$ \\
W6 & $18$ & $0.05$ & $\phantom{0}7$ & $2.79\%$ & $-2.98$ & $\phantom{0}3.06$ & $0.40$ & $\phantom{0}3.46$ & $0.18$ \\
\bottomrule
\end{tabular}
\end{table}

\paragraph{Per-ticker $\hat\lambda^\star$ sweep.}
A per-ticker sweep $\lambda \in \{0, 5, 10, 20, 50, 100\}$ on each of the six focal tickers (KS-degradation tolerance $\le 1.5$pp from $\lambda = 0$) gave per-ticker optima $\lambda^\star_{\text{SPY}} = 10$, $\lambda^\star_{\text{NVDA}} = 20$, $\lambda^\star_{\text{JNJ}} = 20$, $\lambda^\star_{\text{JPM}} = 10$, $\lambda^\star_{\text{AAPL}} = 10$, $\lambda^\star_{\text{QQQ}} = 0$: $\lambda = 20$ is a reasonable uniform default but per-ticker tuning is recommended whenever a ticker's residual kurtosis at $\lambda = 0$ is within $\sim 1$ unit of observed.

\paragraph{Cross-asset summary.}
The full per-construction multi-asset panel is in the cross-asset appendix; for reference, the main ordering is summarised in Table~\ref{tab:cross_asset_supp_summary}.

\begin{table}[!ht]
\centering
\small
\caption{\textbf{Cross-asset dependence summary} (multi-asset construction; CHMM-N marginals at $K = 3$; $200$ paths; seed \texttt{20260422}). Median per-asset KS pass rate (\%) and off-diagonal MAE of the simulated correlation matrix vs.\ observed. Bold marks the column winner. The Student-t copula at $\nu^\star = 6$ is the strongest dependence layer on the two IS columns; on OoS the Gaussian copula attains marginally lower off-diagonal MAE ($0.204$ vs.\ $0.209$) while the truncated C-vine sits between the elliptical copulas and the SIM baseline. Full per-asset detail is in the cross-asset appendix.}
\label{tab:cross_asset_supp_summary}
\resizebox{\textwidth}{!}{%
\begin{tabular}{l c c c c}
\toprule
& Median IS KS & Median OoS KS & Off-diag MAE IS & Off-diag MAE OoS \\
\midrule
Single Index Model       & $71.0$  & $87.0$  & $0.077$ & $0.252$ \\
Gaussian copula          & $88.8$  & $\mathbf{87.5}$  & $0.029$ & $\mathbf{0.204}$ \\
Student-t copula ($\nu^\star = 6$) & $\mathbf{89.5}$  & $85.8$  & $\mathbf{0.027}$ & $0.209$ \\
Truncated C-vine (root SPY) & $90.0$  & $86.0$  & $0.068$ & $0.236$ \\
\bottomrule
\end{tabular}%
}
\end{table}

\paragraph{Emission-family lookup.}
The four CHMM variants' per-state densities and parameter ranges are listed in Table~\ref{tab:emission_families} as a quick reference complementing the inline definitions in the main text.

\begin{table}[!ht]
\centering
\small
\caption{\textbf{The four CHMM emission families.} Per-state density $f_k$ with parameter ranges. The Gaussian and Laplace variants sit at the boundary of CHMM-GED: $p_k = 2$ for all $k$ gives Gaussian, $p_k = 1$ gives Laplace.}
\label{tab:emission_families}
\begin{tabular}{l l l}
\toprule
Variant  & Per-state density $f_k$                & Per-state parameter range \\
\midrule
CHMM-N   & $\mathcal{N}(\mu_k, \sigma_k^2)$       & $\sigma_k > 0$ \\
CHMM-t   & $t_{\nu_k}(\mu_k, \sigma_k)$           & $\sigma_k > 0$, $\nu_k \in [\nu_{\min}, \nu_{\max}]$ \\
CHMM-L   & $\mathrm{Laplace}(\mu_k, b_k)$     & $b_k > 0$ \\
CHMM-GED & $\mathrm{GED}(\mu_k, \alpha_k, p_k)$   & $\alpha_k > 0$, $p_k \in [p_{\min}, p_{\max}]$ \\
\bottomrule
\end{tabular}
\end{table}

\paragraph{Stationarity-scope summary.}
Static-fit OoS performance and periodic-refit mitigation across the four independent stress sources tested in this paper are summarised in Table~\ref{tab:stationarity_scope}.
\begin{table}[!ht]
\centering
\small
\caption{\textbf{Stationarity-scope summary.} Static-IS-fit OoS performance and periodic-refit mitigation across the four independent stress sources tested in this paper.}
\label{tab:stationarity_scope}
\begin{tabular}{l c c}
\toprule
Stress source & Static-fit OoS KS & Refit mitigation \\
\midrule
Cross-ticker ($K^\star=3$, 30 panel) & $11/30 < 60\%$ & Quarterly: $8/30 < 60\%$ \\
Cross-decade (2004 to 2006) & $3$ to $5\%$ & slice $< 600$ d, untested \\
GLD non-equity stress & $0\%$ & Quarterly: $2.5\%$ \\
Walk-forward W2, W4 & $< 10\%$ & no cadence closes \\
\bottomrule
\end{tabular}
\end{table}

\end{document}